\newif\ifTR
   \def\@citecolor{blue}%
   \def\@urlcolor{blue}%
   \def\@linkcolor{blue}%
\def\orcidID#1{\smash{\href{http://orcid.org/#1}{\protect\raisebox{-1.25pt}{\protect\includegraphics{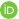}}}}}
\newcounter{claimcounter}
\renewenvironment{claim}{\refstepcounter{claimcounter}{\medskip\noindent \underline{Claim \theclaimcounter:}}\itshape}{\smallskip}
\crefname{claimcounter}{Claim}{Claims}
\Crefname{algocf}{Algorithm}{Algorithms}
\crefname{figure}{Fig.}{Figs.}
\newcommand{\vh}[1]{\textcolor{orange}{\ifmmode \text{[#1]}\else [VH: #1] \fi}}
\newcommand{\ol}[1]{\textcolor{blue}{\ifmmode \text{[OL: #1]}\else [OL: #1] \fi}}
\newcommand{\bs}[1]{\textcolor{ForestGreen}{\ifmmode \text{[BS: #1]}\else [BS: #1] \fi}}
\renewcommand{\vh}[1]{}
\renewcommand{\ol}[1]{}
\renewcommand{\bs}[1]{}
\newcommand{\blinded}[1]{\ifx\blindreview\undefined #1 \else \textcolor{black!65}{[blinded for review]}\fi}
\newcommand{\sofi}[0]{(S,O,f,i)}
\newcommand{\sof}[0]{(S,O,f)}
\newcommand{\rank}[0]{\mathit{rank}}
\newcommand{\reach}[0]{\mathit{reach}}
\newcommand{\rankof}[1]{\mathit{rank}(#1)}
\newcommand{\level}[0]{\mathit{level}}
\newcommand{\levelw}[0]{\mathit{level}_\word}
\newcommand{\levelwof}[1]{\mathit{level}_\word(#1)}
\newcommand{\rankwof}[1]{\rank_{\word}(#1)}
\newcommand{\infofs}[1]{\inf_Q(#1)}
\newcommand{\infoft}[1]{\inf_\delta(#1)}
\newcommand{\variant}[2]{#1 \lhd #2}
\newcommand{\restrof}[2]{#1 \raisebox{-.5ex}{$|$}_{#2}}
\newcommand{\aut}[0]{\mathcal{A}}
\newcommand{\but}[0]{\mathcal{B}}
\newcommand{\M}[0]{\mathcal{M}}
\newcommand{\E}[0]{\mathcal{E}}
\newcommand{\cT}{\mathcal{T}}
\newcommand{\cR}{\mathcal{R}}
\newcommand{\cP}{\mathcal{P}}
\newcommand{\transover}[1]{\overset{#1}{\rightarrow}}
\newcommand{\ltr}[1]{\transover{#1}}
\newcommand{\word}[0]{\alpha}
\newcommand{\wordof}[1]{\seqof{\word}{#1}}
\newcommand{\dagg}[0]{\mathcal{G}}
\newcommand{\dagof}[1]{\dagg_{#1}}
\newcommand{\dagw}[0]{\dagof{\word}}
\newcommand{\seqof}[2]{#1_{#2}}
\newcommand{\dagwiof}[1]{\dagw^{#1}}
\newcommand{\algschewe}[0]{\textsc{Schewe}\xspace}
\newcommand{\algdeelev}[0]{\textsc{DeElev}\xspace}
\newcommand{\val}[0]{\mathbb{V}}
\newcommand{\lang}[0]{\mathcal{L}}
\newcommand{\langof}[1]{\lang(#1)}
\newcommand{\langautof}[2]{\lang_{#1}(#2)}
\newcommand{\numsetof}[1]{[#1]}
\newcommand{\accstates}{Q_F}
\newcommand{\acctrans}{\delta_F}
\newcommand{\trans}{\delta}
\newcommand{\R}{\mathcal{R}}
\newcommand{\claimqed}[0]{\hfill $\blacksquare$}
\newenvironment{claimproof}[1]{\par\noindent\underline{Proof:}\space#1}{\claimqed}
\newcommand{\pspace}[0]{\textsc{PSpace}\xspace}
\newcommand{\bigO}[0]{\mathcal{O}}
\newcommand{\bigOof}[1]{\bigO(#1)}
\newcommand{\ranker}[0]{\textsc{Ranker}\xspace}
\newcommand{\rankermaxrank}[0]{\ranker_{\textsc{MaxR}}\xspace}
\newcommand{\rankerold}[0]{\textsc{Ranker}_{\textsc{Old}}\xspace}
\newcommand{\rabit}[0]{\textsc{Rabit}\xspace}
\newcommand{\spot}[0]{\textsc{Spot}\xspace}
\newcommand{\seminator}[0]{\textsc{Seminator}~2\xspace}
\newcommand{\goal}[0]{\textsc{Goal}\xspace}
\newcommand{\roll}[0]{\textsc{Roll}\xspace}
\newcommand{\fribourg}[0]{\textsc{Fribourg}\xspace}
\newcommand{\piterman}[0]{\textsc{Piterman}\xspace}
\newcommand{\safra}[0]{\textsc{Safra}\xspace}
\newcommand{\autfilt}[0]{\texttt{autfilt}\xspace}
\newcommand{\ltldstar}[0]{\textsc{LTL2dstar}\xspace}
\newcommand{\goalmark}[0]{\scriptsize \faSoccerBallO}
\definecolor{rowgray}{gray}{0.85}
\newcommand{\condensof}[1]{\mathit{cond}(#1)}
\newcommand{\typemax}[0]{\mathrm{typemax}}
\newcommand{\expr}[0]{e}
\newcommand{\exprof}[1]{\expr_{#1}}
\newcommand{\trub}[0]{TRUB\xspace}
\newcommand{\skeletof}[1]{\mathcal{K}_{#1}}
\newcommand{\upd}[0]{\mathit{up}}
\newcommand{\tight}[0]{\textsc{Tight}\xspace}
\newcommand{\waiting}[0]{\textsc{Waiting}\xspace}
\newcommand{\infunc}[0]{\mathit{inner}}
\newcommand{\lG}[0]{\mathsf{G}}
\newcommand{\lF}[0]{\mathsf{F}}
\newcommand{\decf}[0]{\mathit{dec}}
\newcommand{\decof}[1]{\decf(#1)}
\newcommand{\autex}[0]{\aut_{\mathit{ex}}}
\newcommand{\scclabel}[2]{\tikz[baseline,anchor=base]{\node[rectangle,draw,rounded corners] {#1{:}#2};}}
\newcommand{\detcircle}{node[left,shape=circle,inner sep=1pt,scale=0.7,fill=orange!80!black,text=white,draw=orange!80!black,xshift=-1mm,yshift=1mm]  {$2$}}
\newcommand{\textdetcircle}{%
\tikz[baseline,anchor=base,scale=0.5]{ \draw \detcircle;}}
\newcommand{\iwcircle}{node[right,shape=circle,inner sep=1pt,scale=0.7,fill=blue!80!black,text=white,draw=blue!80!black,xshift=1mm,yshift=1mm]  {$2$}}
\newcommand{\textiwcircle}{\tikz[baseline,anchor=base,scale=0.5]{ \draw \iwcircle;}}
\newcommand{\evenceil}[1]{\lfloor\!\!\lfloor #1 \rfloor\!\!\rfloor}
\newcommand{\maxr}[0]{\textit{max-succ-rank}}
\newcommand{\maxrsa}[0]{\maxr_S^a}
\newcommand{\maxrsaof}[2]{\maxr^a_{#1}(#2)}
\newcommand{\beginexample}[0]{%
\medskip\noindent\stepcounter{theorem}\textit{Example~\thetheorem}. }
\newcommand{\monus}{\mathbin{\text{\@dotminus}}}
\newcommand{\@dotminus}{%
  \ooalign{\hidewidth\raise1ex\hbox{.}\hidewidth\cr$\m@th-$\cr}%
}
\newcommand{\transconsist}[0]{%
  \begin{tikzpicture}[line width=0.6pt,transform shape,scale=0.7]
    \draw[->] (0,0) -- (1.5em,0);
    \draw (0.6em,0ex) circle (0.7ex);
  \end{tikzpicture}%
}
\newcommand{\stirl}[2]{{#1\brace #2}}
\newcommand{\deelevof}[1]{#1^{\bullet}}
\newcommand{\dsrandom}[0]{\underline{\texttt{random}}\xspace}
\newcommand{\dsltl}[0]{\underline{\texttt{LTL}}\xspace}
\newcommand{\dsall}[0]{\underline{\texttt{all}}\xspace}
\title{Sky Is Not the Limit}
\titlerunning{Sky Is Not the Limit: Tighter Rank Bounds in B\"{u}chi Automata Complementation}
\author{
  Vojt\v{e}ch Havlena\orcidID{0000-0003-4375-7954} \and
  Ond\v{r}ej Leng\'{a}l\orcidID{0000-0002-3038-5875}\Envelope \and
	Barbora \v{S}mahl\'{i}kov\'{a}\orcidID{0000-0002-1184-4669}
  }
\authorrunning{
  Vojt\v{e}ch Havlena \and
  Ond\v{r}ej Leng\'{a}l \and
	Barbora \v{S}mahl\'{i}kov\'{a}
}
\institute{
  Faculty of Information Technology,
  Brno University of Technology,
  Brno,
  Czech Republic
}
\begin{document}

\maketitle

\begin{abstract}
We propose several heuristics for mitigating one of the main causes of
combinatorial explosion in rank-based complementation of
B\"{u}chi automata (BAs):
unnecessarily high bounds on the ranks of states.
First, we identify \emph{elevator automata}, which is a~large class of BAs
(generalizing semi-deterministic BAs), occurring often in practice, where ranks
of states are bounded according to the structure of strongly connected
components.
The bounds for elevator automata also carry over to general BAs that contain
elevator automata as a~sub-structure.
Second, we introduce two techniques for refining bounds on the ranks of BA
  states using data-flow analysis of the automaton.
We implement out techniques as an extension of the tool \ranker for BA
complementation and show that they indeed greatly prune the generated
state space, obtaining significantly better results and outperforming other
state-of-the-art tools on a~large set of benchmarks.
\end{abstract}

\vspace{-8.0mm}
\section{Introduction}
\vspace{-0.0mm}

\emph{B\"{u}chi automata} (BA) complementation has been a~fundamental problem
underlying many applications since it was introduced in
1962 by B\"{u}chi~\cite{buchi1962decision,HLS-S1S} as an essential part of a~decision procedure for
a~fragment of the second-order arithmetic.
BA~complementation has been used as a~crucial part of, e.g.,
termination analysis of
programs~\cite{fogarty2009buchi,heizmann2014termination,ChenHLLTTZ18} or
decision procedures for various logics, such as S1S~\cite{buchi1962decision},
the first-order logic of Sturmian words~\cite{pecan}, or the temporal logics
ETL and QPTL~\cite{sistla1987complementation}.
Moreover, BA complementation also underlies BA inclusion and equivalence
testing, which are essential instruments in the BA toolbox.
Optimal algorithms, whose output asymptotically matches the lower bound of
$(0.76n)^n$~\cite{yan} (potentially modulo a~polynomial factor), have been
developed~\cite{Schewe09,fribourg}.
For a~successful real-world use, asymptotic optimality is, however, not enough and these
algorithms need to be equipped with a~range of optimizations to make them
behave better than the worst case on BAs occurring in practice.



\enlargethispage{2mm}

In this paper, we focus on the so-called \emph{rank-based} approach to
complementation, introduced by Kupferman and Vardi~\cite{KupfermanV01},
further improved with the help of Friedgut~\cite{FriedgutKV06}, and finally
made optimal by Schewe~\cite{Schewe09}.
The construction stores in a~macrostate partial information about all
runs of a~BA~$\aut$ over some word~$\word$.
In addition to tracking states that~$\aut$ can be in (which is sufficient, e.g., in
the determinization of NFAs), a~macrostate also stores a~guess of the rank of each
of the tracked states in the \emph{run DAG} that captures all these runs.
The guessed ranks impose restrictions on how the future of a~state might
look like (i.e., when~$\aut$ may accept).
The number of macrostates in the complement depends combinatorially on the
maximum rank that occurs in the macrostates.
The constructions in~\cite{KupfermanV01,FriedgutKV06,Schewe09} provides only coarse
\mbox{bounds on the maximum ranks}.

A~way of decreasing the maximum rank has been suggested
in~\cite{GurumurthyKSV03} using a~\pspace (and, therefore, not really
practically applicable) algorithm (the problem of finding the optimal rank is \pspace-complete).
In our previous paper~\cite{HavlenaL2021}, we have identified several basic
optimizations of the construction that can be used to refine the
\emph{tight-rank upper bound} (TRUB) on the maximum ranks of states.
In this paper, we push the applicability of rank-based techniques much further
by introducing two novel lightweight techniques for refining the TRUB,
thus significantly reducing the generated state space.

Firstly, we introduce a~new class of the so-called \emph{elevator automata},
which occur quite often in practice (e.g., as outputs of natural algorithms for
translating LTL to BAs).
Intuitively, an elevator automaton is a~BA whose strongly connected components
(SCCs) are all either inherently weak\footnote{%
An SCC is inherently weak if it either contains no accepting states or, on the
other hand, all cycles of the SCC contain an accepting state.
}
or deterministic.
Clearly,
the class substantially generalizes the popular inherently
weak~\cite{BoigelotJW01} and semi-deterministic
BAs~\cite{CourcoubetisY88,semiDetComplementation,seminator}).
The structure of elevator automata allows us to provide tighter estimates of
the TRUBs, not only for elevator automata \emph{per se}, but also for BAs where
elevator automata occur as a~sub-structure (which is even more common).
Secondly, we propose a~lightweight technique, inspired by data flow analysis,
allowing to propagate rank restriction along the skeleton of the complemented
automaton, obtaining even tighter TRUBs.
We also extended the optimal rank-based algorithm to transition-based
BAs (TBAs).

We implemented our optimizations within the \ranker tool~\cite{ranker} and
evaluated our
approach on thousands of hard automata from the literature (15\,\% of
them were elevator automata that were not semi-deterministic, and many more
contained an elevator sub-structure). Our techniques drastically reduce the
generated state space; in many cases we even achieved exponential improvement
compared to the optimal procedure of Schewe and our previous heuristics.
The new version of \ranker gives a smaller
complement in the majority of cases of hard automata than other
state-of-the-art tools.

\vspace{-3.0mm}
\section{Preliminaries}
\vspace{-2.0mm}

\paragraph{Words, functions.}
We fix a~finite nonempty alphabet~$\Sigma$ and the first infinite ordinal
$\omega = \{0, 1, \ldots\}$. For $n\in\omega$, by $\numsetof{n}$ we denote the
set $\{ 0, \dots, n \}$. For $i \in \omega$ we use $\evenceil{i}$ to denote
the largest even number smaller of equal to~$i$, e.g., $\evenceil{42} =
\evenceil{43} = 42$.
An (infinite) word~$\word$ is
represented as a~function $\word\colon \omega \to \Sigma$ where the $i$-th
symbol is denoted as $\wordof i$. We~abuse notation and sometimes also
represent~$\word$ as an~infinite sequence $\word = \wordof 0 \wordof 1 \dots$
We~use~$\Sigma^\omega$ to denote the set of all infinite words over~$\Sigma$.
For a~(partial) function $f\colon X\to Y$ and a set $S\subseteq X$, we define $f(S) = \{ f(x)
\mid x\in S \}$. Moreover, for $x\in X$ and $y\in Y$, we use $f \triangleleft \{
x\mapsto y \}$ to denote the function $(f \setminus \{x \mapsto f(x)\}) \cup \{
x\mapsto y \}$.

\vspace{-1.0mm}
\paragraph{B\"{u}chi automata.}
A~(nondeterministic transition/state-based) \emph{B\"{u}chi automaton} (BA)
over~$\Sigma$ is a~quadruple $\aut = (Q, \trans, I, \accstates \cup \acctrans)$
where $Q$ is a~finite set of \emph{states}, $\trans\colon Q \times \Sigma \to
2^Q$ is a~\emph{transition function}, $I \subseteq Q$ is the sets of
\emph{initial} states, and $\accstates \subseteq Q$ and $\acctrans \subseteq
\trans$ are the sets of \emph{accepting states} and \emph{accepting
transitions} respectively.
We sometimes treat~$\trans$ as a~set of transitions $p \ltr a q$, for instance,
we use $p \ltr a q \in \trans$ to denote that $q \in \trans(p, a)$. Moreover, we
extend $\trans$ to sets of states $P \subseteq Q$ as $\trans(P, a) = \bigcup_{p
\in P} \trans(p,a)$, and to sets of symbols $\Gamma \subseteq\Sigma$ as
$\trans(P,\Gamma) = \bigcup_{a\in\Gamma}\trans(P,a)$. We define the inverse
transition function as $\trans^{-1} = \{ p \ltr a q \mid q \ltr a p \in \trans\}$.
The notation $\restrof \trans S$ for $S \subseteq Q$ is used to denote the
restriction of the transition function $\trans \cap (S \times \Sigma \times S)$.
Moreover, for $q \in Q$, we use $\aut[q]$ to denote the BA $(Q, \trans, \{q\},
\accstates \cup \acctrans)$.

A~\emph{run}
of~$\aut$ from~$q \in Q$ on an input word $\word$ is an infinite sequence $\rho\colon
\omega \to Q$ that starts in~$q$ and respects~$\trans$, i.e., $\rho_0 = q$ and
$\forall i \geq 0\colon \rho_i \ltr{\wordof i}\rho_{i+1} \in \trans$.
Let $\infofs \rho$ denote the states occurring in~$\rho$ infinitely often and
$\infoft \rho$ denote the transitions occurring in~$\rho$ infinitely often.
The run~$\rho$ is called \emph{accepting} iff $\infofs \rho \cap \accstates \neq \emptyset$ or
$\infoft \rho \cap \acctrans \neq \emptyset$.

A~word~$\word$ is accepted by~$\aut$ from a~state~$q \in Q$ if there is an
accepting run~$\rho$ of~$\aut$ from~$q$, i.e., $\rho_0 = q$. The set
$\langautof{\aut} q = \{\word \in \Sigma^\omega \mid \aut \text{ accepts } \word
\text{ from } q\}$ is called the \emph{language} of~$q$ (in~$\aut$). Given a~set
of states~$R \subseteq Q$, we define the language of~$R$ as $\langautof \aut R =
\bigcup_{q \in R} \langautof \aut q$ and the language of~$\aut$ as~$\langof \aut =
\langautof \aut I$.
We say that a~state $q \in Q$ is \emph{useless} iff $\langautof{\aut} q = \emptyset$.
%
If $\acctrans = \emptyset$, we call~$\aut$ \emph{state-based} and
if $\accstates = \emptyset$, we call~$\aut$ \emph{transition-based}.
In this paper, we fix a~BA $\aut = (Q, \trans, I, \accstates \cup \acctrans)$.

%

\vspace{-2mm}
\section{Complementing B\"{u}chi automata}\label{sec:rank-based-compl}
\vspace{-1mm}

In this section, we describe a~generalization of the rank-based complementation of state-based BAs
presented by Schewe in~\cite{Schewe09} to our notion of transition/state-based~BAs.
\ifTR
Proofs can be found in the Appendix.
\else
Proofs can be found in~\cite{HavlenaLS22TR}.
\fi

\vspace{-2.0mm}
\subsection{Run DAGs}
\vspace{-1.0mm}

First, we recall the terminology from~\cite{Schewe09} (which is
a~minor modification of the one in~\cite{KupfermanV01}), which we use
in the paper. Let the \emph{run DAG} of~$\aut$ over
a~word~$\word$ be a~DAG (directed acyclic graph) $\dagw = (V,E)$ containing
vertices~$V$ and edges~$E$ such that
\vspace{-0mm}
\begin{itemize}
  \setlength{\itemsep}{0mm}
  \item  $V \subseteq Q \times \omega$ s.t. $(q, i) \in V$ iff there is
    a~run~$\rho$ of $\aut$ from $I$ over $\word$ with $\rho_i = q$,
  \item  $E \subseteq V \times V$ s.t.~$((q, i), (q',i')) \in E$ iff $i' = i+1$
    and $q' \in \trans(q, \wordof i)$.
\end{itemize}
\vspace{-0mm}

\noindent
Given $\dagw$ as above, we will write $(p, i) \in \dagw$ to denote that $(p, i)
\in V$.
A~vertex $(p,i) \in V$ is called
\emph{accepting} if $p$ is an accepting state and an
edge $((q, i), (q',i')) \in E$ is called \emph{accepting} if $q \ltr{\word_i}
q'$ is an accepting transition. A~vertex~$v \in \dagw$ is \emph{finite} if the
set of vertices reachable from $v$ is finite, \emph{infinite} if it is not
finite, and \emph{endangered} if it cannot reach an accepting vertex or an
accepting edge.

We assign ranks to vertices of run DAGs as follows:
Let $\dagwiof 0 = \dagw$ and~$j = 0$.
Repeat the following steps until the~fixpoint or for at most $2n + 1$ steps,
where $n = |Q|$.
\vspace{-0mm}
\begin{itemize}
  \setlength{\itemsep}{0mm}
  \item  Set $\rankwof{v} \gets j$ for all finite vertices $v$ of~$\dagwiof j$
    and let $\dagwiof{j+1}$ be $\dagwiof{j}$ minus the vertices with the
    rank~$j$.
  \item  Set $\rankwof{v} \gets j+1$ for all endangered vertices $v$
    of~$\dagwiof {j+1}$ and let $\dagwiof{j+2}$ be $\dagwiof{j+1}$ minus the
    vertices with the rank~$j+1$.
  \item  Set $j \gets j + 2$.
\end{itemize}
\vspace{-0mm}

\noindent
For all vertices~$v$ that have not been assigned a~rank yet, we assign
$\rankwof{v} \gets \omega$.


We define the \emph{rank of $\word$}, denoted as $\rankof \word$, as $\max\{\rankwof v \mid v \in \dagw\}$ and
the \emph{rank of $\aut$}, denoted as $\rankof \aut$, as $\max\{\rankof w \mid
w \in \Sigma^\omega \setminus \langof \aut \}$.

\begin{restatable}{lemma}{theMaxRank}\label{lemma:maxRank}
	If $\alpha \notin \langof{\aut}$, then $\rankof \word \leq 2|Q|$.
\end{restatable}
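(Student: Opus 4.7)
The plan is to show that when $\alpha \notin \langof{\aut}$, the rank assignment terminates with every vertex of $\dagw$ receiving a finite rank, and moreover the largest rank assigned is at most $2n$, where $n = |Q|$. To this end, I will track the set of states that appear at infinitely many levels of the current DAG and show that this set strictly shrinks after every outer iteration of the procedure.

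The key combinatorial lemma is: \emph{if $G$ is a nonempty sub-DAG of $\dagw$ in which every vertex has infinite forward-reach, then $G$ contains an endangered vertex}. I would prove this by contradiction: if every vertex of $G$ is \emph{safe} (can reach an accepting vertex or an accepting edge inside $G$), then starting from any $v_0 \in G$ one can reach an accepting vertex/edge $a_0$, continue to some successor $v_1 \in G$, which is again safe and infinite, reach an accepting vertex/edge $a_1$, and so on. Splicing these finite segments produces an infinite path through $\dagw$ that visits accepting vertices or accepting transitions infinitely often, i.e.\ an accepting run of $\aut$ on $\word$, contradicting $\word \notin \langof{\aut}$.

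Next, I define the \emph{trace} $R_k := \{ q \in Q \mid (q,i) \in \dagwiof{2k} \text{ for infinitely many } i \}$ for each outer iteration $k$. Clearly $|R_0| \leq n$. I would prove by induction that $|R_k| \leq n - k$ as long as $R_{k-1} \neq \emptyset$. For the inductive step, after the finite-removal at step $j = 2k$, the DAG $\dagwiof{2k+1}$ has only infinite-reach vertices, so the lemma applies. The required strengthening is that \emph{some} state $q \in R_k$ has all but finitely many of its occurrences in $\dagwiof{2k+1}$ being endangered, so that removing all endangered vertices drops $q$ from the trace. This follows from the same König-style splicing construction applied inside a sufficiently deep tail of the DAG (beyond the last level at which any state outside $R_k$ appears): if no such $q$ existed, every state in $R_k$ would retain safe occurrences arbitrarily far out, and the lemma's construction could again be run to produce an infinite accepting path. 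Consequently $|R_{k+1}| < |R_k|$, and iterating $n$ times yields $R_n = \emptyset$. At this point $\dagwiof{2n}$ contains only finitely many vertices, all with finite forward-reach, and therefore they all receive rank $2n$ at the next step, giving $\rankof \word \leq 2n = 2|Q|$.

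The main obstacle is precisely this strengthening of the endangered-vertex lemma: the bare statement yields only a single endangered vertex, whereas the inductive accounting needs a whole \emph{state} with cofinitely many endangered occurrences. Carrying this out requires carefully excising a tail of $\dagwiof{2k+1}$ past the last occurrence of any state outside $R_k$, and then rerunning the splicing construction inside that tail under the assumption that the conclusion fails, to manufacture the contradicting accepting run. Everything else is routine bookkeeping about which vertices are removed at which step.
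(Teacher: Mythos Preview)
Your key combinatorial lemma (a nonempty sub-DAG of $\dagw$ with no finite vertices must contain an endangered vertex, else one can splice together an accepting run) is exactly the engine of the paper's proof as well, so the overall strategy is the same. The divergence is in the progress measure: the paper tracks the \emph{width} of $\dagwiof{2k}$ at sufficiently deep levels, showing it drops by at least one per outer iteration, whereas you track the set $R_k$ of states that occur at infinitely many levels.

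The paper's width-based accounting is substantially simpler at precisely the point you flag as ``the main obstacle.'' Once a single endangered vertex $(q,l)$ is found in $\dagwiof{2k+1}$, the paper invokes K\"onig's Lemma: $(q,l)$ is not finite, so an infinite path emanates from it inside $\dagwiof{2k+1}$, and every vertex on that path is endangered (reachable from an endangered vertex). Hence at \emph{every} level $l' \geq l$ at least one vertex is removed when passing to $\dagwiof{2k+2}$, and the width bound drops by one. No statement about any particular state losing cofinitely many occurrences is needed.

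Your strengthening, by contrast, is not established by the sketch you give. You assume for contradiction that every $q \in R_k$ has infinitely many safe occurrences and propose to ``rerun the splicing.'' But the splicing argument needs that \emph{every vertex you land on} is safe, not merely that every \emph{state} has some safe occurrence arbitrarily far out. Concretely: from a safe vertex you reach an accepting vertex/edge and then step once to some successor $(q_1,l')$ in $\dagwiof{2k+1}$; if $(q_1,l')$ happens to be endangered, then every vertex reachable from it is endangered, so you can never reconnect to a safe vertex---the DAG may genuinely split into an endangered branch (containing your current position) and a disjoint safe branch (containing the safe occurrences your hypothesis guarantees). Your hypothesis gives no way to bridge the two, so the construction stalls and no accepting run is produced. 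The fix is to abandon $R_k$ in favour of level-width and use the K\"onig step above; everything else in your outline then goes through verbatim.
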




\vspace{-5.0mm}
\subsection{Rank-Based Complementation}
\vspace{-0.0mm}

In this section, we describe a~construction for complementing BAs developed in
the work of Kupferman and Vardi~\cite{KupfermanV01}---later improved by
Friedgut, Kupferman, and Vardi~\cite{FriedgutKV06}, and by
Schewe~\cite{Schewe09}---extended to our definition of BAs with accepting
states and transitions (see~\cite{HavlenaL2021} for a~step-by-step introduction).
The construction is based on the notion of tight level rankings storing
information about levels in run DAGs.
For a~BA~$\aut$ and $n = |Q|$, a~\emph{(level) ranking} is a~function $f\colon Q
\to \numsetof{2n}$ such that $f(\accstates) \subseteq \{0, 2,
\ldots, 2n\}$, i.e., $f$~assigns even ranks to accepting states of~$\aut$. For
two rankings~$f$ and~$f'$ we define $f \transconsist_S^a f'$ iff for each $q
\in S$ and $q'\in\trans(q,a)$ we have $f'(q') \leq f(q)$ and for each
$q''\in \acctrans(q,a)$ it holds $f'(q'') \leq \evenceil{f(q)}$.
The set of all rankings is denoted by~$\cR$.
For a~ranking~$f$, the \emph{rank} of~$f$ is defined as~$\rankof f = \max\{f(q)
\mid q \in Q\}$.
We use $f \leq f'$ iff for every state $q \in Q$ we have $f(q)
\leq f'(q)$ and we use $f < f'$ iff $f \leq f'$ and there is a~state $q \in Q$ with
$f(q) < f'(q)$.
For a~set of states $S \subseteq Q$, we call~$f$ to be $S$-\emph{tight} if
\begin{inparaenum}[(i)]
  \item  it has an odd rank~$r$,
  \item  $f(S) \supseteq \{1, 3, \ldots, r\}$, and
  \item  $f(Q\setminus S) = \{0\}$.
\end{inparaenum}
A~ranking is \emph{tight} if it is $Q$-tight; we use~$\cT$ to denote the set of
all tight rankings.

%

%

The original rank-based construction~\cite{KupfermanV01} uses macrostates of the
form $\sof$ to track all runs of~$\aut$ over~$\word$.
The $f$-component contains guesses of the ranks of states in~$S$ (which is
obtained by the classical subset construction) in the run DAG and the $O$-set is
used to check whether all runs contain only a~finite number of accepting states.
Friedgut, Kupferman, and Vardi~\cite{FriedgutKV06}
improved the construction by having~$f$ consider only tight rankings.
Schewe's construction~\cite{Schewe09} extends the macrostates to $\sofi$ with $i \in
\omega$ representing a~particular even rank such that~$O$ tracks states with
rank~$i$.
At~the cut-point
(a~macrostate with $O=\emptyset$) the value of $i$ is changed to $i+2$ modulo the
rank of~$f$. Macrostates in an accepting run hence iterate over all possible
values of~$i$. Formally, the complement of~$\aut = (Q, \trans, I, \accstates
\cup \acctrans)$ is given as the (state-based) BA $\algschewe(\aut) = (Q',
\trans', I', \accstates' \cup \emptyset)$, whose components are defined as
follows:
\begin{itemize}
  \item  $Q' = Q_1 \cup Q_2$ where
    \begin{itemize}
      \item  $Q_1 = 2^Q$ and
      \item  $Q_2 = \hspace*{-1mm}
        \begin{array}[t]{ll}
          \{(S,O,f, i) \in \hspace*{0mm}& 2^Q \times 2^Q \times \cT \times \{0, 2, \ldots, 2n - 2\} \mid  f \text{ is $S$-tight},
          \\ & O \subseteq S \cap f^{-1}(i)\},
        \end{array}$
    \end{itemize}
  \item  $I' = \{I\}$,
  \item  $\trans' = \trans_1 \cup \trans_2 \cup \trans_3$ where
    \begin{itemize}
      \item  $\trans_1\colon Q_1 \times \Sigma \to 2^{Q_1}$ such that $\trans_1(S, a) =
        \{\trans(S,a)\}$,
      \item $\trans_2 \colon Q_1 \times \Sigma \to 2^{Q_2}$ such that $\trans_2(S, a) =
        \{(S', \emptyset, f, 0) \mid S' = \trans(S, a),\linebreak f \text{ is } S'\text{-tight}\}$, and
      \item $\trans_3\colon Q_2 \times \Sigma \to 2^{Q_2}$ such that $(S', O', f', i') \in
        \trans_3((S, O, f, i), a)$ iff
          \begin{itemize}
            \item  $S' = \trans(S, a)$,
            \item  $f \transconsist_S^a f'$,
            \item  $\rankof f = \rankof{f'}$,
            \item  and
              \begin{itemize}[$\circ$]
                \item if $O = \emptyset$ then $i' = (i+2) \mod (\rankof{f'} +
                  1)$ and $O' = f'^{-1}(i')$, and
                \item  if $O \neq \emptyset$ then $i' = i$ and $O' = \trans(O,
                  a) \cap f'^{-1}(i)$; and
              \end{itemize}
          \end{itemize}
    \end{itemize}
    \vspace{1mm}
  \item  $\accstates' = \{\emptyset\} \cup ((2^Q \times \{\emptyset\} \times \cT \times
    \omega) \cap Q_2)$.
\end{itemize}

\noindent
We call the part of the automaton with states from~$Q_1$ the \emph{waiting}
part (denoted as \waiting), and the part corresponding to~$Q_2$ the \emph{tight} part (denoted as \tight).

\begin{restatable}{theorem}{theComplCorr}\label{the:schewe-corr}
	Let $\aut$ be a~BA.
  Then $\langof{\algschewe(\aut)} = \Sigma^\omega \setminus \langof{\aut}$.
\end{restatable}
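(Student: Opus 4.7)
The plan is to prove the two inclusions $\langof{\algschewe(\aut)} \subseteq \Sigma^\omega \setminus \langof{\aut}$ and $\Sigma^\omega \setminus \langof{\aut} \subseteq \langof{\algschewe(\aut)}$ separately, following the structure of Schewe's original argument~\cite{Schewe09} but verifying carefully that the modified consistency relation $\transconsist_S^a$, which additionally requires $f'(q'') \leq \evenceil{f(q)}$ along accepting transitions, correctly handles the transition-based component $\acctrans$ in the same way that $f(\accstates)\subseteq\{0,2,\ldots,2n\}$ handles accepting states.

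For the inclusion $\langof{\algschewe(\aut)} \subseteq \Sigma^\omega \setminus \langof{\aut}$, I would fix an accepting run $(S_i,O_i,f_i,i_i)_{i\geq k}$ of $\algschewe(\aut)$ on a word $\word$ (after the finite waiting prefix) and assume for contradiction that $\aut$ has an accepting run $\rho$ over $\word$. By the subset construction on the $S$-component, $\rho_i \in S_i$ for all $i\geq k$. The sequence $(f_i(\rho_i))_i$ is non-increasing by $f_i \transconsist^{\word_i}_{S_i} f_{i+1}$, hence it stabilises at some value $r$. Because $\rho$ visits $\accstates$ or uses a transition in $\acctrans$ infinitely often, and each such event forces the successor rank to be at most $\evenceil{f_i(\rho_i)}$, the stabilised value $r$ must be even. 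Since $i_i$ cycles through all even values up to $\rankof{f_i}$ whenever $O$ is emptied, eventually $i_i = r$; from that point on $\rho_i$ stays in~$f_i^{-1}(r)$, and by the definition of~$\trans_3$ the state $\rho_i$ is never removed from $O$ — contradicting that $O=\emptyset$ occurs infinitely often on the accepting run.

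For the inclusion $\Sigma^\omega \setminus \langof{\aut} \subseteq \langof{\algschewe(\aut)}$, I would use the DAG ranks. Given $\word\notin\langof{\aut}$, Lemma~\ref{lemma:maxRank} gives $\rankof{\word}\leq 2|Q|$, so the level ranking $f_i\colon q \mapsto \rankwof{(q,i)}$ (with $f_i(q)=0$ for $q\notin S_i$) is well-defined and has rank $\leq 2n$. I would walk the waiting part as the plain subset construction until a level~$k$ at which $f_k$ becomes $S_k$-tight (such a level exists because every infinite branch of $\dagw$ eventually has stable odd rank), then jump to $(S_k,\emptyset,f_k,0)$ in~\tight. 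The transition consistency $f_i\transconsist^{\word_i}_{S_i}f_{i+1}$ holds by the monotonicity of DAG ranks together with the observation that an accepting edge in $\dagw$ makes its target finite relative to any endangered predecessor, forcing the target's rank to be even and hence at most $\evenceil{f_i(\rho_i)}$. Finally, for each even $r$, the set $\rankw^{-1}(r)$ consists of finite vertices in $\dagwiof{r}$, so after finitely many successor steps the $O$-component drains to $\emptyset$; hence cut-points occur infinitely often and the run is accepting.

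The main obstacle is the completeness direction's check that the DAG-induced ranking satisfies the transition-based part of $\transconsist_S^a$: I have to verify, from the rank-assignment procedure, that whenever an edge $((q,i),(q',i{+}1))$ of $\dagw$ is accepting, $\rankwof{(q',i{+}1)}$ is even and bounded by $\evenceil{\rankwof{(q,i)}}$. This requires a slight strengthening of the standard DAG-rank analysis, replacing ``reaches an accepting vertex'' in the definition of \emph{endangered} by ``reaches an accepting vertex or an accepting edge'' (already present in our setup). Once this lemma is in place, the rest of the proof is a direct adaptation of~\cite{Schewe09} and the straightforward bookkeeping for the $O$-component and the cycling~$i$.
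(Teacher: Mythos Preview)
Your proposal is correct and follows essentially the same approach as the paper's proof, which is explicitly presented as a modification of Schewe's argument~\cite{Schewe09} to the mixed state/transition-based setting: both directions proceed exactly as you outline, with the soundness direction arguing that the rank sequence along a run of~$\aut$ stabilises and deriving a contradiction from the $O$-emptying condition, and the completeness direction building the run from the DAG ranks $f^\word_\ell$ after reaching a tight level (the paper phrases this as the existence of a \emph{super-tight run}, established via Lemmas~\ref{lemma:tight} and~\ref{lemma:maxRank}). One small wording slip: in the completeness direction you write that an accepting edge ``forces the target's rank to be even''; what the argument actually gives (and what you need) is only that the target's rank is at most $\evenceil{\rankwof{(q,i)}}$, since if $(q,i)$ has odd rank $2j{+}1$ it is endangered in $\dagwiof{2j+1}$ and hence the accepting edge's target must already be absent there, i.e.\ have rank $\le 2j$ --- but not necessarily even.
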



\newcommand{
\begin{wrapfigure}[17]{r}{6.3cm}
\vspace*{-14mm}
\hspace*{-3mm}
\scalebox{0.96}{
  \begin{minipage}{6.5cm}
  	\centering
  	\begin{subfigure}[b]{1.0\textwidth}
  		\centering
  		\scalebox{0.9}{
  			\begin{tikzpicture}[->,>=stealth',shorten >=0pt,auto,node distance=1.5cm,
                    scale=0.8,transform shape,initial text={}]
  \tikzstyle{every state}=[inner sep=3pt,minimum size=5pt]
  \tikzstyle{empty}=[]
  \tikzstyle{initstate}=[fill=yellow!30]

  \node[state,initial,initstate] (r) {$r$};
  \node[state,accepting,right of=r] (s) {$s$};
  \node[state,right of=s] (t) {$t$};

  \path (r) edge[loop above]  node {$b$} (r)
        (r) edge  node {$b$} (s)
        (r) edge[bend right=60] node[below] {$b$} (t)
        (s) edge[loop above]  node {$a$} (s)
        (s) edge[bend left]  node {$a$} (t)
        (t) edge[bend left]  node {$a$} (s);

\end{tikzpicture}
  		}
  		\caption{BA $\aut$ over $\{ a,b \}$}
  		\label{fig:schewe-aut}
  	\end{subfigure}
  	~
  	\begin{subfigure}[b]{1.0\textwidth}
  		\centering
  		\scalebox{0.9}{
  			\begin{tikzpicture}[->,>=stealth',shorten >=0pt,auto,node distance=1.8cm,
                    scale=0.6,transform shape,initial text={}]
  \tikzstyle{every state}=[inner sep=3pt,minimum size=5pt,rectangle,rounded corners=1mm]
  \tikzstyle{empty}=[]
  \tikzstyle{initstate}=[fill=yellow!30]
  \tikzstyle{wobbly}=[decorate, decoration={snake,amplitude=.2mm,segment length=2mm,post length=1mm}]

  \tikzdeclarepattern{
    name=hatch,
    parameters={\hatchsize,\hatchangle,\hatchlinewidth},
    bounding box={(-.1pt,-.1pt) and (\hatchsize+.1pt,\hatchsize+.1pt)},
    tile size={(\hatchsize,\hatchsize)},
    tile transformation={rotate=\hatchangle},
    defaults={
      hatch size/.store in=\hatchsize,hatch size=5pt,
      hatch angle/.store in=\hatchangle,hatch angle=0,
      hatch linewidth/.store in=\hatchlinewidth,hatch linewidth=.4pt,
    },
    code={
        \draw[line width=\hatchlinewidth] (0,0) -- (\hatchsize,\hatchsize);
    }
  }

  \node[state,initial above,initstate] (r) {$\{r\}$};
  \node[state, right of=r,node distance=25mm] (rst) {$\{r,s,t\}$};
  \node[state, right of=rst,node distance=25mm] (st) {$\{s,t\}$};
  \node[state, accepting, above of=rst] (em) {$\emptyset$};

  \node[state, accepting, left of=r, node distance=32mm, yshift=-20, draw=black, text=black] (r1) {$\big(\{r{:}3, s{:}0, t{:}1\}, \emptyset\big)$};
  \node[state, accepting, left of=r, node distance=32mm, yshift=10, draw=black, text=black] (r2) {$\big(\{r{:}3, s{:}2, t{:}1\}, \emptyset\big)$};
  \node[state, accepting, left of=r, node distance=32mm, yshift=40, draw=black, text=black] (r3) {$\big(\{r{:}1, s{:}2, t{:}3\}, \emptyset\big)$};

  \node[state, accepting, below of=r,xshift=0mm] (r4) {$\big(\{r{:}1, s{:}0, t{:}0\}, \emptyset\big)$};
  \node[state, accepting, right of=r4,xshift=14mm] (r7) {$\big(\{r{:}1, s{:}0, t{:}1\}, \emptyset\big)$};

  \node[state, accepting, below of=r4, node distance=14mm] (r5) {$\big(\{r{:}1\}, \emptyset\big)$};
  \node[state, below of=r5, node distance=14mm] (r6) {$\big(\{r{:}1,s{:}0, t{:}0\}, \{ s,t \}\big)$};

  \node[state, accepting, right of=r7,xshift=10mm, draw=black, text=black] (r8)  {$\big(\{s{:}0, t{:}1\}, \emptyset\big)$};



  \path (r) edge[bend left]  node[above] {$b$} (r1)
        (r) edge  node[above] {$b$} (r2)
        (r) edge[bend right]  node[above] {$b$} (r3)
        (r) edge  node[left] {$b$} (r4)
        (r) edge[bend right=5]  node[above] {$b$} (r7)
        (r) edge[bend left]  node[above] {$b$} (rst)
        (r) edge[bend left]  node[above] {$a$} (em)
        (rst) edge  node[above] {$a$} (r8)
        (rst) edge[bend left=20]  node[above] {$b$} (r)
        (rst) edge  node[above] {$a$} (st)
        (st) edge  node[above] {$a$} (r8)
        (st) edge [loop above] node {$a$} (st)
        (st) edge [bend right,above] node {$b$} (em)
        (em) edge [loop above] node {$a,b$} (em)
        (r4) edge  node[left] {$b$} (r5)
        (r7) edge[bend left]  node[above] {$b$} (r5)
        (r5) edge[bend left] node[right] {$b$} (r6)
        (r6) edge[bend left] node[left] {$b$} (r5);

  %
  %
  \begin{pgfonlayer}{background}
  \node[rectangle,rounded corners=8pt, opacity=0.1,pattern={hatch[hatch size= 8pt, hatch linewidth=3pt, hatch angle=10]},inner sep=5pt,fit=(r1) (r2) (r3)] (useless1) {};
  \node[rectangle,rounded corners=8pt, opacity=0.1,pattern={hatch[hatch size= 8pt, hatch linewidth=3pt, hatch angle=10]},inner sep=5pt,fit=(r8)] (useless2) {};
  \end{pgfonlayer}
  %

\end{tikzpicture}
  		}
  		\caption{A part of $\algschewe(\aut)$}
  		\label{fig:schewe-res}
  	\end{subfigure}
  \end{minipage}
}
\vspace{-2mm}
\caption{Schewe's complementation}
\label{fig:schewe-ex}
\end{wrapfigure}%
}[0]{
\begin{wrapfigure}[17]{r}{6.3cm}
\vspace*{-14mm}
\hspace*{-3mm}
\scalebox{0.96}{
  \begin{minipage}{6.5cm}
  	\centering
  	\begin{subfigure}[b]{1.0\textwidth}
  		\centering
  		\scalebox{0.9}{
  			\begin{tikzpicture}[->,>=stealth',shorten >=0pt,auto,node distance=1.5cm,
                    scale=0.8,transform shape,initial text={}]
  \tikzstyle{every state}=[inner sep=3pt,minimum size=5pt]
  \tikzstyle{empty}=[]
  \tikzstyle{initstate}=[fill=yellow!30]

  \node[state,initial,initstate] (r) {$r$};
  \node[state,accepting,right of=r] (s) {$s$};
  \node[state,right of=s] (t) {$t$};

  \path (r) edge[loop above]  node {$b$} (r)
        (r) edge  node {$b$} (s)
        (r) edge[bend right=60] node[below] {$b$} (t)
        (s) edge[loop above]  node {$a$} (s)
        (s) edge[bend left]  node {$a$} (t)
        (t) edge[bend left]  node {$a$} (s);

\end{tikzpicture}
  		}
  		\caption{BA $\aut$ over $\{ a,b \}$}
  		\label{fig:schewe-aut}
  	\end{subfigure}
  	~
  	\begin{subfigure}[b]{1.0\textwidth}
  		\centering
  		\scalebox{0.9}{
  			\begin{tikzpicture}[->,>=stealth',shorten >=0pt,auto,node distance=1.8cm,
                    scale=0.6,transform shape,initial text={}]
  \tikzstyle{every state}=[inner sep=3pt,minimum size=5pt,rectangle,rounded corners=1mm]
  \tikzstyle{empty}=[]
  \tikzstyle{initstate}=[fill=yellow!30]
  \tikzstyle{wobbly}=[decorate, decoration={snake,amplitude=.2mm,segment length=2mm,post length=1mm}]

  \tikzdeclarepattern{
    name=hatch,
    parameters={\hatchsize,\hatchangle,\hatchlinewidth},
    bounding box={(-.1pt,-.1pt) and (\hatchsize+.1pt,\hatchsize+.1pt)},
    tile size={(\hatchsize,\hatchsize)},
    tile transformation={rotate=\hatchangle},
    defaults={
      hatch size/.store in=\hatchsize,hatch size=5pt,
      hatch angle/.store in=\hatchangle,hatch angle=0,
      hatch linewidth/.store in=\hatchlinewidth,hatch linewidth=.4pt,
    },
    code={
        \draw[line width=\hatchlinewidth] (0,0) -- (\hatchsize,\hatchsize);
    }
  }

  \node[state,initial above,initstate] (r) {$\{r\}$};
  \node[state, right of=r,node distance=25mm] (rst) {$\{r,s,t\}$};
  \node[state, right of=rst,node distance=25mm] (st) {$\{s,t\}$};
  \node[state, accepting, above of=rst] (em) {$\emptyset$};

  \node[state, accepting, left of=r, node distance=32mm, yshift=-20, draw=black, text=black] (r1) {$\big(\{r{:}3, s{:}0, t{:}1\}, \emptyset\big)$};
  \node[state, accepting, left of=r, node distance=32mm, yshift=10, draw=black, text=black] (r2) {$\big(\{r{:}3, s{:}2, t{:}1\}, \emptyset\big)$};
  \node[state, accepting, left of=r, node distance=32mm, yshift=40, draw=black, text=black] (r3) {$\big(\{r{:}1, s{:}2, t{:}3\}, \emptyset\big)$};

  \node[state, accepting, below of=r,xshift=0mm] (r4) {$\big(\{r{:}1, s{:}0, t{:}0\}, \emptyset\big)$};
  \node[state, accepting, right of=r4,xshift=14mm] (r7) {$\big(\{r{:}1, s{:}0, t{:}1\}, \emptyset\big)$};

  \node[state, accepting, below of=r4, node distance=14mm] (r5) {$\big(\{r{:}1\}, \emptyset\big)$};
  \node[state, below of=r5, node distance=14mm] (r6) {$\big(\{r{:}1,s{:}0, t{:}0\}, \{ s,t \}\big)$};

  \node[state, accepting, right of=r7,xshift=10mm, draw=black, text=black] (r8)  {$\big(\{s{:}0, t{:}1\}, \emptyset\big)$};



  \path (r) edge[bend left]  node[above] {$b$} (r1)
        (r) edge  node[above] {$b$} (r2)
        (r) edge[bend right]  node[above] {$b$} (r3)
        (r) edge  node[left] {$b$} (r4)
        (r) edge[bend right=5]  node[above] {$b$} (r7)
        (r) edge[bend left]  node[above] {$b$} (rst)
        (r) edge[bend left]  node[above] {$a$} (em)
        (rst) edge  node[above] {$a$} (r8)
        (rst) edge[bend left=20]  node[above] {$b$} (r)
        (rst) edge  node[above] {$a$} (st)
        (st) edge  node[above] {$a$} (r8)
        (st) edge [loop above] node {$a$} (st)
        (st) edge [bend right,above] node {$b$} (em)
        (em) edge [loop above] node {$a,b$} (em)
        (r4) edge  node[left] {$b$} (r5)
        (r7) edge[bend left]  node[above] {$b$} (r5)
        (r5) edge[bend left] node[right] {$b$} (r6)
        (r6) edge[bend left] node[left] {$b$} (r5);

  %
  %
  \begin{pgfonlayer}{background}
  \node[rectangle,rounded corners=8pt, opacity=0.1,pattern={hatch[hatch size= 8pt, hatch linewidth=3pt, hatch angle=10]},inner sep=5pt,fit=(r1) (r2) (r3)] (useless1) {};
  \node[rectangle,rounded corners=8pt, opacity=0.1,pattern={hatch[hatch size= 8pt, hatch linewidth=3pt, hatch angle=10]},inner sep=5pt,fit=(r8)] (useless2) {};
  \end{pgfonlayer}
  %

\end{tikzpicture}
  		}
  		\caption{A part of $\algschewe(\aut)$}
  		\label{fig:schewe-res}
  	\end{subfigure}
  \end{minipage}
}
\vspace{-2mm}
\caption{Schewe's complementation}
\label{fig:schewe-ex}
\end{wrapfigure}%
}


The space complexity of Schewe's construction for BAs matches the theoretical
lower bound $\bigO((0.76n)^n)$ given by Yan~\cite{yan} modulo a quadratic factor
$\bigO(n^2)$. Note that our extension to BAs with accepting transitions does not
increase the space complexity of the construction.

\begin{wrapfigure}[17]{r}{6.3cm}
\vspace*{-14mm}
\hspace*{-3mm}
\scalebox{0.96}{
  \begin{minipage}{6.5cm}
  	\centering
  	\begin{subfigure}[b]{1.0\textwidth}
  		\centering
  		\scalebox{0.9}{
  			\begin{tikzpicture}[->,>=stealth',shorten >=0pt,auto,node distance=1.5cm,
                    scale=0.8,transform shape,initial text={}]
  \tikzstyle{every state}=[inner sep=3pt,minimum size=5pt]
  \tikzstyle{empty}=[]
  \tikzstyle{initstate}=[fill=yellow!30]

  \node[state,initial,initstate] (r) {$r$};
  \node[state,accepting,right of=r] (s) {$s$};
  \node[state,right of=s] (t) {$t$};

  \path (r) edge[loop above]  node {$b$} (r)
        (r) edge  node {$b$} (s)
        (r) edge[bend right=60] node[below] {$b$} (t)
        (s) edge[loop above]  node {$a$} (s)
        (s) edge[bend left]  node {$a$} (t)
        (t) edge[bend left]  node {$a$} (s);

\end{tikzpicture}
  		}
  		\caption{BA $\aut$ over $\{ a,b \}$}
  		\label{fig:schewe-aut}
  	\end{subfigure}
  	~
  	\begin{subfigure}[b]{1.0\textwidth}
  		\centering
  		\scalebox{0.9}{
  			\begin{tikzpicture}[->,>=stealth',shorten >=0pt,auto,node distance=1.8cm,
                    scale=0.6,transform shape,initial text={}]
  \tikzstyle{every state}=[inner sep=3pt,minimum size=5pt,rectangle,rounded corners=1mm]
  \tikzstyle{empty}=[]
  \tikzstyle{initstate}=[fill=yellow!30]
  \tikzstyle{wobbly}=[decorate, decoration={snake,amplitude=.2mm,segment length=2mm,post length=1mm}]

  \tikzdeclarepattern{
    name=hatch,
    parameters={\hatchsize,\hatchangle,\hatchlinewidth},
    bounding box={(-.1pt,-.1pt) and (\hatchsize+.1pt,\hatchsize+.1pt)},
    tile size={(\hatchsize,\hatchsize)},
    tile transformation={rotate=\hatchangle},
    defaults={
      hatch size/.store in=\hatchsize,hatch size=5pt,
      hatch angle/.store in=\hatchangle,hatch angle=0,
      hatch linewidth/.store in=\hatchlinewidth,hatch linewidth=.4pt,
    },
    code={
        \draw[line width=\hatchlinewidth] (0,0) -- (\hatchsize,\hatchsize);
    }
  }

  \node[state,initial above,initstate] (r) {$\{r\}$};
  \node[state, right of=r,node distance=25mm] (rst) {$\{r,s,t\}$};
  \node[state, right of=rst,node distance=25mm] (st) {$\{s,t\}$};
  \node[state, accepting, above of=rst] (em) {$\emptyset$};

  \node[state, accepting, left of=r, node distance=32mm, yshift=-20, draw=black, text=black] (r1) {$\big(\{r{:}3, s{:}0, t{:}1\}, \emptyset\big)$};
  \node[state, accepting, left of=r, node distance=32mm, yshift=10, draw=black, text=black] (r2) {$\big(\{r{:}3, s{:}2, t{:}1\}, \emptyset\big)$};
  \node[state, accepting, left of=r, node distance=32mm, yshift=40, draw=black, text=black] (r3) {$\big(\{r{:}1, s{:}2, t{:}3\}, \emptyset\big)$};

  \node[state, accepting, below of=r,xshift=0mm] (r4) {$\big(\{r{:}1, s{:}0, t{:}0\}, \emptyset\big)$};
  \node[state, accepting, right of=r4,xshift=14mm] (r7) {$\big(\{r{:}1, s{:}0, t{:}1\}, \emptyset\big)$};

  \node[state, accepting, below of=r4, node distance=14mm] (r5) {$\big(\{r{:}1\}, \emptyset\big)$};
  \node[state, below of=r5, node distance=14mm] (r6) {$\big(\{r{:}1,s{:}0, t{:}0\}, \{ s,t \}\big)$};

  \node[state, accepting, right of=r7,xshift=10mm, draw=black, text=black] (r8)  {$\big(\{s{:}0, t{:}1\}, \emptyset\big)$};



  \path (r) edge[bend left]  node[above] {$b$} (r1)
        (r) edge  node[above] {$b$} (r2)
        (r) edge[bend right]  node[above] {$b$} (r3)
        (r) edge  node[left] {$b$} (r4)
        (r) edge[bend right=5]  node[above] {$b$} (r7)
        (r) edge[bend left]  node[above] {$b$} (rst)
        (r) edge[bend left]  node[above] {$a$} (em)
        (rst) edge  node[above] {$a$} (r8)
        (rst) edge[bend left=20]  node[above] {$b$} (r)
        (rst) edge  node[above] {$a$} (st)
        (st) edge  node[above] {$a$} (r8)
        (st) edge [loop above] node {$a$} (st)
        (st) edge [bend right,above] node {$b$} (em)
        (em) edge [loop above] node {$a,b$} (em)
        (r4) edge  node[left] {$b$} (r5)
        (r7) edge[bend left]  node[above] {$b$} (r5)
        (r5) edge[bend left] node[right] {$b$} (r6)
        (r6) edge[bend left] node[left] {$b$} (r5);

  %
  %
  \begin{pgfonlayer}{background}
  \node[rectangle,rounded corners=8pt, opacity=0.1,pattern={hatch[hatch size= 8pt, hatch linewidth=3pt, hatch angle=10]},inner sep=5pt,fit=(r1) (r2) (r3)] (useless1) {};
  \node[rectangle,rounded corners=8pt, opacity=0.1,pattern={hatch[hatch size= 8pt, hatch linewidth=3pt, hatch angle=10]},inner sep=5pt,fit=(r8)] (useless2) {};
  \end{pgfonlayer}
  %

\end{tikzpicture}
  		}
  		\caption{A part of $\algschewe(\aut)$}
  		\label{fig:schewe-res}
  	\end{subfigure}
  \end{minipage}
}
\vspace{-2mm}
\caption{Schewe's complementation}
\label{fig:schewe-ex}
\end{wrapfigure}%

\beginexample\label{ex:schewe}
Consider the BA $\aut$ over $\{ a, b \}$ given in
\cref{fig:schewe-aut}.
A~part of $\algschewe(\aut)$ is shown in \cref{fig:schewe-res}
(we use $(\{ s{:}0, t{:}1 \}, \emptyset)$ to denote the macrostate
$(\{ s, t\}, \emptyset, \{ s\mapsto 0, t\mapsto 1 \}, 0)$).
We omit the $i$-part of each macrostate since the
corresponding values are~0 for all macrostates in the figure.
Useless states are covered by grey stripes.
The full automaton contains even more transitions from
$\{r\}$ to useless macrostates of the form $(\{ r{:}\cdot, s{:}\cdot, t{:}\cdot \},
\emptyset)$.
\qed



From the construction of $\algschewe(\aut)$, we can see that the number of states
is affected mainly by sizes of macrostates and by the maximum rank of $\aut$. In
particular, the upper bound on the number of states of the complement with the
maximum rank $r$ is given in the following lemma.



\begin{restatable}{lemma}{theLemElevBound}\label{lem:rank-bound}
  For a BA $\aut$ with sufficiently many states~$n$ such that $\rankof \aut = r$
  the number of states of the complemented automaton is
	bounded by $2^n + \frac{(r+m)^n}{(r+m)!}$ where $m = \max\{0, 3 - \lceil
	\frac{r}{2} \rceil\}$.
\end{restatable}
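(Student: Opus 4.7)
The plan is to split $Q' = Q_1 \cup Q_2$ and bound each component. The waiting part $Q_1 = 2^Q$ trivially contributes $|Q_1| = 2^n$, accounting for the first summand. The remaining task is to show $|Q_2| \leq \tfrac{(r+m)^n}{(r+m)!}$.

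To bound $|Q_2|$, I would encode each tuple $(S, O, f, i) \in Q_2$ as a single coloring $g \colon Q \to \{0, 1, \ldots, r+m-1\}$ that injectively preserves all four components. Since $\rankof{\aut} = r$, every reachable ranking $f$ in $Q_2$ takes values in $\{0, 1, \ldots, r\}$; the color $g(q)$ then records $f(q)$ together with the orthogonal bits ``$q \in S$'' and ``$q \in O$'', plus the scalar $i$ (which is recoverable from $\restrof{f}{O}$ whenever $O \neq \emptyset$, and otherwise encoded into one dedicated color slot). For $r \geq 5$ the $r+1$ rank values leave enough room to embed these extra bits without enlarging the palette, so $m = 0$ suffices; for $r \in \{0, 1, 2, 3, 4\}$ up to three additional color slots are needed, which is exactly what $m = \max\{0, 3 - \lceil r/2 \rceil\}$ supplies.

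Next, I would count the number of such colorings. The $S$-tight property forces $f$ to cover all odd ranks $\{1, 3, \ldots, \rankof{f}\}$, which translates into a surjectivity condition on $g$ onto a canonical set of color classes. Applying the Stirling-number inequality $S(n, k) \leq k^n / k!$ on the number of partitions of an $n$-element set into $k$ nonempty parts, with $k = r + m$, yields the desired bound $\tfrac{(r+m)^n}{(r+m)!}$ uniformly over all admissible $\rankof{f} \leq r$.

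The main obstacle I foresee is constructing the encoding $g$ so that it is genuinely injective within the compact range $\{0, \ldots, r+m-1\}$: the $S$, $O$, and $i$ data must be packed together with $f$ in a way that avoids collisions across distinct macrostates. Verifying the small-$r$ corner cases, which precisely drive the definition of $m$, requires the most bookkeeping; once the encoding is fixed, the Stirling-style counting is routine and produces the second summand of the claimed bound.
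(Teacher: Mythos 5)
Your overall strategy---encoding macrostates as colorings and invoking the Stirling bound $\stirl{n}{k}\le k^n/k!$---belongs to the same family of argument as the paper's, but two of your concrete steps do not go through. First, the injective packing into the palette $\{0,\ldots,r+m-1\}$ is exactly the hard part, and as described it fails: for $r\ge 5$ you set $m=0$, so the palette has only $r$ colors, yet a state-local color that ``records $f(q)$'' must already distinguish the $r+1$ values $0,\ldots,r$ before any $S$-, $O$- or $i$-information is added; for $r=3,4$ the palette has exactly $r+1$ colors, leaving no room for the membership bits either. The paper does not attempt such a packing: it uses the larger palette $\{-2,-1,0,\ldots,r\}$ (with $-2$ meaning $q\notin S$ and $-1$ meaning $q\in O$), counts separately for each value of $i$ and each maximal rank $j\le r$, and exploits that tightness makes only the $\lceil j/2\rceil$ odd ranks mandatory; this yields a bound of the shape $2^n + 4\ell^2\,(\ell+2)^n/(\ell+2)!$ with $\ell=\lceil r/2\rceil$, and the hypothesis of ``sufficiently many states'' is then what absorbs the constant and polynomial factors into $(r+m)^n/(r+m)!$, using $\ell+2\le r+m$. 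Your sketch never uses that hypothesis, which is a warning sign, because the stated bound is genuinely asymptotic rather than exact.

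Second, your counting step conflates colorings with partitions. The inequality $\stirl{n}{k}\le k^n/k!$ bounds the number of partitions of $Q$ into $k$ unordered nonempty blocks; the number of colorings of $Q$ that are onto $k$ semantically distinct colors is $k!\,\stirl{n}{k}$, which is of order $k^n$, so surjectivity onto $r+m$ labeled colors does not by itself produce the $1/(r+m)!$ factor. You would either have to map macrostates to partitions (and then injectivity breaks, because the colors carry meaning and distinct macrostates can induce the same partition) or recover the lost $(r+m)!$ factor elsewhere. The paper avoids this dilemma precisely because its mandatory-surjectivity set has size only about $r/2+2\le r+m$, so even a loose count---including the factor for the four onto-cases, the range of $i$, and the $k!$ gap---remains below $(r+m)^n/(r+m)!$ once $n$ is large. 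To repair your proof you should follow this slack-based route (count per $i$ and per maximal rank over a palette of size about $\lceil r/2\rceil+2$, then absorb the remaining factors asymptotically) rather than insist on an exact injective encoding into $r+m$ colors.
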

From \cref{lemma:maxRank} we have that the rank of $\aut$ is bounded by $2|Q|$.
Such a~bound is often too coarse and hence $\algschewe(\aut)$ may
contain many redundant states.
Decreasing the bound on the ranks is essential for a~practical algorithm, but an
optimal solution is \pspace-complete~\cite{GurumurthyKSV03}.
The rest of this paper therefore proposes
a~framework of lightweight techniques for decreasing the maximum rank bound and,
in this way, significantly reducing the size of the complemented BA.

\vspace{-3.0mm}
\subsection{Tight Rank Upper Bounds}
\vspace{-1.0mm}

Let $\alpha \notin \langof{\aut}$.
For $\ell \in \omega$, we define the $\ell$-th \emph{level} of $\dagw$ as
$\levelw(\ell) = \{ q \mid (q,\ell)\in\dagw \}$. Furthermore, we use
$f^\alpha_\ell$ to denote the ranking of level $\ell$ of $\dagw$.
Formally,
\begin{equation}
f^\alpha_\ell(q) =
\begin{cases}
  \rankwof{(q,\ell)} & \text{if } q\in\levelwof{\ell},\\
   0 & \text{otherwise.}
\end{cases}
\end{equation}
We say that the $\ell$-th level of~$\dagw$ is \emph{tight} if for all $k\geq \ell$ it holds that
\begin{inparaenum}[(i)]
	\item $f^\alpha_k$~is tight, and
	\item $\rankof{f^\alpha_k} = \rankof{f^\alpha_\ell}$.
\end{inparaenum}
Let $\rho = S_0S_1\dots S_{\ell-1}(S_\ell, O_\ell, f_\ell, i_\ell)\dots$ be a run on a
word $\alpha$ in $\algschewe(\aut)$. We say that $\rho$ is a \emph{super-tight
run}~\cite{HavlenaL2021} if $f_k = f^\alpha_k$ for each $k \geq \ell$.
Finally, we say that a~mapping $\mu\colon 2^Q \to \cR$ is a~\emph{tight rank upper bound
(\trub) wrt~$\alpha$}~iff
\begin{equation}\label{eq:trub}
	\exists \ell\in\omega\colon \levelw(\ell)
	\text{ is tight} \land (\forall k \geq \ell \colon \mu(\levelw(k)) \geq
  f^\alpha_k).
\end{equation}
Informally, a~\trub is a~ranking that gives a~conservative (i.e., larger)
estimate on the necessary ranks of states in a~super-tight run. We say that
$\mu$~is a~\trub iff~$\mu$ is a~\trub wrt all $\alpha\notin\langof{\aut}$.
We abuse notation and use the term \trub also for a~mapping $\mu'\colon 2^Q \to
\omega$ if the mapping~$\infunc(\mu')$
is a~\trub where $\infunc(\mu')(S) = \{ q \mapsto m \mid m = \mu'(S)
\monus 1 \text{ if } q \in \accstates \text{ else } m = \mu'(S) \}$ for all $S\in 2^Q$.
($\monus$ is the \emph{monus} operator, i.e., minus with negative results
saturated to zero.)
Note that the mappings $\mu_t = \{S \mapsto (2|S\setminus
\accstates| \monus 1)\}_{S \in 2^Q}$ and $\infunc(\mu_t)$ are trivial \trub{s}.

The following lemma shows that we can remove from $\algschewe(\aut)$ macrostates
whose ranking is not covered by a~\trub (in particular, we show that the reduced
automaton preserves super-tight runs).

\begin{restatable}{lemma}{theTrubRed}\label{lem:theTrubRed}
	Let $\mu$ be a~\trub and~$\but$ be a~BA
  obtained from $\algschewe(\aut)$ by replacing all occurrences of~$Q_2$ by
  $Q_2' = \{ \sofi \mid f \leq \mu(S) \}$.
  Then, $\langof{\but} = \Sigma^\omega \setminus \langof{\aut}$.
\end{restatable}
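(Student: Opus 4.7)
The plan is to split the equality into the two usual inclusions. For $\langof{\but} \subseteq \Sigma^\omega \setminus \langof{\aut}$, observe that $\but$ is a sub-automaton of $\algschewe(\aut)$: its states form a subset of $Q' = Q_1 \cup Q_2$, its transitions are those of $\algschewe(\aut)$ restricted to this subset, and it inherits the same accepting states (only macrostates in $Q_2'$ that are of the form $(S, \emptyset, f, i)$ together with $\emptyset$). Hence every accepting run of $\but$ is an accepting run of $\algschewe(\aut)$, so by \cref{the:schewe-corr} we get $\langof{\but} \subseteq \langof{\algschewe(\aut)} = \Sigma^\omega \setminus \langof{\aut}$.

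For the converse, the idea is to take a super-tight run guaranteed by the completeness part of Schewe's construction and show it lives entirely in $\but$. Fix $\alpha \notin \langof{\aut}$. The completeness direction in the proof of \cref{the:schewe-corr} yields an accepting super-tight run $\rho = S_0 S_1 \dots S_{\ell-1}(S_\ell, O_\ell, f_\ell, i_\ell)(S_{\ell+1}, O_{\ell+1}, f_{\ell+1}, i_{\ell+1})\dots$ of $\algschewe(\aut)$ on $\alpha$ with $f_k = f^\alpha_k$ for every $k \geq \ell$, where $\ell$ is the first tight level of $\dagw$. Because the $S$-component is obtained by the subset construction starting from $I$, we have $S_k = \levelw(k)$ for every $k \geq 0$.

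Now apply the hypothesis that $\mu$ is a \trub. By \cref{eq:trub} (applied to $\alpha$), possibly increasing $\ell$, we have $\mu(\levelw(k)) \geq f^\alpha_k$ for all $k \geq \ell$. Combining this with $S_k = \levelw(k)$ and $f_k = f^\alpha_k$, we obtain $f_k \leq \mu(S_k)$ for every $k \geq \ell$, so each tight-part macrostate $(S_k, O_k, f_k, i_k)$ of $\rho$ belongs to $Q_2'$. All waiting-part macrostates $S_0, \ldots, S_{\ell-1}$ trivially remain in $Q_1$, so the whole run $\rho$ is a run of $\but$. Since the accepting condition of $\but$ is just that of $\algschewe(\aut)$ restricted to $Q_1 \cup Q_2'$, $\rho$ is accepting in $\but$, giving $\alpha \in \langof{\but}$.

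The main obstacle is not in the argument itself but in justifying that the super-tight run exists and is accepting; this is exactly the content of the completeness direction of Schewe's construction (\cref{the:schewe-corr}), which we may invoke as a black box. Everything else is a routine verification that the \trub constraint on $f$ is preserved step by step along the canonical run whose ranks are taken directly from the run DAG.
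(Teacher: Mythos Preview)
Your proof is correct and follows essentially the same approach as the paper: one inclusion is immediate since $\but$ is a sub-automaton of $\algschewe(\aut)$, and for the other you exhibit the canonical super-tight run (waiting up to a tight level, then tracking the actual run-DAG rankings $f^\alpha_k$) and observe that the \trub bound $\mu(\levelw(k)) \geq f^\alpha_k$ forces every tight-part macrostate into $Q_2'$. The only minor imprecision is that when you write ``possibly increasing $\ell$'' you should also note that the run itself can be adjusted to stay in \waiting until this larger level (which is trivially possible via $\trans_1$); the paper avoids this by taking the \trub's $\ell$ from the start and building the run directly from it.
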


\newcommand{\figelevrules}{
\begin{figure}[t]
	%
  \begin{subfigure}[b]{0.32\linewidth}
		\centering
		\scalebox{0.8}{
	\begin{tikzpicture}[level distance=1.2cm,
              level 1/.style={sibling distance=1.3cm},>=stealth',->]
			\tikzset{every node/.style = {rectangle,draw,rounded corners}}
	    \node (r) [label=90:{$\ell = \max\{ \ell_D, \ell_N + 1, \ell_W \}$},label=180:{$C\colon$}] {$\mathsf{IWA}{:}\ell$}
	        child { node {$\mathsf{D}{:}\ell_D$} }
	        child { node {$\mathsf{N}{:}\ell_N$} }
	        child { node {$\mathsf{IWA}{:}\ell_W$} };
	\end{tikzpicture}}
  \caption{$C$ is $\mathsf{IWA}$}
  \label{fig:elev-iwa}
  \end{subfigure}
  \begin{subfigure}[b]{0.3\linewidth}
		\centering
		\scalebox{0.8}{
	\begin{tikzpicture}[level distance=1.2cm,
              level 1/.style={sibling distance=1.3cm},>=stealth',->]
			\tikzset{every node/.style = {rectangle,draw,rounded corners}}
	    \node (r) [label=90:{$\ell = \max\{ \ell_D + \textdetcircle, \ell_N + 1, \ell_W + \textiwcircle, 2 \}$},label=180:{$C\colon$}] {$\mathsf{D}{:}\ell$}
	        child[draw=orange!80!black,thick] { node[draw=black,thin] {$\mathsf{D}{:}\ell_D$} edge from parent \detcircle }
	        child { node {$\mathsf{N}{:}\ell_N$} }
	        child[draw=blue!80!black,thick] { node[draw=black,thin] {$\mathsf{IWA}{:}\ell_W$} edge from parent \iwcircle };
	\end{tikzpicture}}
  \caption{$C$ is $\mathsf{D}$}
  \label{fig:elev-det}
  \end{subfigure}
	%
  \begin{subfigure}[b]{0.34\linewidth}
		\centering
		\scalebox{0.8}{
      \begin{tikzpicture}[level distance=1.2cm,
              level 1/.style={sibling distance=1.3cm},>=stealth',->]
      		\tikzset{every node/.style = {rectangle,draw,rounded corners}}
          \node (r) [label=90:{$\ell = \max\{ \ell_D + 1, \ell_N, \ell_W + 1 \}$},label=180:{$C\colon$}] {$\mathsf{N}{:}\ell$}
              child { node {$\mathsf{D}{:}\ell_D$} }
              child { node {$\mathsf{N}{:}\ell_N$} }
              child { node {$\mathsf{IWA}{:}\ell_W$} };
      \end{tikzpicture}}
    \caption{$C$ is $\mathsf{N}$}
    \label{fig:elev-nondet}
  \end{subfigure}
  \caption[Rank assignment rules]{Rules for assigning types and rank bounds to
    MSCCs.
    The symbols~\textdetcircle{} and~\textiwcircle{} are interpeted as~$0$ if all the corresponding
    edges from the components having rank~$\ell_D$ and~$\ell_W$, respectively,
    are deterministic; otherwise they are interpreted as~$2$.
    Transitions between two components $C_1$ and $C_2$ are deterministic if the
    BA $(C, \restrof \delta C, \emptyset, \emptyset)$ is deterministic for $C =
    \delta(C_1,\Sigma)\cap (C_1 \cup C_2)$.
    }
	\label{fig:el-rules}
	\vspace*{-4mm}
\end{figure}
}

\newcommand{\figelevstruct}[0]{
\begin{wrapfigure}[5]{r}{4cm}
\vspace*{-4mm}
\hspace*{0mm}
\begin{minipage}{4cm}
	\centering
	\scalebox{0.8}{
	\begin{tikzpicture}[level distance=1.2cm,
              level 1/.style={sibling distance=1.4cm},>=stealth',->]
			\tikzset{every node/.style = {rectangle,draw,rounded corners}}
	    \node (r) [label=180:{$C\colon$}] {$t{:}\ell$}
	        child { node {$\mathsf{D}{:}\ell_D$} }
	        child { node {$\mathsf{N}{:}\ell_N$} }
	        child { node {$\mathsf{IWA}{:}\ell_W$} };
	\end{tikzpicture}}
\end{minipage}
\vspace{-3mm}
\caption{Structure of elevator ranking rules}
\label{fig:elev-rules-struct}
\end{wrapfigure}%
}

\section{Elevator Automata}
\label{sec:elevator}

In this section, we introduce \emph{elevator automata}, which are BAs having
a~particular structure that can be exploited for complementation and
semi-de\-ter\-min\-ization; elevator
automata can be complemented in $\bigO(16^n)$ (cf.\
\cref{lem:elevator_complement_bound}) space instead of $2^{\bigOof{n \log n}}$,
which is the lower bound for unrestricted BAs, and semi-determinized in
$\bigO(2^n)$ instead of $\bigO(4^n)$ (cf.\ 
\ifTR
\cref{sec:elev-semidet}).
\else
\cite{HavlenaLS22TR}).
\fi
The class of elevator automata is quite general: it can be seen as a~substantial
generalization of semi-deterministic BAs
(SDBAs)~\cite{CourcoubetisY88,BlahoudekHSST16}. Intuitively, an elevator
automaton is a BA whose strongly connected components are all either deterministic or
inherently weak.

Let $\aut = (Q, \trans, I, \accstates \cup \acctrans)$.
$C \subseteq Q$ is
a~\emph{strongly connected component} (SCC) of~$\aut$
if for any pair of states $q, q' \in C$ it holds that~$q$ is reachable
from~$q'$ and~$q'$ is reachable from~$q$.
$C$~is \emph{maximal} (MSCC) if it is not a~proper subset of another SCC.
An MSCC~$C$ is \emph{trivial} iff $|C| = 1$ and $\restrof \trans C = \emptyset$.
The~\emph{condensation} of~$\aut$ is the DAG $\condensof \aut = (\M, \E)$
where~$\M$ is the set of~$\aut$'s MSCCs and $\E = \{(C_1, C_2) \mid \exists q_1
\in C_1, \exists q_2 \in C_2, \exists a \in \Sigma\colon q_1 \ltr a q_2 \in
\trans\}$.
An MSCC is \emph{non-accepting} if it contains no accepting state and no
accepting transition, i.e., $C \cap \accstates = \emptyset$ and
$\restrof{\trans}{C} \cap \acctrans = \emptyset$.
The \emph{depth} of $(\M, \E)$
is defined as the number of MSCCs on the longest path in $(\M, \E)$.

We say that an SCC~$C$ is \emph{inherently weak accepting} (IWA) iff \emph{every
cycle} in the transition diagram of~$\aut$ restricted to~$C$ contains an
accepting state or an accepting transition.
$C$~is \emph{inherently weak} if it is either non-accepting or IWA, and~$\aut$ is
inherently weak if all of its MSCCs are inherently weak.
%
%
$\aut$ is \emph{deterministic} iff $|I| \leq 1$ and $|\trans(q, a)| \leq 1$ for
all $q\in Q$ and $a \in \Sigma$.
An~SCC~$C \subseteq Q$ is \emph{deterministic} iff $(C, \restrof \trans C,
\emptyset, \emptyset)$ is deterministic.
$\aut$~is a~\emph{semi-deterministic BA} (SDBA) if $\aut[q]$ is deterministic
for every $q \in \accstates \cup \{p \in Q \mid s
\ltr a p \in \acctrans, s \in Q, a \in \Sigma\}$, i.e., whenever a~run
in~$\aut$ reaches an accepting state or an accepting transition, it can only continue
deterministically.

\newcommand{
\begin{wrapfigure}[14]{r}{5cm}
\vspace*{2mm}
\hspace*{0mm}
\begin{minipage}{5cm}
	\centering
	\scalebox{0.73}{
	\begin{tikzpicture}[->,>=stealth,scale=1,transform shape]
\tikzstyle{every state}=[inner sep=3pt,minimum size=5pt]

  \node[state,  accepting, initial text={}] (q0) {$0$};
  \node[state, node distance=20mm, right of=q0] (q1) {$1$};
  \node[state, accepting, node distance=20mm, below of=q0] (q2) {$2$};
  \node[state, node distance=20mm, right of=q2] (q3) {$3$};
  \node[state, accepting, node distance=20mm, below of=q2] (q4) {$4$};
  \node[state, node distance=20mm, right of=q4] (q5) {$5$};
  \node[above left of=q0,node distance=9mm] (hid) {};

  \draw (hid) edge (q0);

  \draw
  (q0) edge[bend left] node [scale=0.8, above] {$\neg a$} (q1)
  (q0) edge[loop above] node [scale=0.8, above] {$a$} (q0)
  (q1) edge[loop above] node [scale=0.8, above] {$\neg a$} (q1)
  (q1) edge[bend left] node [scale=0.8, above] {$a$} (q0)

  (q2) edge[bend left] node [scale=0.8, above] {$\neg b$} (q3)
  (q2) edge[loop left] node [scale=0.8, below, pos=0.1] {$b$} (q2)
  (q3) edge[loop right] node [scale=0.8, below, pos=0.9] {$\neg b$} (q3)
  (q3) edge[bend left] node [scale=0.8, above] {$b$} (q2)

  (q4) edge[bend left] node [scale=0.8, above] {$\neg c$} (q5)
  (q4) edge[loop below] node [scale=0.8, below] {$c$} (q4)
  (q5) edge[loop below] node [scale=0.8, below] {$\neg c$} (q5)
  (q5) edge[bend left] node [scale=0.8, above] {$c$} (q4)

  (q0) edge[left] node [scale=0.8, left, pos=0.4] {$\neg a \wedge b$} (q2)
  (q1) edge[right] node [scale=0.8, right, pos=0.4] {$\neg a \wedge b$} (q2)

  (q2) edge[left] node [scale=0.8, left] {$\neg b \wedge c$} (q4)
  (q3) edge[right] node [scale=0.8, right] {$\neg b \wedge c$} (q4)
  ;

  \draw [->] (q0) to [out=180,in=90] ($(q2)-(1,0)$) node[left, scale=0.8, yshift=10mm,xshift=-3mm,rotate=90] {$\neg a \wedge \neg b \wedge c$} to[out=-90, in=180] (q4);
  \draw [->] (q1) to [out=0,in=0] ($(q5)+(0.3,-1)$) node[right, scale=0.8, xshift=20mm,yshift=19mm,rotate=90] {$\neg a \wedge \neg b \wedge c$} to[out=180, in=-60] (q4);

  \begin{pgfonlayer}{background}
    \node[draw,dashed,rectangle,fill=pink!50,draw=black!70,rounded corners=8pt,inner sep=15pt,fit=(q0) (q1), yshift=3mm] () {};
    \node[draw,dashed,rectangle,fill=pink!50,draw=black!70,rounded corners=8pt,inner sep=15pt,fit=(q2) (q3), yshift=0.1mm] () {};
    \node[draw,dashed,rectangle,fill=pink!50,draw=black!70,rounded corners=8pt,inner sep=15pt,fit=(q4) (q5), yshift=-1mm] () {};
  \end{pgfonlayer}

  \node[above of=q0,node distance=8mm, xshift=32mm, yshift=1mm] {\emph{det}};
  \node[above of=q2,node distance=8mm, xshift=32mm, yshift=-1mm] {\emph{det}};
  \node[above of=q4,node distance=8mm, xshift=32mm, yshift=-2mm] {\emph{det}};
\end{tikzpicture}
	}
\end{minipage}
\vspace{-3mm}
\caption{The BA for LTL formula $\lG\lF(a \vee \lG\lF(b \vee \lG\lF c))$ is elevator}
\label{fig:elev-ltl-example}
\end{wrapfigure}%
}[0]{
\begin{wrapfigure}[14]{r}{5cm}
\vspace*{2mm}
\hspace*{0mm}
\begin{minipage}{5cm}
	\centering
	\scalebox{0.73}{
	\begin{tikzpicture}[->,>=stealth,scale=1,transform shape]
\tikzstyle{every state}=[inner sep=3pt,minimum size=5pt]

  \node[state,  accepting, initial text={}] (q0) {$0$};
  \node[state, node distance=20mm, right of=q0] (q1) {$1$};
  \node[state, accepting, node distance=20mm, below of=q0] (q2) {$2$};
  \node[state, node distance=20mm, right of=q2] (q3) {$3$};
  \node[state, accepting, node distance=20mm, below of=q2] (q4) {$4$};
  \node[state, node distance=20mm, right of=q4] (q5) {$5$};
  \node[above left of=q0,node distance=9mm] (hid) {};

  \draw (hid) edge (q0);

  \draw
  (q0) edge[bend left] node [scale=0.8, above] {$\neg a$} (q1)
  (q0) edge[loop above] node [scale=0.8, above] {$a$} (q0)
  (q1) edge[loop above] node [scale=0.8, above] {$\neg a$} (q1)
  (q1) edge[bend left] node [scale=0.8, above] {$a$} (q0)

  (q2) edge[bend left] node [scale=0.8, above] {$\neg b$} (q3)
  (q2) edge[loop left] node [scale=0.8, below, pos=0.1] {$b$} (q2)
  (q3) edge[loop right] node [scale=0.8, below, pos=0.9] {$\neg b$} (q3)
  (q3) edge[bend left] node [scale=0.8, above] {$b$} (q2)

  (q4) edge[bend left] node [scale=0.8, above] {$\neg c$} (q5)
  (q4) edge[loop below] node [scale=0.8, below] {$c$} (q4)
  (q5) edge[loop below] node [scale=0.8, below] {$\neg c$} (q5)
  (q5) edge[bend left] node [scale=0.8, above] {$c$} (q4)

  (q0) edge[left] node [scale=0.8, left, pos=0.4] {$\neg a \wedge b$} (q2)
  (q1) edge[right] node [scale=0.8, right, pos=0.4] {$\neg a \wedge b$} (q2)

  (q2) edge[left] node [scale=0.8, left] {$\neg b \wedge c$} (q4)
  (q3) edge[right] node [scale=0.8, right] {$\neg b \wedge c$} (q4)
  ;

  \draw [->] (q0) to [out=180,in=90] ($(q2)-(1,0)$) node[left, scale=0.8, yshift=10mm,xshift=-3mm,rotate=90] {$\neg a \wedge \neg b \wedge c$} to[out=-90, in=180] (q4);
  \draw [->] (q1) to [out=0,in=0] ($(q5)+(0.3,-1)$) node[right, scale=0.8, xshift=20mm,yshift=19mm,rotate=90] {$\neg a \wedge \neg b \wedge c$} to[out=180, in=-60] (q4);

  \begin{pgfonlayer}{background}
    \node[draw,dashed,rectangle,fill=pink!50,draw=black!70,rounded corners=8pt,inner sep=15pt,fit=(q0) (q1), yshift=3mm] () {};
    \node[draw,dashed,rectangle,fill=pink!50,draw=black!70,rounded corners=8pt,inner sep=15pt,fit=(q2) (q3), yshift=0.1mm] () {};
    \node[draw,dashed,rectangle,fill=pink!50,draw=black!70,rounded corners=8pt,inner sep=15pt,fit=(q4) (q5), yshift=-1mm] () {};
  \end{pgfonlayer}

  \node[above of=q0,node distance=8mm, xshift=32mm, yshift=1mm] {\emph{det}};
  \node[above of=q2,node distance=8mm, xshift=32mm, yshift=-1mm] {\emph{det}};
  \node[above of=q4,node distance=8mm, xshift=32mm, yshift=-2mm] {\emph{det}};
\end{tikzpicture}
	}
\end{minipage}
\vspace{-3mm}
\caption{The BA for LTL formula $\lG\lF(a \vee \lG\lF(b \vee \lG\lF c))$ is elevator}
\label{fig:elev-ltl-example}
\end{wrapfigure}%
}

\begin{wrapfigure}[14]{r}{5cm}
\vspace*{2mm}
\hspace*{0mm}
\begin{minipage}{5cm}
	\centering
	\scalebox{0.73}{
	\begin{tikzpicture}[->,>=stealth,scale=1,transform shape]
\tikzstyle{every state}=[inner sep=3pt,minimum size=5pt]

  \node[state,  accepting, initial text={}] (q0) {$0$};
  \node[state, node distance=20mm, right of=q0] (q1) {$1$};
  \node[state, accepting, node distance=20mm, below of=q0] (q2) {$2$};
  \node[state, node distance=20mm, right of=q2] (q3) {$3$};
  \node[state, accepting, node distance=20mm, below of=q2] (q4) {$4$};
  \node[state, node distance=20mm, right of=q4] (q5) {$5$};
  \node[above left of=q0,node distance=9mm] (hid) {};

  \draw (hid) edge (q0);

  \draw
  (q0) edge[bend left] node [scale=0.8, above] {$\neg a$} (q1)
  (q0) edge[loop above] node [scale=0.8, above] {$a$} (q0)
  (q1) edge[loop above] node [scale=0.8, above] {$\neg a$} (q1)
  (q1) edge[bend left] node [scale=0.8, above] {$a$} (q0)

  (q2) edge[bend left] node [scale=0.8, above] {$\neg b$} (q3)
  (q2) edge[loop left] node [scale=0.8, below, pos=0.1] {$b$} (q2)
  (q3) edge[loop right] node [scale=0.8, below, pos=0.9] {$\neg b$} (q3)
  (q3) edge[bend left] node [scale=0.8, above] {$b$} (q2)

  (q4) edge[bend left] node [scale=0.8, above] {$\neg c$} (q5)
  (q4) edge[loop below] node [scale=0.8, below] {$c$} (q4)
  (q5) edge[loop below] node [scale=0.8, below] {$\neg c$} (q5)
  (q5) edge[bend left] node [scale=0.8, above] {$c$} (q4)

  (q0) edge[left] node [scale=0.8, left, pos=0.4] {$\neg a \wedge b$} (q2)
  (q1) edge[right] node [scale=0.8, right, pos=0.4] {$\neg a \wedge b$} (q2)

  (q2) edge[left] node [scale=0.8, left] {$\neg b \wedge c$} (q4)
  (q3) edge[right] node [scale=0.8, right] {$\neg b \wedge c$} (q4)
  ;

  \draw [->] (q0) to [out=180,in=90] ($(q2)-(1,0)$) node[left, scale=0.8, yshift=10mm,xshift=-3mm,rotate=90] {$\neg a \wedge \neg b \wedge c$} to[out=-90, in=180] (q4);
  \draw [->] (q1) to [out=0,in=0] ($(q5)+(0.3,-1)$) node[right, scale=0.8, xshift=20mm,yshift=19mm,rotate=90] {$\neg a \wedge \neg b \wedge c$} to[out=180, in=-60] (q4);

  \begin{pgfonlayer}{background}
    \node[draw,dashed,rectangle,fill=pink!50,draw=black!70,rounded corners=8pt,inner sep=15pt,fit=(q0) (q1), yshift=3mm] () {};
    \node[draw,dashed,rectangle,fill=pink!50,draw=black!70,rounded corners=8pt,inner sep=15pt,fit=(q2) (q3), yshift=0.1mm] () {};
    \node[draw,dashed,rectangle,fill=pink!50,draw=black!70,rounded corners=8pt,inner sep=15pt,fit=(q4) (q5), yshift=-1mm] () {};
  \end{pgfonlayer}

  \node[above of=q0,node distance=8mm, xshift=32mm, yshift=1mm] {\emph{det}};
  \node[above of=q2,node distance=8mm, xshift=32mm, yshift=-1mm] {\emph{det}};
  \node[above of=q4,node distance=8mm, xshift=32mm, yshift=-2mm] {\emph{det}};
\end{tikzpicture}
	}
\end{minipage}
\vspace{-3mm}
\caption{The BA for LTL formula $\lG\lF(a \vee \lG\lF(b \vee \lG\lF c))$ is elevator}
\label{fig:elev-ltl-example}
\end{wrapfigure}%

$\aut$~is an \emph{elevator (B\"{u}chi) automaton} iff for every MSCC~$C$
of~$\aut$ it holds that $C$ is
\begin{inparaenum}[(i)]
  \item  deterministic,
  \item  IWA, or
  \item  non-accepting.
\end{inparaenum}
In other words, a~BA is an elevator automaton iff every nondeterministic SCC of~$\aut$ that
contains an accepting state or transition is inherently weak.
%
An example of an elevator automaton obtained from the LTL formula $\lG\lF(a \vee
\lG\lF(b \vee \lG\lF c))$ is shown in \cref{fig:elev-ltl-example}.
The BA consists of three connected deterministic components.
Note that the automaton is neither semi-deterministic nor unambiguous.

The rank of an elevator automaton~$\aut$ does not depend on the number of
states (as in general BAs), but only on the number of MSCCs and the depth
of~$\condensof \aut$.
In the worst case, $\aut$ consists of a chain of deterministic components,
yielding the upper bound on the rank of elevator automata given in the
following lemma.

\begin{restatable}{lemma}{theLemElevRank}\label{lem:elevator-rank}
	Let~$\aut$ be an elevator automaton such that its condensation has the depth~$d$.
	Then $\rankof \aut \leq 2d$.
\end{restatable}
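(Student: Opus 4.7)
My plan is to define, for an MSCC $C$ of~$\aut$, its \emph{depth} $\dpthof{C}$ as the number of MSCCs on the longest path in~$\condensof{\aut}$ starting at~$C$; so $C$ is a sink of the condensation (has no outgoing condensation edges) iff $\dpthof{C} = 1$, and in general $\dpthof{C} \leq d$. I will prove by induction on $k$ the strengthening that for every word $\alpha \notin \langof{\aut}$ and every vertex $(q, i) \in \dagw$ with $q$ lying in an MSCC of depth~$k$, the rank satisfies $\rankwof{(q,i)} \leq 2k$. The lemma then follows immediately.

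For the base case $k = 1$, I analyze each of the three possible types of a sink MSCC~$C$. If $C$ is non-accepting, all descendants of a vertex of~$C$ stay in~$C$ (since $C$ has no outgoing transitions) and thus can never reach an accepting vertex or edge; hence such a vertex is either finite (rank $0$) or endangered (rank~$1$). If $C$ is IWA, every infinite descendant path lies in~$C$, and because each cycle of~$C$ contains an accepting state or transition, any such path would witness $\alpha \in \langof{\aut}$, contradicting $\alpha \notin \langof{\aut}$; so no vertex of~$C$ occurs in~$\dagw$. If $C$ is deterministic, then $|\trans(q,a) \cap C| \leq 1$ for every $q \in C$ and $a \in \Sigma$, so descendants of any vertex of~$C$ form a single path that is necessarily non-accepting; past its last accepting vertex/edge this tail is endangered (rank~$1$), after whose removal the remaining initial segment becomes finite (rank~$2$).

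For the inductive step, I use that after round $k - 1$ of the rank-assignment procedure, the inductive hypothesis guarantees that every vertex from an MSCC of depth at most $k - 1$ has already been assigned a rank (bounded by $2(k-1)$) and removed; hence $\dagwiof{2k-2}$ contains only vertices from MSCCs of depth at least~$k$. The key observation is that a depth-$k$ MSCC~$C$ has condensation edges only to MSCCs of depth strictly less than~$k$ (depth strictly decreases along condensation edges), and there are no condensation edges between two distinct depth-$k$ MSCCs; hence in~$\dagwiof{2k-2}$ the MSCC~$C$ behaves as a sink, i.e., descendants of a vertex of~$C$ in this sub-DAG stay inside~$C$. The three case analyses from the base case then transfer verbatim (shifted by $2k - 2$): IWA~$C$ would again generate an accepting path in~$\dagw$, so its vertices become finite in~$\dagwiof{2k-2}$ and receive rank at most $2k-2$; non-accepting~$C$ yields endangered vertices of rank at most~$2k-1$; deterministic~$C$ yields the same two-phase argument, giving rank at most~$2k$.

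The step I expect to require the most care is the reduction in the inductive step, specifically the argument that in~$\dagwiof{2k-2}$ each depth-$k$ MSCC truly behaves as a sink---this hinges on the fact that condensation edges strictly decrease depth, so there is no cross-traffic between distinct depth-$k$ MSCCs, and all edges out of~$C$ lead to MSCCs already removed. Once this is established, the two-phase argument for the deterministic case (endangered tail at odd rank, finite initial segment at the next even rank) explains precisely why each MSCC can inflate the rank by at most~$2$. Combining, $\rankof{\word} \leq 2d$ for every $\word \notin \langof{\aut}$, and hence $\rankof{\aut} \leq 2d$.
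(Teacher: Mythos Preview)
Your approach is correct in substance and genuinely differs from the paper's. The paper derives the lemma as a corollary of the MSCC-labeling algorithm: it appeals to Lemma~\ref{lem:elevator-corr} (the labels computed by the rules in \cref{fig:el-rules} form a~\trub) and then simply observes that the worst case---a~chain of deterministic MSCCs connected by nondeterministic transitions, handled by the rule in \cref{fig:elev-det}---increments the label by~$2$ per component, yielding~$2d$. Your argument instead runs the induction directly on the run DAG, showing that every vertex in a depth-$k$ MSCC receives rank at most~$2k$. Your proof is self-contained (it does not need the labeling machinery), whereas the paper's is a two-line corollary that offloads all the work to Lemma~\ref{lem:elevator-corr}; the case analysis you carry out is in fact essentially the content of that lemma's proof, repackaged.

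Two imprecisions to fix. First, in the base case for an IWA sink~$C$, the conclusion ``no vertex of~$C$ occurs in~$\dagw$'' is false: such vertices can certainly occur; what your argument actually shows (and what you correctly state in the inductive step) is that every such vertex is \emph{finite}, hence has rank~$0$. Second, the sentence ``$\dagwiof{2k-2}$ contains only vertices from MSCCs of depth at least~$k$'' is off by one: the inductive hypothesis gives rank $\leq 2k-2$ for depth-${<}k$ vertices, so those with rank exactly~$2k-2$ still sit in~$\dagwiof{2k-2}$ (as finite vertices) and are only guaranteed gone in~$\dagwiof{2k-1}$. What you actually need---and what suffices for all three cases---is the weaker statement that in~$\dagwiof{2k-2}$ all remaining depth-${<}k$ vertices are \emph{finite}, so every infinite path from a $C$-vertex in~$\dagwiof{2k-2}$ must stay in~$C$. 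With that phrasing the IWA and non-accepting cases go through as written, and the deterministic two-phase argument (endangered tail at rank~$2k-1$ in~$\dagwiof{2k-1}$, then finite prefix at rank~$2k$ in~$\dagwiof{2k}$) is clean.
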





\vspace{-4.0mm}
\subsection{Refined Ranks for Elevator Automata}\label{sec:ref-ranks-elev}
\vspace{-0.0mm}

Notice that the upper bound on ranks provided by \cref{lem:elevator-rank}
can still be too coarse.
For instance, for an SDBA with three linearly ordered MSCCs such that the first
two MSCCs are non-accepting and the last one is
deterministic accepting, the lemma gives us an upper bound on the rank~$6$, while it is known
that every SDBA has the rank at most~$3$ (cf.~\cite{BlahoudekHSST16}).
Another examples might be two deterministic non-trivial MSCCs connected by a~path of
trivial MSCCs, which can be assigned the same rank.

Instead of refining the definition of elevator automata into some quite complex
list of constraints, we rather provide an algorithm that performs a~traversal
through~$\condensof \aut$ and assigns each MSCC a~label of the form
\scclabel{\emph{type}}{\emph{rank}} that contains
\begin{inparaenum}[(i)]
  \item  a~type and
  \item  a~bound on the maximum rank of states in the component.
\end{inparaenum}
The types of MSCCs that we consider are the following:

\vspace{-2mm}
\begin{description}
  \item[$\mathsf{T}$:] \emph{trivial} components,
  \item[$\mathsf{IWA}$:] \emph{inherently weak accepting} components,
  \item[$\mathsf{D}$:] \emph{deterministic} (potentially accepting) components, and
  \item[$\mathsf{N}$:] \emph{non-accepting} components.
\end{description}
\vspace{-2mm}

Note that the type in an MSCC is not given \emph{a priori} but is
determined by the algorithm (this is because for deterministic non-accepting
components, it is sometimes better to treated them as~$\mathsf{D}$ and
sometimes as~$\mathsf{N}$, depending on their neighbourhood).
In~the following, we assume that~$\aut$ is an elevator automaton without
useless states and, moreover, all accepting conditions on states and
transitions not inside non-trivial MSCCs are removed (any BA can be easily transformed into this
form).

We start with terminal MSCCs~$C$, i.e., MSCCs that cannot reach any other MSCC:
\begin{enumerate}
  \item[\textbf{T1}:]  If $C$~is IWA, then we label it
    with \scclabel{$\mathsf{IWA}$} 0.
  \item[\textbf{T2}:]  Else if $C$~is deterministic accepting, we label it with
    \scclabel{$\mathsf{D}$} 2.
\end{enumerate}

\newcommand{
\begin{wrapfigure}[12]{r}{6.2cm}
\vspace*{-8mm}
\hspace*{-2mm}
\begin{minipage}{6.4cm}
	\centering
	\scalebox{0.9}{
	\begin{tikzpicture}[->,>=stealth',shorten >=0pt,auto,node distance=1.8cm,
                    scale=0.6,transform shape,initial text={}]
  \tikzstyle{every state}=[inner sep=3pt,minimum size=5pt,rectangle,rounded corners=1mm]
  \tikzstyle{empty}=[]
  \tikzstyle{initstate}=[fill=yellow!30]
  \tikzstyle{wobbly}=[decorate, decoration={snake,amplitude=.2mm,segment length=2mm,post length=1mm}]

  \node[state,initial above,initstate] (r) {$\{r\}$};
  \node[state, right of=r,node distance=25mm] (rst) {$\{r,s,t\}$};
  \node[state, right of=rst,node distance=25mm] (st) {$\{s,t\}$};
  \node[state, accepting, above of=rst] (em) {$\emptyset$};

  \node[state, accepting, left of=r, node distance=32mm, yshift=-20, draw=black, text=black] (r1) {$\big(\{r{:}3, s{:}0, t{:}1\}, \emptyset\big)$};
  \node[state, accepting, left of=r, node distance=32mm, yshift=10, draw=black, text=black] (r2) {$\big(\{r{:}3, s{:}2, t{:}1\}, \emptyset\big)$};
  \node[state, accepting, left of=r, node distance=32mm, yshift=40, draw=black, text=black] (r3) {$\big(\{r{:}1, s{:}2, t{:}3\}, \emptyset\big)$};

  \node[state, accepting, below of=r,xshift=0mm] (r4) {$\big(\{r{:}1, s{:}0, t{:}0\}, \emptyset\big)$};
  \node[state, accepting, right of=r4,xshift=14mm] (r7) {$\big(\{r{:}1, s{:}0, t{:}1\}, \emptyset\big)$};

  \node[state, accepting, below of=r4, node distance=14mm] (r5) {$\big(\{r{:}1\}, \emptyset\big)$};
  \node[state, below of=r5, node distance=14mm] (r6) {$\big(\{r{:}1,s{:}0, t{:}0\}, \{ s,t \}\big)$};

  \node[state, accepting, right of=r7,xshift=10mm, draw=black, text=black] (r8)  {$\big(\{s{:}0, t{:}1\}, \emptyset\big)$};



  \path (r) edge[bend left]  node[above] {$b$} (r1)
        (r) edge  node[above] {$b$} (r2)
        (r) edge[bend right]  node[above] {$b$} (r3)
        (r) edge  node[left] {$b$} (r4)
        (r) edge[bend right=5]  node[above] {$b$} (r7)
        (r) edge[bend left]  node[above] {$b$} (rst)
        (r) edge[bend left]  node[above] {$a$} (em)
        (rst) edge  node[above] {$a$} (r8)
        (rst) edge[bend left=20]  node[above] {$b$} (r)
        (rst) edge  node[above] {$a$} (st)
        (st) edge  node[above] {$a$} (r8)
        (st) edge [loop above] node {$a$} (st)
        (st) edge [bend right,above] node {$b$} (em)
        (em) edge [loop above] node {$a,b$} (em)
        (r4) edge  node[left] {$b$} (r5)
        (r7) edge[bend left]  node[above] {$b$} (r5)
        (r5) edge[bend left] node[right] {$b$} (r6)
        (r6) edge[bend left] node[left] {$b$} (r5);

  %
  %
  %

  \begin{pgfonlayer}{background}
    \draw[-,dashed,rectangle,fill=Yellow!20,draw=black!70,rounded corners=5pt,inner sep=10pt]
      ([xshift=-4mm]r4.west) |- ([yshift=9mm]em.north) -| ([xshift=3mm,yshift=-3mm]st.south east) -| ([xshift=4mm]r4.east) |-
      ([xshift=-3mm,yshift=-3mm]r6.south) -| ([xshift=-4mm]r4.north west);
  \end{pgfonlayer}

  \begin{pgfonlayer}{background}
  \node[rectangle,rounded corners=8pt, opacity=0.1,pattern={hatch[hatch size= 8pt, hatch linewidth=3pt, hatch angle=10]},inner sep=5pt,fit=(r1) (r2) (r3)] (useless1) {};
  \node[rectangle,rounded corners=8pt, opacity=0.1,pattern={hatch[hatch size= 8pt, hatch linewidth=3pt, hatch angle=10]},inner sep=5pt,fit=(r8)] (useless2) {};
  \end{pgfonlayer}

\end{tikzpicture}
	}
\end{minipage}
\vspace{-3mm}
\caption{A part of $\algschewe(\aut)$. The \trub computed by elevator
  rules is used to prune states outside the yellow area. }
\label{fig:compl-elev}
\end{wrapfigure}%
}[0]{
\begin{wrapfigure}[12]{r}{6.2cm}
\vspace*{-8mm}
\hspace*{-2mm}
\begin{minipage}{6.4cm}
	\centering
	\scalebox{0.9}{
	\begin{tikzpicture}[->,>=stealth',shorten >=0pt,auto,node distance=1.8cm,
                    scale=0.6,transform shape,initial text={}]
  \tikzstyle{every state}=[inner sep=3pt,minimum size=5pt,rectangle,rounded corners=1mm]
  \tikzstyle{empty}=[]
  \tikzstyle{initstate}=[fill=yellow!30]
  \tikzstyle{wobbly}=[decorate, decoration={snake,amplitude=.2mm,segment length=2mm,post length=1mm}]

  \node[state,initial above,initstate] (r) {$\{r\}$};
  \node[state, right of=r,node distance=25mm] (rst) {$\{r,s,t\}$};
  \node[state, right of=rst,node distance=25mm] (st) {$\{s,t\}$};
  \node[state, accepting, above of=rst] (em) {$\emptyset$};

  \node[state, accepting, left of=r, node distance=32mm, yshift=-20, draw=black, text=black] (r1) {$\big(\{r{:}3, s{:}0, t{:}1\}, \emptyset\big)$};
  \node[state, accepting, left of=r, node distance=32mm, yshift=10, draw=black, text=black] (r2) {$\big(\{r{:}3, s{:}2, t{:}1\}, \emptyset\big)$};
  \node[state, accepting, left of=r, node distance=32mm, yshift=40, draw=black, text=black] (r3) {$\big(\{r{:}1, s{:}2, t{:}3\}, \emptyset\big)$};

  \node[state, accepting, below of=r,xshift=0mm] (r4) {$\big(\{r{:}1, s{:}0, t{:}0\}, \emptyset\big)$};
  \node[state, accepting, right of=r4,xshift=14mm] (r7) {$\big(\{r{:}1, s{:}0, t{:}1\}, \emptyset\big)$};

  \node[state, accepting, below of=r4, node distance=14mm] (r5) {$\big(\{r{:}1\}, \emptyset\big)$};
  \node[state, below of=r5, node distance=14mm] (r6) {$\big(\{r{:}1,s{:}0, t{:}0\}, \{ s,t \}\big)$};

  \node[state, accepting, right of=r7,xshift=10mm, draw=black, text=black] (r8)  {$\big(\{s{:}0, t{:}1\}, \emptyset\big)$};



  \path (r) edge[bend left]  node[above] {$b$} (r1)
        (r) edge  node[above] {$b$} (r2)
        (r) edge[bend right]  node[above] {$b$} (r3)
        (r) edge  node[left] {$b$} (r4)
        (r) edge[bend right=5]  node[above] {$b$} (r7)
        (r) edge[bend left]  node[above] {$b$} (rst)
        (r) edge[bend left]  node[above] {$a$} (em)
        (rst) edge  node[above] {$a$} (r8)
        (rst) edge[bend left=20]  node[above] {$b$} (r)
        (rst) edge  node[above] {$a$} (st)
        (st) edge  node[above] {$a$} (r8)
        (st) edge [loop above] node {$a$} (st)
        (st) edge [bend right,above] node {$b$} (em)
        (em) edge [loop above] node {$a,b$} (em)
        (r4) edge  node[left] {$b$} (r5)
        (r7) edge[bend left]  node[above] {$b$} (r5)
        (r5) edge[bend left] node[right] {$b$} (r6)
        (r6) edge[bend left] node[left] {$b$} (r5);

  %
  %
  %

  \begin{pgfonlayer}{background}
    \draw[-,dashed,rectangle,fill=Yellow!20,draw=black!70,rounded corners=5pt,inner sep=10pt]
      ([xshift=-4mm]r4.west) |- ([yshift=9mm]em.north) -| ([xshift=3mm,yshift=-3mm]st.south east) -| ([xshift=4mm]r4.east) |-
      ([xshift=-3mm,yshift=-3mm]r6.south) -| ([xshift=-4mm]r4.north west);
  \end{pgfonlayer}

  \begin{pgfonlayer}{background}
  \node[rectangle,rounded corners=8pt, opacity=0.1,pattern={hatch[hatch size= 8pt, hatch linewidth=3pt, hatch angle=10]},inner sep=5pt,fit=(r1) (r2) (r3)] (useless1) {};
  \node[rectangle,rounded corners=8pt, opacity=0.1,pattern={hatch[hatch size= 8pt, hatch linewidth=3pt, hatch angle=10]},inner sep=5pt,fit=(r8)] (useless2) {};
  \end{pgfonlayer}

\end{tikzpicture}
	}
\end{minipage}
\vspace{-3mm}
\caption{A part of $\algschewe(\aut)$. The \trub computed by elevator
  rules is used to prune states outside the yellow area. }
\label{fig:compl-elev}
\end{wrapfigure}%
}

\figelevrules

\figelevstruct

\noindent
(Note that the previous two options are complete due to our requirements on the
structure of~$\aut$.)
When all terminal MSCCs are labelled, we proceed through $\condensof \aut$,
inductively on its structure, and label non-terminal components~$C$ based on
the rules defined below.

\begin{wrapfigure}[12]{r}{6.2cm}
\vspace*{-8mm}
\hspace*{-2mm}
\begin{minipage}{6.4cm}
	\centering
	\scalebox{0.9}{
	\begin{tikzpicture}[->,>=stealth',shorten >=0pt,auto,node distance=1.8cm,
                    scale=0.6,transform shape,initial text={}]
  \tikzstyle{every state}=[inner sep=3pt,minimum size=5pt,rectangle,rounded corners=1mm]
  \tikzstyle{empty}=[]
  \tikzstyle{initstate}=[fill=yellow!30]
  \tikzstyle{wobbly}=[decorate, decoration={snake,amplitude=.2mm,segment length=2mm,post length=1mm}]

  \node[state,initial above,initstate] (r) {$\{r\}$};
  \node[state, right of=r,node distance=25mm] (rst) {$\{r,s,t\}$};
  \node[state, right of=rst,node distance=25mm] (st) {$\{s,t\}$};
  \node[state, accepting, above of=rst] (em) {$\emptyset$};

  \node[state, accepting, left of=r, node distance=32mm, yshift=-20, draw=black, text=black] (r1) {$\big(\{r{:}3, s{:}0, t{:}1\}, \emptyset\big)$};
  \node[state, accepting, left of=r, node distance=32mm, yshift=10, draw=black, text=black] (r2) {$\big(\{r{:}3, s{:}2, t{:}1\}, \emptyset\big)$};
  \node[state, accepting, left of=r, node distance=32mm, yshift=40, draw=black, text=black] (r3) {$\big(\{r{:}1, s{:}2, t{:}3\}, \emptyset\big)$};

  \node[state, accepting, below of=r,xshift=0mm] (r4) {$\big(\{r{:}1, s{:}0, t{:}0\}, \emptyset\big)$};
  \node[state, accepting, right of=r4,xshift=14mm] (r7) {$\big(\{r{:}1, s{:}0, t{:}1\}, \emptyset\big)$};

  \node[state, accepting, below of=r4, node distance=14mm] (r5) {$\big(\{r{:}1\}, \emptyset\big)$};
  \node[state, below of=r5, node distance=14mm] (r6) {$\big(\{r{:}1,s{:}0, t{:}0\}, \{ s,t \}\big)$};

  \node[state, accepting, right of=r7,xshift=10mm, draw=black, text=black] (r8)  {$\big(\{s{:}0, t{:}1\}, \emptyset\big)$};



  \path (r) edge[bend left]  node[above] {$b$} (r1)
        (r) edge  node[above] {$b$} (r2)
        (r) edge[bend right]  node[above] {$b$} (r3)
        (r) edge  node[left] {$b$} (r4)
        (r) edge[bend right=5]  node[above] {$b$} (r7)
        (r) edge[bend left]  node[above] {$b$} (rst)
        (r) edge[bend left]  node[above] {$a$} (em)
        (rst) edge  node[above] {$a$} (r8)
        (rst) edge[bend left=20]  node[above] {$b$} (r)
        (rst) edge  node[above] {$a$} (st)
        (st) edge  node[above] {$a$} (r8)
        (st) edge [loop above] node {$a$} (st)
        (st) edge [bend right,above] node {$b$} (em)
        (em) edge [loop above] node {$a,b$} (em)
        (r4) edge  node[left] {$b$} (r5)
        (r7) edge[bend left]  node[above] {$b$} (r5)
        (r5) edge[bend left] node[right] {$b$} (r6)
        (r6) edge[bend left] node[left] {$b$} (r5);

  %
  %
  %

  \begin{pgfonlayer}{background}
    \draw[-,dashed,rectangle,fill=Yellow!20,draw=black!70,rounded corners=5pt,inner sep=10pt]
      ([xshift=-4mm]r4.west) |- ([yshift=9mm]em.north) -| ([xshift=3mm,yshift=-3mm]st.south east) -| ([xshift=4mm]r4.east) |-
      ([xshift=-3mm,yshift=-3mm]r6.south) -| ([xshift=-4mm]r4.north west);
  \end{pgfonlayer}

  \begin{pgfonlayer}{background}
  \node[rectangle,rounded corners=8pt, opacity=0.1,pattern={hatch[hatch size= 8pt, hatch linewidth=3pt, hatch angle=10]},inner sep=5pt,fit=(r1) (r2) (r3)] (useless1) {};
  \node[rectangle,rounded corners=8pt, opacity=0.1,pattern={hatch[hatch size= 8pt, hatch linewidth=3pt, hatch angle=10]},inner sep=5pt,fit=(r8)] (useless2) {};
  \end{pgfonlayer}

\end{tikzpicture}
	}
\end{minipage}
\vspace{-3mm}
\caption{A part of $\algschewe(\aut)$. The \trub computed by elevator
  rules is used to prune states outside the yellow area. }
\label{fig:compl-elev}
\end{wrapfigure}%

The rules are of the form that uses the structure depicted
in~\cref{fig:elev-rules-struct}, where children nodes denote already processed MSCCs.
In particular, a~child node of the form \scclabel{$k$}{$\ell_k$} denotes an
aggregate node of \emph{all} siblings of the type~$k$ with $\ell_k$ being
the maximum rank of these siblings.
Moreover, we use $\typemax \{\exprof D, \exprof N, \exprof W\}$ to denote the
type~$j \in \{\mathsf{D, N, IWA}\}$ for which $\exprof j = \max\{\exprof D,
\exprof N, \exprof W\}$ where~$e_i$ is an expression containing~$\ell_i$ (if
there are more such types, $j$~is chosen arbitrarily).
The rules for assigning a~type~$t$ and a~rank~$\ell$ to~$C$ are the following:
\begin{enumerate}
  \item[\textbf{I1}:]  If~$C$ is trivial, we set $t = \typemax\{\ell_D, \ell_N,
    \ell_W\}$ and $\ell = \max\{\ell_D, \ell_N, \ell_W\}$.
  \item[\textbf{I2}:]  Else if~$C$ is IWA, we use the rule in \cref{fig:elev-iwa}.
  \item[\textbf{I3}:]  Else if~$C$ is deterministic accepting, we use the rule in
    \cref{fig:elev-det}.
  \item[\textbf{I4}:]  Else if~$C$ is deterministic and non-accepting, we try
    both rules from \cref{fig:elev-det,fig:elev-nondet} and pick the rule that
    gives us a~smaller rank.
  \item[\textbf{I5}:]  Else if $C$~is nondeterministic and non-accepting, we use
    the rule in \cref{fig:elev-nondet}.
\end{enumerate}

\noindent
Then, for every MSCC~$C$ of~$\aut$, we assign each of its states the rank
of~$C$. We use $\chi\colon Q \to \omega$ to denote the rank bounds computed by
the procedure above. 

\begin{restatable}{lemma}{theLemElevCorr}\label{lem:elevator-corr}
	$\chi$ is a \trub.
\end{restatable}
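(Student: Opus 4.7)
The plan is a bottom-up induction on $\condensof \aut$, processing MSCCs in reverse topological order. First I would fix an arbitrary $\word \notin \langof{\aut}$; by \cref{lemma:maxRank} and the standard construction of the run DAG, there exists some $\ell_0 \in \omega$ with $\levelw(\ell_0)$ tight. The claim that $\chi$ is a \trub then reduces to showing that for each MSCC $C$, each $k \geq \ell_0$, and each $q \in C \cap \levelwof{k}$, the rank $\rankwof{(q,k)}$ is at most $\chi(q)$ (with the parity offset for accepting states supplied by $\infunc$).

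For the base case I would handle terminal MSCCs. If $C$ is IWA (rule T1), every cycle in $C$ contains an accepting state or transition, so any infinite path staying inside $C$ would yield an accepting run; since $\word \notin \langof \aut$ and $C$ is terminal, each vertex $(q,k)$ with $q \in C$ has only finitely many descendants, is therefore finite, and receives rank $0 = \chi(q)$. If $C$ is deterministic accepting terminal (rule T2), determinism provides at most one infinite path from $(q,k)$ inside $C$, and that path is non-accepting; from some level onward it stops visiting accepting and becomes endangered (rank $1$), and its prefix becomes finite once the endangered suffix is peeled away (rank $2$).

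For the inductive step I would assume $\chi$ correctly bounds ranks in every MSCC reachable from $C$ and case-analyse rules I1--I5. Set $M = \max\{\ell_D,\ell_N,\ell_W\}$; by the induction hypothesis, once the rank-assignment procedure reaches $\dagw^{M+1}$ (modulo the parity offsets built into each rule), no vertex of a child MSCC remains. The pivotal observation is that the restriction of $\dagw^{M+1}$ to $C$ then contains no infinite path: in rule I2, IWA-ness would otherwise produce an accepting run; in rule I3, determinism of $C$ yields at most one candidate infinite path, which is non-accepting; and in rule I5, non-acceptance precludes any descendant from meeting the acceptance condition. Hence within a few further finite/endangered rounds every vertex of $C$ is assigned a rank not exceeding $\ell$. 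The $+1$ summand attached to $\ell_N$ absorbs the parity switch from an odd non-accepting rank to the desired even target, and the $\textdetcircle$ and $\textiwcircle$ summands in \cref{fig:elev-det} pay for one additional finite/endangered pair of passes whenever the boundary edges to a deterministic or IWA child branch nondeterministically.

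The hard part will be rule I3, where the bound is sensitive to whether the transitions crossing the boundary of $C$ preserve the deterministic structure. I expect to need an auxiliary observation of the form: when the subautomaton on $C$ together with its outgoing edges into the corresponding child is deterministic in the sense prescribed by the caption of \cref{fig:el-rules}, each vertex of $C$ carries at most one infinite path in the run DAG, so no additional finite/endangered pair is required and the matching $\textdetcircle$ or $\textiwcircle$ term collapses to $0$. The remaining bookkeeping---ensuring accepting states receive even ranks, that rule I1 correctly inherits the type of the maximising child via $\typemax$, and that rule I4 is sound when it picks the smaller of the two candidate assignments---is routine but must be tracked carefully in each of the five inductive cases.
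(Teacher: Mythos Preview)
Your proposal takes essentially the same approach as the paper: a bottom-up induction on $\condensof\aut$, with terminal $\mathsf{IWA}$ and $\mathsf{D}$ base cases and a per-rule inductive step where the deterministic-boundary condition in rule~I3 is the crux. One small slip to fix when you write it out: for rule~I5 the restriction of $\dagw^{M+1}$ to $C$ can still contain infinite paths (a non-accepting cycle does not become finite); the correct claim---which you in fact give immediately afterwards---is that those vertices are \emph{endangered}, yielding the odd rank~$\ell$ directly.
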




%

Using \cref{lem:theTrubRed}, we can now use~$\chi$ to prune states during the
construction of~$\algschewe(\aut)$, as shown in the following example.

\beginexample
As an example, consider the BA $\aut$ in \cref{fig:schewe-aut}.
The set of MSCCs with their types is given as $\{ \{ r \}{:}\mathsf{N}, \{ s,t
\}{:}\mathsf{IWA} \}$ showing that BA $\aut$ is an elevator. Using the rules
\textbf{T1} and \textbf{I4} we get the \trub $\chi = \{ r{:}1, s{:}0, t{:}0
\}$. The TRUB can be used to prune the generated states as shown in
\cref{fig:compl-elev}.
\qed


\vspace{-3.0mm}
\subsection{Efficient Complementation of Elevator Automata}\label{sec:eff-elev}
\vspace{-0.0mm}

In \cref{sec:ref-ranks-elev} we proposed an algorithm for assigning ranks to
MSCCs of an elevator automaton~$\aut$. The drawback of the algorithm is that the
maximum obtained rank is not bounded by a~constant but by the depth of the
condensation of~$\aut$.
We will, however, show that it is actually possible to change~$\aut$ by at most
doubling the number of states and obtain an elevator BA with the rank at
most~3.

Intuitively, the construction copies every non-trivial MSCC~$C$ with an
accepting state or transition into a~component~$\deelevof C$, copies all
transitions going into states in~$C$ to also go into the corresponding states
in~$\deelevof C$, and, finally, removes all accepting conditions from~$C$.
%
%
Formally, let $\aut = (Q, \delta, I, \accstates \cup \acctrans)$ be a BA. For
$C \subseteq Q$, we use $\deelevof{C}$ to denote a unique copy of $C$, i.e.,
$\deelevof{C} = \{ \deelevof{q} \mid q \in C \} $ s.t. $\deelevof{C} \cap Q =
\emptyset$.
Let $\M$ be the set of MSCCs of~$\aut$.
Then, the \emph{deelevated} BA $\algdeelev(\aut) = (Q', \delta', I',
\accstates' \cup \acctrans')$ is given as follows:
%
\begin{itemize}
	\item $Q' = Q \cup \deelevof{Q}$,
	\item $\delta': Q'\times\Sigma \to 2^{Q'}$ where for $q \in Q$
	\begin{itemize}
		\item $\delta'(q,a) = \delta(q,a) \cup \deelevof{(\delta(q,a))}$ and
		\item $\delta'(\deelevof{q},a) = \deelevof{(\delta(q,a)\cap C)}$ for $q \in C\in\M$;
	\end{itemize}
	\item $I' = I$, and
	\item $\accstates' = \deelevof{\accstates}$ and $\acctrans' = \{ \deelevof{q}\ltr{a}\deelevof{r} \mid q\ltr{a} r \in \acctrans \} \cap \delta'$.
\end{itemize}
It is easy to see that the number of states of the deelevated automaton is
bounded by $2|Q|$. Moreover, if~$\aut$ is elevator, so is $\algdeelev(\aut)$.
The construction preserves the language of~$\aut$, as shown by the following lemma.

\begin{restatable}{lemma}{theLemDeElLang}
	Let $\aut$ be a BA. Then, $\langof{\aut} = \langof{\algdeelev(\aut)}$.
\end{restatable}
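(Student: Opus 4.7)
\medskip

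\noindent\textbf{Proof plan.}
I would prove the two language inclusions separately, each by an explicit construction of a run in the other automaton.

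\smallskip

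\noindent\emph{Direction $\langof{\algdeelev(\aut)} \subseteq \langof{\aut}$.}
Given an accepting run $\rho'$ of $\algdeelev(\aut)$ on $\word$, define a projection $\pi\colon Q' \to Q$ by $\pi(q) = q$ for $q \in Q$ and $\pi(\deelevof{q}) = q$, and let $\rho_i = \pi(\rho'_i)$. The sequence $\rho$ starts in $I$ because $I' = I \subseteq Q$. To see that every step respects $\delta$, I would go through the three possible shapes of a consecutive pair $\rho'_i, \rho'_{i+1}$ (both in $Q$; first in $Q$, second in $\deelevof{Q}$; both in $\deelevof{Q}$): in each case the definition of $\delta'$ immediately yields $\pi(\rho'_{i+1}) \in \delta(\pi(\rho'_i), \word_i)$. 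For acceptance, observe that any accepting state or transition used infinitely often by $\rho'$ lives in $\deelevof{Q}$ (since $\accstates' = \deelevof{\accstates}$ and $\acctrans' \subseteq \deelevof{Q} \times \Sigma \times \deelevof{Q}$), and $\pi$ sends it to an accepting state of $\accstates$ or an accepting transition of $\acctrans$ by the very definition of $\acctrans'$.

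\smallskip

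\noindent\emph{Direction $\langof{\aut} \subseteq \langof{\algdeelev(\aut)}$.}
This is the more delicate direction. Given an accepting run $\rho$ of $\aut$ on $\word$, the key observation is that $\infofs{\rho}$ is contained in a single non-trivial MSCC $C$: any two states visited infinitely often are mutually reachable along the run, so they belong to a common SCC, and any state in $\infofs{\rho}$ must lie on a cycle. Hence there exists $N \in \omega$ such that $\rho_i \in C$ for all $i \geq N$, and moreover all accepting states/transitions used infinitely often by $\rho$ are inside $C$. Now define
\begin{equation*}
\rho'_i =
\begin{cases}
\rho_i & \text{if } i < N, \\
\deelevof{\rho_i} & \text{if } i \geq N.
\end{cases}
\end{equation*}
I would then verify the three transition cases: for $i < N-1$ the step $\rho'_i \ltr{\word_i} \rho'_{i+1}$ is justified by $\delta'(q,a) \supseteq \delta(q,a)$ for $q \in Q$; the switching step at $i = N-1$ is justified by $\delta'(q,a) \supseteq \deelevof{\delta(q,a)}$; and for $i \geq N$ the fact that $\rho_i, \rho_{i+1} \in C$ yields $\rho_{i+1} \in \delta(\rho_i, \word_i) \cap C$, so $\deelevof{\rho_{i+1}} \in \delta'(\deelevof{\rho_i}, \word_i)$. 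Acceptance of $\rho'$ follows because the tail of $\rho$ inside $C$ contains infinitely many accepting states from $\accstates$ or accepting transitions from $\acctrans$, each of which is mirrored in $\deelevof{C}$ by a state of $\deelevof{\accstates} = \accstates'$ or by a transition of $\acctrans'$.

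\smallskip

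\noindent\emph{Expected main obstacle.}
The only nontrivial point is justifying that from some index on, $\rho$ stays inside one MSCC that contains all accepting visits occurring infinitely often. This is a standard finiteness-plus-mutual-reachability argument, and once it is in place the transition-by-transition verification is routine bookkeeping over the three clauses in the definition of $\delta'$ and the definitions of $\accstates'$ and $\acctrans'$.
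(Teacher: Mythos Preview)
Your proposal is correct and follows essentially the same approach as the paper's proof: for $\langof{\aut} \subseteq \langof{\algdeelev(\aut)}$ the paper also picks an index $\ell$ after which the accepting run stays in a~single MSCC~$C$ and then switches to the copy $\deelevof{C}$, while the converse inclusion is dismissed as analogous. Your write-up is in fact more thorough than the paper's (you spell out the projection for the reverse inclusion and verify the transition cases explicitly), but there is no methodological difference.
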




\noindent
Moreover, for an elevator automaton~$\aut$, the structure of $\algdeelev(\aut)$ consists of
(after trimming useless states) several non-accepting MSCCs
with copied terminal deterministic or IWA MSCCs.
Therefore, if we apply the algorithm from \cref{sec:ref-ranks-elev} on
$\algdeelev(\aut)$, we get that its rank is bounded by~$3$, which gives the
following upper bound for complementation of elevator automata.

\begin{restatable}{lemma}{theLemElevCompl}\label{lem:elevator_complement_bound}
  Let $\aut$ be an elevator automaton with suffficiently many states $n$. Then
  the language $\Sigma^\omega \setminus \langof \aut$ can be represented by a~BA with at most
  $\bigO(16^n)$ states.
\end{restatable}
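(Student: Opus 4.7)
The plan is to combine $\algdeelev$ with Schewe's rank-based construction (pruned using a \trub) and then apply the state-count bound of \cref{lem:rank-bound}. Since the preceding lemma gives $\langof{\algdeelev(\aut)} = \langof{\aut}$, it suffices to complement $\algdeelev(\aut)$; the construction doubles the state count to at most $2n$, and the image is again elevator. The key quantitative claim I need to prove is that the ranking $\chi$ produced by the algorithm of \cref{sec:ref-ranks-elev} on $\algdeelev(\aut)$ is a \trub of rank at most $3$, after which \cref{lem:theTrubRed} and \cref{lem:rank-bound} finish the argument.

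The structural reason the rank stays small is the following: by the definition of $\algdeelev$, every accepting state or transition of $\algdeelev(\aut)$ lies in the copied part $\deelevof{Q}$ (indeed $\accstates' = \deelevof{\accstates}$ and $\acctrans' \subseteq \delta'$ uses only copy-to-copy edges), and $\delta'(\deelevof{q},a) \subseteq \deelevof{C}$ whenever $q\in C\in\M$. Hence each non-trivial MSCC $\deelevof{C}$ is terminal in the condensation of $\algdeelev(\aut)$, while the MSCCs inherited from the original part $Q$ are all non-accepting. After trimming useless states, the condensation consists of (possibly nondeterministic) non-accepting components from $Q$ sitting above terminal components $\deelevof{C}$ of type $\mathsf{IWA}$ or deterministic accepting $\mathsf{D}$. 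Rules \textbf{T1} and \textbf{T2} label these terminals with ranks $0$ and $2$, and an induction along the condensation, using rules \textbf{I1}, \textbf{I4}, \textbf{I5} for the non-accepting MSCCs, gives $\ell \leq \max\{\ell_D + 1,\, \ell_N,\, \ell_W + 1\} \leq 3$.

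Having established rank $r = 3$ and state count $n' = 2n$ for $\algdeelev(\aut)$, I would plug into \cref{lem:rank-bound}: $m = \max\{0,\, 3 - \lceil 3/2 \rceil\} = 1$, so the bound becomes
$$2^{2n} \;+\; \frac{(3+1)^{2n}}{(3+1)!} \;=\; 4^n \;+\; \frac{16^n}{24} \;\in\; \bigO(16^n),$$
which matches the claim (the ``sufficiently many states'' hypothesis absorbs low-order slack in the $\bigO$).

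The main obstacle is the rank-at-most-$3$ claim: one must carefully check that the inductive labelling never climbs higher, in particular that rule \textbf{I4} for deterministic non-accepting components along the spine never forces a relabelling that would cascade. This is exactly where the structural feature of $\algdeelev$ is used, namely that \emph{all} accepting behaviour is pushed into a single terminal layer, which caps each of $\ell_D + 1$, $\ell_W + 1$, and $\ell_N$ at $3$ throughout the induction.
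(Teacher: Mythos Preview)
Your proposal is correct and follows essentially the same route as the paper: transform $\aut$ into $\algdeelev(\aut)$ with at most $2n$ states, argue that the rank is bounded by~$3$, and then plug $n'=2n$, $r=3$, $m=1$ into \cref{lem:rank-bound} to obtain $2^{2n} + 4^{2n}/4! \in \bigO(16^n)$. In fact you spell out the rank-$3$ argument (via the structure of $\algdeelev(\aut)$ and the labelling rules) more carefully than the paper's own proof, which simply invokes the claim established in the prose preceding the lemma.
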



\noindent
The complementation through $\algdeelev(\aut)$ gives a better upper
bound than the rank refinement from \cref{sec:ref-ranks-elev} applied directly
on $\aut$, however, based on our experience, complementation through $\algdeelev(\aut)$
behaves worse in many real-world instances.
This poor behaviour is caused by the fact that the
complement of $\algdeelev(\aut)$ can have a~larger $\waiting$ and
macrostates in \tight can have larger $S$-components,
which can yield more generated states (despite the rank bound~3).
It seems that the most promising approach would to be a combination of the
approaches, which we leave for future work.

\vspace{-0.0mm}
\subsection{Refined Ranks for Non-Elevator Automata}\label{sec:label}
\vspace{-0.0mm}

The algorithm from \cref{sec:ref-ranks-elev} computing a \trub for elevator
automata can be extended to compute \trub{}s even for general non-elevator
automata (i.e., BAs with nondeterministic accepting components that are not
inherently weak).
To achieve this generalization, we extend the rules for assigning types
and ranks to MSCCs of elevator automata from \cref{sec:ref-ranks-elev}
to take into account general non-deterministic components.
For this, we add into our collection of MSCC types \emph{general} components
(denoted as $\mathsf{G}$).
Further, we need to extend the rules for terminal components with the following rule:
\begin{enumerate}
  \item[\textbf{T3}:]  Otherwise, we label~$C$ with \scclabel{$\mathsf{G}$}{$2|C\setminus Q_F|$}.
\end{enumerate}

\begin{figure}[t]
  \begin{subfigure}[b]{0.31\linewidth}
		\centering
		\scalebox{0.75}{
	\begin{tikzpicture}[level distance=1.2cm,
              level 1/.style={sibling distance=1.3cm},>=stealth',->]
			\tikzset{every node/.style = {rectangle,draw,rounded corners}}
	    \node (r) [label=90:{$\ell = \max\{ \ell_D, \ell_N + 1, \ell_W, \ell_G \}$},label=180:{$C\colon$}] {$\mathsf{IWA}{:}\ell$}
	        child { node {$\mathsf{D}{:}\ell_D$} }
	        child { node {$\mathsf{N}{:}\ell_N$} }
	        child { node {$\mathsf{IWA}{:}\ell_W$} }
					child { node {$\mathsf{G}{:}\ell_G$}  edge from parent [dashed] };
	\end{tikzpicture}}
  \caption{$C$ is $\mathsf{IWA}$}
  \label{fig:gen-elev-iwa}
  \end{subfigure}
  \hspace*{-2mm}
  \begin{subfigure}[b]{0.34\linewidth}
		\centering
		\scalebox{0.75}{
	\begin{tikzpicture}[level distance=1.2cm,
              level 1/.style={sibling distance=1.3cm},>=stealth',->]
			\tikzset{every node/.style = {rectangle,draw,rounded corners}}
	    \node (r) [label=90:{$\ell = \max\{ \ell_D + \textdetcircle, \ell_N + 1,
      \ell_W + \textiwcircle, \ell_G+2, 2 \}$},label=180:{$C\colon$}] {$\mathsf{D}{:}\ell$}
	        child[draw=orange!80!black,thick] { node[draw=black,thin] {$\mathsf{D}{:}\ell_D$} edge from parent \detcircle }
	        child { node {$\mathsf{N}{:}\ell_N$} }
	        child[draw=blue!80!black,thick] { node[draw=black,thin] {$\mathsf{IWA}{:}\ell_W$} edge from parent \iwcircle }
					child { node {$\mathsf{G}{:}\ell_G$}  edge from parent [dashed] };
	\end{tikzpicture}}
  \caption{$C$ is $\mathsf{D}$}
  \label{fig:gen-elev-det}
  \end{subfigure}
%
	\hspace*{3mm}
  \begin{subfigure}[b]{0.31\linewidth}
		\centering
			\scalebox{0.75}{
      \begin{tikzpicture}[level distance=1.2cm,
              level 1/.style={sibling distance=1.3cm},>=stealth',->]
      		\tikzset{every node/.style = {rectangle,draw,rounded corners}}
          \node (r) [label=90:{$\ell = \max\{ \ell_D + 1, \ell_N, \ell_W + 1, \ell_G + 1 \}$},label=180:{$C\colon$}] {$\mathsf{N}{:}\ell$}
              child { node {$\mathsf{D}{:}\ell_D$} }
              child { node {$\mathsf{N}{:}\ell_N$} }
              child { node {$\mathsf{IWA}{:}\ell_W$} }
							child { node {$\mathsf{G}{:}\ell_G$}  edge from parent [dashed] };
      \end{tikzpicture}}
    \caption{$C$ is $\mathsf{N}$}
    \label{fig:gen-elev-nondet}
  \end{subfigure}
  \caption[Rank assignment rules]{Rules assigning types and rank bounds for
  non-elevator automata.
    }
	\label{fig:gen-el-rules}
\end{figure}

\newcommand{\figgenrule}[0]{
\begin{wrapfigure}[4]{r}{5.0cm}
\vspace*{-16mm}
\hspace*{-2mm}
\begin{minipage}{5.0cm}
	\centering
		\scalebox{0.78}{
		\begin{tikzpicture}[level distance=1.2cm,
						level 1/.style={sibling distance=1.3cm},>=stealth',->]
				\tikzset{every node/.style = {rectangle,draw,rounded corners}}
				\node (r) [label=90:{$\ell = \max\{ \ell_D, \ell_N + 1, \ell_W, \ell_G
				\} + 2|C\setminus \accstates|$},label=180:{$C\colon$}] {$\mathsf{G}{:}\ell$}
						child { node {$\mathsf{D}{:}\ell_D$} }
						child { node {$\mathsf{N}{:}\ell_N$} }
						child { node {$\mathsf{IWA}{:}\ell_W$} }
						child { node {$\mathsf{G}{:}\ell_G$}  edge from parent [dashed] };
		\end{tikzpicture}}
\end{minipage}
\vspace{-0mm}
\caption{$C$ is $\mathsf{G}$}
\label{fig:gen-elev-gen}
\end{wrapfigure}%
}

\figgenrule
Moreover, we adjust the rules for assigning a~type~$t$ and a~rank~$\ell$ to~$C$
to the following (the rule \textbf{I1} is the same as for the case of elevator
automata):
\begin{enumerate}
	\item[\textbf{I2}--\textbf{I5}:] \emph{(We replace the corresponding rules for their
		counterparts including general components from \cref{fig:gen-el-rules}).}
  \item[\textbf{I6}:] Otherwise, we use the rule in \cref{fig:gen-elev-gen}.
\end{enumerate}


\noindent
Then, for every MSCC~$C$ of a~BA~$\aut$, we assign each of its states the rank
of~$C$. Again, we use $\chi\colon Q \to \omega$ to denote the rank bounds
computed by the adjusted procedure above.

\begin{restatable}{lemma}{theLemGenElevCorr}\label{lem:gen-elevator-corr}
	$\chi$ is a \trub.
\end{restatable}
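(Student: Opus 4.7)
The plan is to extend the proof of \cref{lem:elevator-corr} to cover the new component type $\mathsf{G}$ and the adjusted rank arithmetic. As in that proof, I would fix $\alpha \notin \langof{\aut}$ and proceed by induction on the structure of $\condensof{\aut}$, processed in reverse topological order, showing that for each MSCC~$C$ with assigned label $\scclabel{t}{\ell}$ there exists $\ell_C \in \omega$ such that for every $k \geq \ell_C$ and every vertex $(q,k) \in \dagw$ with $q \in C$, we have $\rankwof{(q,k)} \leq \ell$ (with the standard $\monus 1$ shift for accepting states/transitions coming from the $\infunc$ interpretation of~$\chi$). The lemma then follows by taking the maximum of the $\ell_C$'s over all MSCCs and the level at which the run DAG becomes tight.

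For the base case, I would only need to add rule \textbf{T3}. Since $C$ is terminal, from level $\ell_C$ onwards every run DAG vertex labelled by a state of~$C$ stays inside~$C$ forever. Applying \cref{lemma:maxRank} to the sub-BA $(C, \restrof \delta C, \levelw(\ell_C) \cap C, (\accstates \cap C) \cup (\acctrans \cap \restrof \delta C))$ (whose language is empty because $\alpha \notin \langof{\aut}$), the ranks on this part of the DAG are bounded by $2|C|$; the sharper bound $2|C \setminus \accstates|$ follows because any accepting vertex forces a~rank decrement, so a~strictly alternating rank decrease can occur at most $|C \setminus \accstates|$ times before saturating, which is precisely the classical tightening argument used for the trivial \trub $\mu_t$ in the discussion following \cref{eq:trub}.

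For the inductive step, I would reuse the arguments of \cref{lem:elevator-corr} for rules \textbf{I1}--\textbf{I5}, only verifying that the additional summand $\ell_G$ propagated from $\mathsf{G}$-typed children is sound. This is straightforward: a~run that enters a~$\mathsf{G}$-child from $C$ is, after at most one transition, handled by the induction hypothesis on that child, so the same combinators (the~$+1$ for an $\mathsf{N}$ parent, the~$+2$ for a~$\mathsf{D}$ parent, the identity for an $\mathsf{IWA}$ parent) that worked for $\mathsf{D}, \mathsf{N}, \mathsf{IWA}$ children remain sound for $\mathsf{G}$ children, because the rank bound at the child is itself an upper bound on $f^\alpha_k$ inside that child from some level onwards. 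For the new rule \textbf{I6} (\cref{fig:gen-elev-gen}), I would combine the two effects: first, by the induction hypothesis, all runs that eventually leave~$C$ are ranked by the ``elevator'' part $\max\{\ell_D,\ell_N+1,\ell_W,\ell_G\}$ of the formula (exactly as in the $\mathsf{IWA}$ rule); second, the runs that stay inside~$C$ are handled by the same $2|C \setminus \accstates|$ argument as in \textbf{T3}, relativized to the escape rank acting as the ``bottom'' ordinal. Stacking these two bounds yields the stated $\ell$.

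The main obstacle I anticipate is formalising the relativization in the inductive case for \textbf{I6}: one must argue that inside a~non-terminal $\mathsf{G}$-MSCC, the rank-assignment procedure from \cref{sec:rank-based-compl} iterated at most $2|C\setminus \accstates|$ times suffices to push every vertex into either the finite/endangered rank at most the escape bound, or out of~$C$ entirely. The cleanest way I see is a~two-phase argument: temporarily re-root the rank-assignment procedure by treating every vertex with a~successor outside~$C$ as having a~known bounded rank (by induction), and then run the classical analysis of \cite{KupfermanV01,Schewe09} locally within~$C$, which caps the additional rank contribution at $2|C \setminus \accstates|$. Combining this with the rules for $\mathsf{D}, \mathsf{N}, \mathsf{IWA}$ children (which are unchanged from \cref{lem:elevator-corr}) and invoking the induction hypothesis completes the proof that $\chi$ satisfies \cref{eq:trub} for every $\alpha \notin \langof \aut$, hence is a~\trub.
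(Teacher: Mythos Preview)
Your proposal is correct and follows essentially the same route as the paper: extend the proof of \cref{lem:elevator-corr} by (i) adding a claim that a terminal $\mathsf{G}$ component contributes at most $2|C\setminus\accstates|$ to the rank, (ii) checking that the $\ell_G$ term propagates correctly through the adjusted rules \textbf{I2}--\textbf{I5}, and (iii) for \textbf{I6} first using the induction hypothesis to make $C$ ``terminal'' and then re-applying the $\mathsf{G}$-claim on top of the children's bound.

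The only noteworthy difference is one of packaging. You phrase the induction invariant in terms of a level $\ell_C$ beyond which the rank bound holds, and for \textbf{I6} you sketch a ``re-rooting'' of the ranking procedure. The paper instead keeps the stronger invariant ``if $q$ was assigned rank $k$, then $\dagw^{k+1}$ contains no vertex labelled by~$q$'' and works directly with the iterated DAGs $\dagw^j$: once the children's contribution is peeled off, $C$ is literally terminal in $\dagw^{\ell+1}$, and a single claim (``a terminal $\mathsf{G}$ component in $\dagw^{2k+1}$ has rank at most $2k+2|C\setminus\accstates|$'', proved by observing that each endangered-removal step strictly shrinks the non-accepting width of the $C$-slice) handles both the base case \textbf{T3} and the inductive case \textbf{I6} uniformly. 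This avoids the relativization step you flag as the main obstacle: there is no need to re-run the analysis locally, because the global iterated-DAG sequence already does the bookkeeping. Your approach would also go through, but the paper's formulation is a bit more economical.
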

%

\tikzstyle{hackennode}=[draw,circle,fill=white,inner sep=0,minimum size=4pt]
\tikzstyle{hackenline}=[line width=3pt]

\newcommand{\figrankprop}[0]{
\begin{wrapfigure}[6]{r}{4.6cm}
\vspace*{-7mm}
\hspace*{0mm}
\begin{minipage}{4.6cm}
	\centering
	\scalebox{0.8}{
	\begin{tikzpicture}[level distance=1.4cm,
              level 1/.style={sibling distance=1.4cm},>=stealth',<-]
			\tikzset{every node/.style = {rectangle,draw,rounded corners}}
	    \node (r) [] {$\mu'(S)$} [grow'=up]
	        child { node {$\mu(R_1)$} edge from parent node[left,xshift=-1mm,draw=none] {$a_1$} }
	        child { node {$\mu(R_2)$} edge from parent node[right,yshift=0mm,draw=none] {$a_2$}  }
	        child { node[draw=none,xshift=-3mm] {$\cdots$} edge from parent[draw=none]}
	        child { node[xshift=-6mm] {$\mu(R_m)$} edge from parent node[right,xshift=1mm,draw=none] {$a_m$}  };
	\end{tikzpicture}}
\end{minipage}
\vspace{-3mm}
\caption{Rank propagation flow}
\label{fig:rank-prop}
\end{wrapfigure}%
}
\vspace{-4.0mm}
\section{Rank Propagation}
\label{sec:rank-propagation}
\vspace{-2.0mm}

\figrankprop

In the previous section, we proposed a way, how to obtain a \trub for elevator
automata (with generalization to general automata).
In this section, we propose a~way of using the structure of~$\aut$ to refine
a~\trub using a propagation of values and thus reduce the size of~\tight.
Our approach uses \emph{data flow analysis}~\cite{NielsenNH99} to reason on how
ranks and rankings of macrostates of $\algschewe(\aut)$ can be decreased based
on the ranks and rankings of the \emph{local neighbourhood} of the macrostates.
We, in particular, use a~special case of \emph{forward analysis} working on the
\emph{skeleton} of $\algschewe(\aut)$, which is defined as the BA $\skeletof
\aut = (2^Q, \delta', \emptyset, \emptyset)$ where $\delta' = \{R \ltr a S \mid
S = \delta(R, a)\}$ (note that we are only interested in the structure of
$\skeletof \aut$ and not its language; also notice the similarity of~$\skeletof
\aut$ with \waiting).
Our analysis refines a~rank/ranking estimate~$\mu(S)$ for a~macrostate~$S$
of~$\skeletof \aut$ based on the estimates for its predecessors $R_1, \ldots,
R_m$ (see \cref{fig:rank-prop}).
The new estimate is denoted as~$\mu'(S)$.

More precisely, $\mu\colon 2^Q \to \val$ is a~function giving each macrostate
of~$\skeletof \aut$ a~value from the domain~$\val$.
We will use the following two value domains:
\begin{inparaenum}[(i)]
  \item  $\val = \omega$, which is used for estimating \emph{ranks} of
    macrostates (in the \emph{outer macrostate analysis}) and
  \item  $\val = \cR$, which is used for estimating \emph{rankings} within
    macrostates (in the \emph{inner macrostate analysis}).
\end{inparaenum}
For each of the analyses, we will give the \emph{update function}~$\upd\colon (2^Q
\to \val) \times (2^Q)^{m+1} \to \val$, which defines how the value of~$\mu(S)$
is updated based on the values of $\mu(R_1), \ldots, \mu(R_m)$.
We then construct a~system with the following equation for every $S \in 2^Q$:
\begin{equation}
\mu(S) = \upd(\mu, S, R_1, \ldots, R_m) \quad \text{where } \{R_1, \ldots, R_m\} =
  \delta'^{-1}(S, \Sigma).
\end{equation}
We then solve the system of equations using standard algorithms for data flow
analysis (see, e.g., \cite[Chapter~2]{NielsenNH99}) to obtain the fixpoint
$\mu^*$.
Our analyses have the important property that if they start with~$\mu_0$ being
a~\trub, then~$\mu^*$ will also be a~\trub.

As the initial \trub, we can use
a trivial \trub
or any other \trub (e.g., the output of elevator state
analysis from \cref{sec:elevator}).

%
%
%
%
%

\newcommand{
\begin{wrapfigure}[10]{r}{6cm}
  \vspace{-0mm}
  \scalebox{0.96}{
      \begin{minipage}{6cm}
    	\begin{subfigure}{2.3cm}
    		\hspace*{-2mm}
    		\begin{tikzpicture}[->,>=stealth',shorten >=0pt,auto,node distance=1.5cm,
                    scale=0.6,transform shape,initial text={}]
  \tikzstyle{every state}=[inner sep=3pt,minimum size=5pt]
  \tikzstyle{empty}=[]
  \tikzstyle{initstate}=[fill=yellow!30]

  \node[state,initial,initstate] (p) {$p$};
  \node[state,right of=p] (q) {$q$};
  \node[state,right of=q] (r) {$r$};
  \node[state,below of=q] (s) {$s$};

  \path (p) edge[loop below]  node {$a$} (p)
        (p) edge  node {$a$} (q)
        (q) edge  node {$a$} node[below] {$\bullet$} (r)
        (q) edge  node {$a$} (s)
        (r) edge[bend right=60]  node[above] {$a$} (p)
        (s) edge  node[right,yshift=2mm,xshift=-1mm] {$a$} node[left,yshift=-2mm,xshift=1mm] {$\bullet$} (p);

\end{tikzpicture}
    		\caption{$\autex$}
    		\label{fig:flow-aut}
    	\end{subfigure}
    	~
    	\begin{subfigure}{1.6cm}
    		\centering
    		\begin{tikzpicture}[->,>=stealth',shorten >=0pt,auto,node distance=1.8cm,
                    scale=0.6,transform shape,initial text={}]
  \tikzstyle{every state}=[inner sep=3pt,minimum size=5pt,rectangle,rounded corners=1mm]
  \tikzstyle{empty}=[]
  \tikzstyle{initstate}=[fill=yellow!30]
  \tikzstyle{wobbly}=[decorate, decoration={snake,amplitude=.2mm,segment length=2mm,post length=1mm}]

  \node[state,initial,initstate,label=0:{${:}1$}] (r) {$\{p\}$};
  \node[state, below of=r,node distance=12mm,label=0:{${:}3$}] (rst) {$\{p,q\}$};
  \node[state, below of=rst,node distance=12mm,label=0:{${:}7$}] (st) {$\{p,q,r,s\}$};

  \path (r) edge node[left] {$a$} (rst)
        (rst) edge node[left] {$a$} (st)
        (st) edge[loop below] node {$a$} (st);

\end{tikzpicture}
    		\caption{$\mu_0$}
    		\label{fig:flow-init}
    	\end{subfigure}
      ~
    	\begin{subfigure}{1.6cm}
    		\centering
    		\begin{tikzpicture}[->,>=stealth',shorten >=0pt,auto,node distance=1.8cm,
                    scale=0.6,transform shape,initial text={}]
  \tikzstyle{every state}=[inner sep=3pt,minimum size=5pt,rectangle,rounded corners=1mm]
  \tikzstyle{empty}=[]
  \tikzstyle{initstate}=[fill=yellow!30]
  \tikzstyle{wobbly}=[decorate, decoration={snake,amplitude=.2mm,segment length=2mm,post length=1mm}]

  \node[state,initial,initstate,label=0:{${:}1$}] (r) {$\{p\}$};
  \node[state, below of=r,node distance=12mm,label=0:{${:}\textcolor{red}{1}$}] (rst) {$\{p,q\}$};
  \node[state, below of=rst,node distance=12mm,label=0:{${:}7$}] (st) {$\{p,q,r,s\}$};

  \path (r) edge node[left] {$a$} (rst)
        (rst) edge node[left] {$a$} (st)
        (st) edge[loop below] node {$a$} (st);

\end{tikzpicture}
    		\caption{$\mu_{\mathit{out}}^*$}
    		\label{fig:flow-out}
    	\end{subfigure}
      \vspace{-3mm}
    	\caption{Example of outer macrostate analysis.
      (\subref{fig:flow-aut}) $\autex$ ($\bullet$ denotes accepting transitions).
      The initial \trub $\mu_0$ in (\subref{fig:flow-init}) is refined to
      $\mu_{\mathit{out}}^*$ in (\subref{fig:flow-out}).
      }
      \end{minipage}
  }
  \label{fig:outerflow}
\end{wrapfigure}
}[0]{
\begin{wrapfigure}[10]{r}{6cm}
  \vspace{-0mm}
  \scalebox{0.96}{
      \begin{minipage}{6cm}
    	\begin{subfigure}{2.3cm}
    		\hspace*{-2mm}
    		\begin{tikzpicture}[->,>=stealth',shorten >=0pt,auto,node distance=1.5cm,
                    scale=0.6,transform shape,initial text={}]
  \tikzstyle{every state}=[inner sep=3pt,minimum size=5pt]
  \tikzstyle{empty}=[]
  \tikzstyle{initstate}=[fill=yellow!30]

  \node[state,initial,initstate] (p) {$p$};
  \node[state,right of=p] (q) {$q$};
  \node[state,right of=q] (r) {$r$};
  \node[state,below of=q] (s) {$s$};

  \path (p) edge[loop below]  node {$a$} (p)
        (p) edge  node {$a$} (q)
        (q) edge  node {$a$} node[below] {$\bullet$} (r)
        (q) edge  node {$a$} (s)
        (r) edge[bend right=60]  node[above] {$a$} (p)
        (s) edge  node[right,yshift=2mm,xshift=-1mm] {$a$} node[left,yshift=-2mm,xshift=1mm] {$\bullet$} (p);

\end{tikzpicture}
    		\caption{$\autex$}
    		\label{fig:flow-aut}
    	\end{subfigure}
    	~
    	\begin{subfigure}{1.6cm}
    		\centering
    		\begin{tikzpicture}[->,>=stealth',shorten >=0pt,auto,node distance=1.8cm,
                    scale=0.6,transform shape,initial text={}]
  \tikzstyle{every state}=[inner sep=3pt,minimum size=5pt,rectangle,rounded corners=1mm]
  \tikzstyle{empty}=[]
  \tikzstyle{initstate}=[fill=yellow!30]
  \tikzstyle{wobbly}=[decorate, decoration={snake,amplitude=.2mm,segment length=2mm,post length=1mm}]

  \node[state,initial,initstate,label=0:{${:}1$}] (r) {$\{p\}$};
  \node[state, below of=r,node distance=12mm,label=0:{${:}3$}] (rst) {$\{p,q\}$};
  \node[state, below of=rst,node distance=12mm,label=0:{${:}7$}] (st) {$\{p,q,r,s\}$};

  \path (r) edge node[left] {$a$} (rst)
        (rst) edge node[left] {$a$} (st)
        (st) edge[loop below] node {$a$} (st);

\end{tikzpicture}
    		\caption{$\mu_0$}
    		\label{fig:flow-init}
    	\end{subfigure}
      ~
    	\begin{subfigure}{1.6cm}
    		\centering
    		\begin{tikzpicture}[->,>=stealth',shorten >=0pt,auto,node distance=1.8cm,
                    scale=0.6,transform shape,initial text={}]
  \tikzstyle{every state}=[inner sep=3pt,minimum size=5pt,rectangle,rounded corners=1mm]
  \tikzstyle{empty}=[]
  \tikzstyle{initstate}=[fill=yellow!30]
  \tikzstyle{wobbly}=[decorate, decoration={snake,amplitude=.2mm,segment length=2mm,post length=1mm}]

  \node[state,initial,initstate,label=0:{${:}1$}] (r) {$\{p\}$};
  \node[state, below of=r,node distance=12mm,label=0:{${:}\textcolor{red}{1}$}] (rst) {$\{p,q\}$};
  \node[state, below of=rst,node distance=12mm,label=0:{${:}7$}] (st) {$\{p,q,r,s\}$};

  \path (r) edge node[left] {$a$} (rst)
        (rst) edge node[left] {$a$} (st)
        (st) edge[loop below] node {$a$} (st);

\end{tikzpicture}
    		\caption{$\mu_{\mathit{out}}^*$}
    		\label{fig:flow-out}
    	\end{subfigure}
      \vspace{-3mm}
    	\caption{Example of outer macrostate analysis.
      (\subref{fig:flow-aut}) $\autex$ ($\bullet$ denotes accepting transitions).
      The initial \trub $\mu_0$ in (\subref{fig:flow-init}) is refined to
      $\mu_{\mathit{out}}^*$ in (\subref{fig:flow-out}).
      }
      \end{minipage}
  }
  \label{fig:outerflow}
\end{wrapfigure}
}

\vspace{-3.0mm}
\subsection{Outer Macrostate Analysis}\label{sec:label}
\vspace{-0.0mm}

We start with the simpler analysis, which is the \emph{outer macrostate
analysis}, which only looks at sizes of macrostates.
Recall that the rank~$r$ of every super-tight run in $\algschewe(\aut)$ does not
change, i.e., a~super tight run stays in \waiting as long as needed so that when
it jumps to \tight, it takes the rank~$r$ and never needs to decrease~it.
We~can use this fact to decrease the maximum rank of a~macrostate~$S$
in~$\skeletof \aut$.
In particular, let us consider all cycles going through~$S$.
For each of the cycles~$c$, we can bound the maximum rank of a~super-tight run
going through~$c$ by $2m-1$ where~$m$ is the smallest number of non-accepting
states occurring in any macrostate on~$c$ (from the definition, the rank of
a~tight ranking does not depend on accepting states).
Then we can infer that the maximum rank of any super-tight run going
through~$S$ is bounded by the maximum rank of any of the cycles going
through~$S$ (since~$S$ can never assume a~higher rank in any super-tight run).
Moreover, the rank of each cycle can also be estimated in a~more precise way,
e.g.\ using our elevator analysis.

Since the number of cycles in~$\skeletof \aut$ can be large\footnote{%
$\skeletof \aut$~can be exponentially larger than~$\aut$ and the number of cycles
in~$\skeletof \aut$ can be exponential to the size of~$\skeletof \aut$, so the
total number of cycles can be double-exponential.
},
instead of their enumeration, we employ data flow analysis with the
value domain $\val = \omega$ (i.e, for every macrostate~$S$ of~$\skeletof \aut$,
we remember a~bound on the maximum rank of~$S$) and the following update~function:
\vspace{-1mm}
\begin{equation}
  \upd_{\mathit{out}}(\mu, S, R_1, \ldots, R_m) = \min\{ \mu(S), \max\{
    \mu(R_1), \ldots, \mu(R_m)  \} \} .
\end{equation}

\vspace{-1mm}
\noindent
Intuitively, the new bound on the maximum rank of~$S$ is taken as the smaller of
the previous bound~$\mu(S)$ and the largest of the bounds of all predecessors
of~$S$, and the new value is propagated forward by the data flow analysis.


\begin{wrapfigure}[10]{r}{6cm}
  \vspace{-0mm}
  \scalebox{0.96}{
      \begin{minipage}{6cm}
    	\begin{subfigure}{2.3cm}
    		\hspace*{-2mm}
    		\begin{tikzpicture}[->,>=stealth',shorten >=0pt,auto,node distance=1.5cm,
                    scale=0.6,transform shape,initial text={}]
  \tikzstyle{every state}=[inner sep=3pt,minimum size=5pt]
  \tikzstyle{empty}=[]
  \tikzstyle{initstate}=[fill=yellow!30]

  \node[state,initial,initstate] (p) {$p$};
  \node[state,right of=p] (q) {$q$};
  \node[state,right of=q] (r) {$r$};
  \node[state,below of=q] (s) {$s$};

  \path (p) edge[loop below]  node {$a$} (p)
        (p) edge  node {$a$} (q)
        (q) edge  node {$a$} node[below] {$\bullet$} (r)
        (q) edge  node {$a$} (s)
        (r) edge[bend right=60]  node[above] {$a$} (p)
        (s) edge  node[right,yshift=2mm,xshift=-1mm] {$a$} node[left,yshift=-2mm,xshift=1mm] {$\bullet$} (p);

\end{tikzpicture}
    		\caption{$\autex$}
    		\label{fig:flow-aut}
    	\end{subfigure}
    	~
    	\begin{subfigure}{1.6cm}
    		\centering
    		\begin{tikzpicture}[->,>=stealth',shorten >=0pt,auto,node distance=1.8cm,
                    scale=0.6,transform shape,initial text={}]
  \tikzstyle{every state}=[inner sep=3pt,minimum size=5pt,rectangle,rounded corners=1mm]
  \tikzstyle{empty}=[]
  \tikzstyle{initstate}=[fill=yellow!30]
  \tikzstyle{wobbly}=[decorate, decoration={snake,amplitude=.2mm,segment length=2mm,post length=1mm}]

  \node[state,initial,initstate,label=0:{${:}1$}] (r) {$\{p\}$};
  \node[state, below of=r,node distance=12mm,label=0:{${:}3$}] (rst) {$\{p,q\}$};
  \node[state, below of=rst,node distance=12mm,label=0:{${:}7$}] (st) {$\{p,q,r,s\}$};

  \path (r) edge node[left] {$a$} (rst)
        (rst) edge node[left] {$a$} (st)
        (st) edge[loop below] node {$a$} (st);

\end{tikzpicture}
    		\caption{$\mu_0$}
    		\label{fig:flow-init}
    	\end{subfigure}
      ~
    	\begin{subfigure}{1.6cm}
    		\centering
    		\begin{tikzpicture}[->,>=stealth',shorten >=0pt,auto,node distance=1.8cm,
                    scale=0.6,transform shape,initial text={}]
  \tikzstyle{every state}=[inner sep=3pt,minimum size=5pt,rectangle,rounded corners=1mm]
  \tikzstyle{empty}=[]
  \tikzstyle{initstate}=[fill=yellow!30]
  \tikzstyle{wobbly}=[decorate, decoration={snake,amplitude=.2mm,segment length=2mm,post length=1mm}]

  \node[state,initial,initstate,label=0:{${:}1$}] (r) {$\{p\}$};
  \node[state, below of=r,node distance=12mm,label=0:{${:}\textcolor{red}{1}$}] (rst) {$\{p,q\}$};
  \node[state, below of=rst,node distance=12mm,label=0:{${:}7$}] (st) {$\{p,q,r,s\}$};

  \path (r) edge node[left] {$a$} (rst)
        (rst) edge node[left] {$a$} (st)
        (st) edge[loop below] node {$a$} (st);

\end{tikzpicture}
    		\caption{$\mu_{\mathit{out}}^*$}
    		\label{fig:flow-out}
    	\end{subfigure}
      \vspace{-3mm}
    	\caption{Example of outer macrostate analysis.
      (\subref{fig:flow-aut}) $\autex$ ($\bullet$ denotes accepting transitions).
      The initial \trub $\mu_0$ in (\subref{fig:flow-init}) is refined to
      $\mu_{\mathit{out}}^*$ in (\subref{fig:flow-out}).
      }
      \end{minipage}
  }
  \label{fig:outerflow}
\end{wrapfigure}

\vspace{-1mm}
\beginexample\label{ex:outer}
\mbox{Consider the BA~$\autex$ in} \cref{fig:flow-aut}.
When started from the initial \trub
$\mu_0 = \{\{p\} \mapsto 1, \{p,q\} \mapsto 3$, $\{p,q,r,s\} \mapsto 7\}$
(\cref{fig:flow-init}), outer macro\-state analysis decreases the maximum
rank estimate for $\{p,q\}$ to~1, since $\min\{\mu_0(\{p,q\},
\max\{\mu_0(\{p\})\}\} = \min\{3, 1\} = 1$.
The estimate for $\{p,q,r,s\}$ is not affected, because $\min\{7, \max\{1,7\}\}
= 7$ (\cref{fig:flow-out}).
\qed
%

%
\vspace{2mm}
\begin{restatable}{lemma}{theOutTrub}\label{lem:out-trub}
	If $\mu$ is a~\trub, then $\variant \mu \{S \mapsto \upd_{\mathit{out}}(\mu, S, R_1, \ldots, R_m)\}$ is
  a~\trub.
\end{restatable}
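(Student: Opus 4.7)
The plan is to fix $\alpha \notin \langof{\aut}$, take a~witness $\ell$ for $\mu$ being a~TRUB w.r.t.~$\alpha$, and show that $\ell' = \ell + 1$ is a~witness for $\mu' = \variant \mu \{S \mapsto \upd_{\mathit{out}}(\mu, S, R_1, \ldots, R_m)\}$. Tightness of $\levelw(\ell')$ follows immediately from tightness of~$\levelw(\ell)$ and the definition of a~tight level (once tight, always tight). It remains to prove $\infunc(\mu')(\levelw(k)) \geq f^\alpha_k$ for every $k \geq \ell'$.

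The argument splits on whether $\levelw(k)$ equals the updated macrostate~$S$. If $\levelw(k) \neq S$, then $\mu'(\levelw(k)) = \mu(\levelw(k))$ and the bound is inherited from~$\mu$. The interesting case is $\levelw(k) = S$, where $\mu'(S) = \min\{\mu(S), \max_i \mu(R_i)\}$ and we must verify that this minimum still dominates $f^\alpha_k$. Here the key observation is that, because $k \geq \ell' \geq 1$, the macrostate $\levelw(k-1)$ is necessarily one of the predecessors $R_i$ of $S$ in the skeleton (there is a~skeleton edge $\levelw(k-1) \ltr{\alpha_{k-1}} S$).

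The heart of the proof is then a~rank-preservation argument. Let $r = \rankof{f^\alpha_\ell}$; since $\ell$~witnesses tightness and super-tight runs maintain a~constant rank, $\rankof{f^\alpha_j} = r$ for every $j \geq \ell$, in particular for $j = k-1$ and $j = k$. Tightness at level $k-1$ forces $\{1, 3, \ldots, r\} \subseteq f^\alpha_{k-1}(\levelw(k-1))$, so some \emph{non-accepting} state~$q_0 \in \levelw(k-1)$ receives the odd value~$r$ (accepting states receive only even ranks). The hypothesis that $\mu$ is a~TRUB then gives $\mu(\levelw(k-1)) = \infunc(\mu)(\levelw(k-1))(q_0) \geq f^\alpha_{k-1}(q_0) = r$, and hence $\max_i \mu(R_i) \geq r$. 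Symmetrically, $\mu(S) \geq r$ by applying the TRUB property at level~$k$. Therefore $\mu'(S) \geq r$.

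To finish, I verify $\infunc(\mu')(S) \geq f^\alpha_k$ pointwise: for $q \notin \accstates$ we have $\infunc(\mu')(S)(q) = \mu'(S) \geq r \geq f^\alpha_k(q)$; for $q \in \accstates$, since $f^\alpha_k(q)$ is even and bounded by the odd number $r$, it is at most $r-1$, and $\infunc(\mu')(S)(q) = \mu'(S) \monus 1 \geq r - 1 \geq f^\alpha_k(q)$. The main subtlety I expect is precisely this interaction between the scalar-valued~$\mu$ and the ranking $\infunc(\mu)$: the $\monus 1$ on accepting states forces me to track that the rank~$r$ is odd and is always attained on some non-accepting state in a~tight level, which is exactly what lets the predecessor-based bound propagate without loss.
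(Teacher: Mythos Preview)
Your argument is correct and follows essentially the same core idea as the paper: identify the run-DAG predecessor $\levelw(k-1)$ of~$S$ among the $R_i$'s, use that it is already a tight level with the same odd rank~$r$, and conclude $\max_i \mu(R_i) \geq r$ from the \trub hypothesis. The paper organizes the case analysis slightly differently---it splits on whether~$S$ occurs finitely or infinitely often among the levels and keeps the original witness~$\ell$---whereas you take~$\ell' = \ell + 1$ uniformly and split on whether $\levelw(k) = S$; your choice cleanly guarantees $k-1 \geq \ell$ and your explicit unfolding of $\infunc$ on accepting versus non-accepting states makes the final pointwise check more transparent, but the substance is the same.
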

%
%
\vspace{-3mm}
\begin{restatable}{corollary}{theOutTerm}
	When started with a~\trub $\mu_0$, the outer macrostate analysis terminates
  and returns a~\trub~$\mu_{\mathit{out}}^*$.
\end{restatable}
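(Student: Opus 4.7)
The plan is to derive the corollary as a direct consequence of \cref{lem:out-trub} together with a standard monotonicity/well-foundedness argument for the underlying data flow analysis. I would organize the argument around two separate claims: \textbf{invariance} (TRUB-ness is preserved along every update) and \textbf{termination} (only finitely many updates can occur).

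For invariance, I would proceed by induction on the number of update steps performed by the analysis. The base case holds by assumption, since the analysis starts from $\mu_0$, which is a TRUB. For the inductive step, suppose the current estimate $\mu$ is a TRUB, and the analysis updates the value at some macrostate $S \in 2^Q$ using its predecessors $R_1, \ldots, R_m$ in $\skeletof{\aut}$. By \cref{lem:out-trub}, the resulting function $\variant{\mu}{S \mapsto \upd_{\mathit{out}}(\mu, S, R_1, \ldots, R_m)}$ is again a TRUB, which closes the induction. In particular, any fixpoint reached by the analysis is a TRUB.

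For termination, I would use the fact that $\upd_{\mathit{out}}$ is decreasing in its first argument at $S$: by construction, $\upd_{\mathit{out}}(\mu, S, R_1, \ldots, R_m) = \min\{\mu(S), \max\{\mu(R_1), \ldots, \mu(R_m)\}\} \leq \mu(S)$. Hence the pointwise value $\mu(S)$ never increases during the analysis, and any update that actually changes $\mu$ must strictly decrease $\mu(S)$ for some $S$. Since $2^Q$ is finite and each $\mu(S) \in \omega$ is bounded below by $0$, the lexicographic (or pointwise) order on $\mu$ is well-founded, so only finitely many strict updates can occur. Thus the analysis terminates at some fixpoint $\mu^*_{\mathit{out}}$, which by the invariance argument is a TRUB.

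I do not expect a substantial obstacle: the only subtlety is making sure that the values remain in the domain $\omega$ (they do, since $\min$ and $\max$ of non-negative integers are non-negative) and that the standard fixpoint framework of \cite{NielsenNH99} applies to our update operator. Both are routine.
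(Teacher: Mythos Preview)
Your proposal is correct and follows essentially the same approach as the paper: invoke \cref{lem:out-trub} inductively to maintain the TRUB invariant, and use monotone decrease of $\mu(S)$ together with the lower bound $0$ and finiteness of $2^Q$ to obtain termination. The paper's proof is simply a terser version of exactly this argument.
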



\newcommand{\innermacroanal}[0]{
\begin{wrapfigure}[9]{r}{7.3cm}
\vspace*{-0mm}
\hspace*{0mm}
\begin{minipage}{7.3cm}
\SetKwProg{Fn}{}{:}{}
\setlength{\interspacetitleruled}{0pt}%
\setlength{\algotitleheightrule}{0pt}%
\begin{function}[H]
\Fn{$\upd_{\mathit{in}}(\mu, S, R_1, \ldots, R_m)$}{
	\ForEach{$1 \leq i \leq m$ \textbf{and} $a\in\Sigma$} {
    \If{$\trans(R_i,a) = S$}{
      $g_i^a \gets \maxrsaof{R_i}{\mu(R_i)}$
    }
  }
	$\theta \gets \mu(S) \sqcap \bigsqcup \{g_i^a \mid g_i^a \text{ is
  defined}\}$\;
	\lIf{$\rankof{\theta}$ is even}{$\theta \gets \decof{\theta}$\label{line:rankeven}}
	\Return $\theta$\;
}
  \vspace*{1mm}
	\label{alg:fin-update}
\end{function}
\end{minipage}
\end{wrapfigure}
}

\newcommand{
\begin{wrapfigure}[11]{r}{3.3cm}
\vspace*{-8mm}
\hspace*{-2mm}
\begin{minipage}{3.5cm}
  \begin{tikzpicture}[->,>=stealth',shorten >=0pt,auto,node distance=8mm,
                    scale=0.65,transform shape,initial text={}]
  \tikzstyle{every state}=[inner sep=3pt,minimum size=5pt]
  \tikzstyle{empty}=[]
  \tikzstyle{initstate}=[fill=yellow!30]
  \tikzstyle{wobbly}=[decorate, decoration={snake,amplitude=.2mm,segment length=2mm,post length=1mm}]

  \node (fp0) {$\{p{:}1,$};
  \node[right of=fp0] (fq0) {$q{:}1\}$};

  \node[right of=fq0,node distance=12mm] (fp1) {$\{p{:}7,$};
  \node[right of=fp1] (fq1) {$q{:}7,$};
  \node[right of=fq1] (fr1) {$r{:}7,$};
  \node[right of=fr1] (fs1) {$s{:}7 \}$};

  \node[below of=fp1, node distance=20mm,xshift=-9mm] (fp2) {$\{p{:}6,$};
  \node[right of=fp2] (fq2) {$q{:}7,$};
  \node[right of=fq2] (fr2) {$r{:}7,$};
  \node[right of=fr2] (fs2) {$s{:}7 \}$};

  \path[dashed] (fp0) edge (fp2)
      (fp0) edge (fq2)
      (fq0) edge (fr2)
      (fq0) edge (fs2);

  \path (fp1) edge (fp2)
        (fp1) edge (fq2)
        (fq1) edge (fr2)
        (fq1) edge (fs2)
        (fr1) edge (fp2)
        (fs1) edge[red] (fp2);

  \node[below of=fp2, node distance=16mm] (fp3) {$\{p{:}6,$};
  \node[right of=fp3] (fq3) {$q{:}6,$};
  \node[right of=fq3] (fr3) {$r{:}7,$};
  \node[right of=fr3] (fs3) {$s{:}7 \}$};

  \path (fp2) edge (fp3)
        (fq2) edge (fr3)
        (fq2) edge (fs3)
        (fr2) edge (fp3)
        (fs2) edge (fp3)
        (fp2) edge[red] (fq3);

  \node[below of=fp3, node distance=16mm] (fp4) {$\{p{:}6,$};
  \node[right of=fp4] (fq4) {$q{:}6,$};
  \node[right of=fq4] (fr4) {$r{:}6,$};
  \node[right of=fr4] (fs4) {$s{:}6 \}$};

  \path (fp3) edge (fp4)
        (fp3) edge (fq4)
        (fr3) edge (fp4)
        (fs3) edge (fp4)
        (fq3) edge[red] (fr4)
        (fq3) edge[red] (fs4);

  \node[below of=fp4, node distance=10mm] (fp5) {$\{p{:}5,$};
  \node[right of=fp5] (fq5) {$q{:}5,$};
  \node[right of=fq5] (fr5) {$r{:}5,$};
  \node[right of=fr5] (fs5) {$s{:}5 \}$};

  \draw[wobbly,red] ([xshift=-5mm]fr4.south) -- node[right] {$\decf$} ([xshift=-5mm]fr5.north);

\end{tikzpicture}
\end{minipage}
\end{wrapfigure}
}[0]{
\begin{wrapfigure}[11]{r}{3.3cm}
\vspace*{-8mm}
\hspace*{-2mm}
\begin{minipage}{3.5cm}
  \begin{tikzpicture}[->,>=stealth',shorten >=0pt,auto,node distance=8mm,
                    scale=0.65,transform shape,initial text={}]
  \tikzstyle{every state}=[inner sep=3pt,minimum size=5pt]
  \tikzstyle{empty}=[]
  \tikzstyle{initstate}=[fill=yellow!30]
  \tikzstyle{wobbly}=[decorate, decoration={snake,amplitude=.2mm,segment length=2mm,post length=1mm}]

  \node (fp0) {$\{p{:}1,$};
  \node[right of=fp0] (fq0) {$q{:}1\}$};

  \node[right of=fq0,node distance=12mm] (fp1) {$\{p{:}7,$};
  \node[right of=fp1] (fq1) {$q{:}7,$};
  \node[right of=fq1] (fr1) {$r{:}7,$};
  \node[right of=fr1] (fs1) {$s{:}7 \}$};

  \node[below of=fp1, node distance=20mm,xshift=-9mm] (fp2) {$\{p{:}6,$};
  \node[right of=fp2] (fq2) {$q{:}7,$};
  \node[right of=fq2] (fr2) {$r{:}7,$};
  \node[right of=fr2] (fs2) {$s{:}7 \}$};

  \path[dashed] (fp0) edge (fp2)
      (fp0) edge (fq2)
      (fq0) edge (fr2)
      (fq0) edge (fs2);

  \path (fp1) edge (fp2)
        (fp1) edge (fq2)
        (fq1) edge (fr2)
        (fq1) edge (fs2)
        (fr1) edge (fp2)
        (fs1) edge[red] (fp2);

  \node[below of=fp2, node distance=16mm] (fp3) {$\{p{:}6,$};
  \node[right of=fp3] (fq3) {$q{:}6,$};
  \node[right of=fq3] (fr3) {$r{:}7,$};
  \node[right of=fr3] (fs3) {$s{:}7 \}$};

  \path (fp2) edge (fp3)
        (fq2) edge (fr3)
        (fq2) edge (fs3)
        (fr2) edge (fp3)
        (fs2) edge (fp3)
        (fp2) edge[red] (fq3);

  \node[below of=fp3, node distance=16mm] (fp4) {$\{p{:}6,$};
  \node[right of=fp4] (fq4) {$q{:}6,$};
  \node[right of=fq4] (fr4) {$r{:}6,$};
  \node[right of=fr4] (fs4) {$s{:}6 \}$};

  \path (fp3) edge (fp4)
        (fp3) edge (fq4)
        (fr3) edge (fp4)
        (fs3) edge (fp4)
        (fq3) edge[red] (fr4)
        (fq3) edge[red] (fs4);

  \node[below of=fp4, node distance=10mm] (fp5) {$\{p{:}5,$};
  \node[right of=fp5] (fq5) {$q{:}5,$};
  \node[right of=fq5] (fr5) {$r{:}5,$};
  \node[right of=fr5] (fs5) {$s{:}5 \}$};

  \draw[wobbly,red] ([xshift=-5mm]fr4.south) -- node[right] {$\decf$} ([xshift=-5mm]fr5.north);

\end{tikzpicture}
\end{minipage}
\end{wrapfigure}
}

\vspace{-4.0mm}
\subsection{Inner Macrostate Analysis}\label{sec:label}
\vspace{-0.0mm}
%
Our second analysis, called \emph{inner macrostate analysis}, looks deeper into
super-tight runs in $\algschewe(\aut)$.
In particular, compared with the outer macrostate analysis from the previous
section---which only looks at the \emph{ranks}, i.e., the bounds on the numbers
in the rankings---, inner macrostate analysis looks at how the \emph{rankings}
assign concrete values to the \emph{states} of~$\aut$ \emph{inside the macrostates}.

Inner macrostate analysis is based on the following.
Let~$\rho$ be a~super-tight run of $\algschewe(\aut)$ on~$\word \notin \langof
\aut$ and $\sofi$ be a~macrostate from \tight.
Because~$\rho$ is super-tight, we know that the rank~$f(q)$ of a~state $q \in S$
is bounded by the ranks of the predecessors of~$q$.
This holds because in super-tight runs, the ranks are only \emph{as high as
necessary}; if the rank of~$q$ were higher than the ranks of its predecessors,
this would mean that we may wait in \waiting longer and only jump to~$q$ with
a~lower rank later.

Let us introduce some necessary notation.
Let $f, f' \in \R$ be rankings (i.e., $f,f'\colon Q \to \omega$).
We use $f \sqcup f'$ to denote the ranking $\{ q \mapsto \max\{f(q),
f'(q)\} \mid q\in Q \}$, and $f \sqcap f'$ to denote the ranking $\{ q \mapsto
\min\{ f(q), f'(q)\} \mid q\in Q \}$.
Moreover, we define
$\maxrsa(f) = \max_\leq\{ f' \in \R \mid f \transconsist_S^a f' \}$ and a~function
$\decf\colon \R \to \R$ such that $\decof \theta$ is the ranking~$\theta'$ for
which
\begin{equation}
\theta'(q) = \begin{cases}
  \theta(q) \monus 1 & \text{if } \theta(q) = \rankof \theta \text{ and } q \notin \accstates, \\
  \evenceil{\theta(q) \monus 1} & \text{if } \theta(q) = \rankof \theta \text{ and } q \in \accstates, \\
  \theta(q) & \text{otherwise.}
\end{cases}
\end{equation}
%
Intuitively, $\maxrsa(f)$ is the (pointwise) maximum ranking that can be reached
from macrostate~$S$ with ranking~$f$ over~$a$ (it is easy to see that there is
a~unique such maximum ranking) and $\decof \theta$ decreases the maximum ranks
in a~ranking~$\theta$ by one (or~by two for even maximum ranks and accepting states).

The analysis uses the value domain $\val = \R$ (i.e., each macrostate of
$\skeletof \aut$ is assigned a~ranking giving an upper bound on the rank of each
state in the macrostate) and the update function~$\upd_{\mathit{in}}$ given in
the right-hand side of the page.
Intuitively, $\upd_{\mathit{in}}$
\innermacroanal
updates~$\mu(q)$ for every $q \in S$ to hold
the maximum rank compatible with the ranks of its predecessors.
We note line \cref{line:rankeven}, which makes use of the fact that we can only
consider tight rankings (whose rank is odd), so we can decrease the estimate
using the function $\decf$ defined above.

\vspace{-1mm}

\begin{wrapfigure}[11]{r}{3.3cm}
\vspace*{-8mm}
\hspace*{-2mm}
\begin{minipage}{3.5cm}
  \begin{tikzpicture}[->,>=stealth',shorten >=0pt,auto,node distance=8mm,
                    scale=0.65,transform shape,initial text={}]
  \tikzstyle{every state}=[inner sep=3pt,minimum size=5pt]
  \tikzstyle{empty}=[]
  \tikzstyle{initstate}=[fill=yellow!30]
  \tikzstyle{wobbly}=[decorate, decoration={snake,amplitude=.2mm,segment length=2mm,post length=1mm}]

  \node (fp0) {$\{p{:}1,$};
  \node[right of=fp0] (fq0) {$q{:}1\}$};

  \node[right of=fq0,node distance=12mm] (fp1) {$\{p{:}7,$};
  \node[right of=fp1] (fq1) {$q{:}7,$};
  \node[right of=fq1] (fr1) {$r{:}7,$};
  \node[right of=fr1] (fs1) {$s{:}7 \}$};

  \node[below of=fp1, node distance=20mm,xshift=-9mm] (fp2) {$\{p{:}6,$};
  \node[right of=fp2] (fq2) {$q{:}7,$};
  \node[right of=fq2] (fr2) {$r{:}7,$};
  \node[right of=fr2] (fs2) {$s{:}7 \}$};

  \path[dashed] (fp0) edge (fp2)
      (fp0) edge (fq2)
      (fq0) edge (fr2)
      (fq0) edge (fs2);

  \path (fp1) edge (fp2)
        (fp1) edge (fq2)
        (fq1) edge (fr2)
        (fq1) edge (fs2)
        (fr1) edge (fp2)
        (fs1) edge[red] (fp2);

  \node[below of=fp2, node distance=16mm] (fp3) {$\{p{:}6,$};
  \node[right of=fp3] (fq3) {$q{:}6,$};
  \node[right of=fq3] (fr3) {$r{:}7,$};
  \node[right of=fr3] (fs3) {$s{:}7 \}$};

  \path (fp2) edge (fp3)
        (fq2) edge (fr3)
        (fq2) edge (fs3)
        (fr2) edge (fp3)
        (fs2) edge (fp3)
        (fp2) edge[red] (fq3);

  \node[below of=fp3, node distance=16mm] (fp4) {$\{p{:}6,$};
  \node[right of=fp4] (fq4) {$q{:}6,$};
  \node[right of=fq4] (fr4) {$r{:}6,$};
  \node[right of=fr4] (fs4) {$s{:}6 \}$};

  \path (fp3) edge (fp4)
        (fp3) edge (fq4)
        (fr3) edge (fp4)
        (fs3) edge (fp4)
        (fq3) edge[red] (fr4)
        (fq3) edge[red] (fs4);

  \node[below of=fp4, node distance=10mm] (fp5) {$\{p{:}5,$};
  \node[right of=fp5] (fq5) {$q{:}5,$};
  \node[right of=fq5] (fr5) {$r{:}5,$};
  \node[right of=fr5] (fs5) {$s{:}5 \}$};

  \draw[wobbly,red] ([xshift=-5mm]fr4.south) -- node[right] {$\decf$} ([xshift=-5mm]fr5.north);

\end{tikzpicture}
\end{minipage}
\end{wrapfigure}

\beginexample
Let us continue in \cref{ex:outer} and perform inner
macrostate analysis starting with the \trub
$\{\{p{:}1\}, \{p{:}1, q{:}1\}, \{p{:}7,q{:}7,r{:}7,s{:}7\}\}$ obtained
from~$\mu_{\mathit{out}}^*$.
We show three iterations of the algorithm for $\{p,q,r,s\}$ in the right-hand
side
(we do not show $\{p,q\}$ except the first iteration
since it does not affect intermediate steps).
We can notice that in the three iterations, we could decrease the maximum rank
estimate to $\{p{:}6,q{:}6,r{:}6,s{:}6\}$ due to the accepting transitions
from~$r$ and~$s$.
In the last of the three iterations, when all states have the even rank~$6$, the
condition on \cref{line:rankeven} would become true and the rank of all states
would be decremented to~$5$ using~$\decf$.
Then, again, the accepting transitions from~$r$ and~$s$ would decrease the rank
of~$p$ to~$4$, which would be propagated to~$q$ and so on.
Eventually, we would arrive to the \trub $\{p{:}1,q{:}1,r{:}1,s{:}1\}$, which
could not be decreased any more, since $\{p{:}1,q{:}1\}$ forces the ranks of~$r$
and~$s$ to stay at~1.
\qed

%
\begin{restatable}{lemma}{theInnerTrub}\label{lem:inner-trub}
	If $\mu$ is a~\trub, then $\variant{\mu}{\{S \mapsto \upd_{\mathit{in}}(\mu,
  S, R_1, \ldots, R_m)\}}$ is
  a~\trub.
\end{restatable}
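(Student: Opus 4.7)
The plan is to fix $\alpha \notin \langof{\aut}$ and a witness $\ell$ that $\mu$ is a TRUB wrt $\alpha$; without loss of generality I take $\ell \geq 1$, which is legitimate because the tightness conditions for level $\ell$ immediately imply the corresponding conditions for level $\ell+1$. I then show $\mu' \defeq \variant{\mu}{\{S \mapsto \upd_{\mathit{in}}(\mu, S, R_1, \ldots, R_m)\}}$ is also a TRUB wrt $\alpha$ with the same witness. Since $\mu'(T) = \mu(T)$ for every $T \neq S$, the only new obligation is $\upd_{\mathit{in}}(\mu, S, R_1, \ldots, R_m) \geq f^\alpha_k$ for every $k \geq \ell$ satisfying $\levelwof{k} = S$. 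Let $\theta$ denote the value computed inside $\upd_{\mathit{in}}$ just before the conditional $\decf$ application.

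The heart of the proof is to establish $f^\alpha_k \leq \theta$. The TRUB property of $\mu$ immediately gives $f^\alpha_k \leq \mu(S)$. For the other conjunct of the meet, I pick out the actual predecessor level $R_j = \levelwof{k-1}$ under the symbol $a = \wordof{k-1}$; by construction $\trans(R_j, a) = S$, so $g_j^a$ is defined inside the update. Super-tightness of the canonical run yields $f^\alpha_{k-1} \transconsist^a_{R_j} f^\alpha_k$, and the TRUB assumption gives $f^\alpha_{k-1} \leq \mu(R_j)$. A small monotonicity observation for $\transconsist^a_{R_j}$ in its left argument---enlarging the source ranking only loosens the inequalities $f'(q') \leq f(q)$ and $f'(q'') \leq \evenceil{f(q)}$---then yields $\mu(R_j) \transconsist^a_{R_j} f^\alpha_k$. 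Because $\maxrsa$ returns the pointwise-maximum successor ranking, I conclude $f^\alpha_k \leq \maxrsaof{R_j}{\mu(R_j)} = g_j^a$, hence also $f^\alpha_k \leq \bigsqcup\{g_i^b \mid g_i^b \text{ defined}\}$, and intersecting with $\mu(S)$ gives $f^\alpha_k \leq \theta$.

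The remaining step handles the $\decf$ application, which fires only when $\rankof \theta$ is even. Since $f^\alpha_k$ is tight (by super-tightness and $k \geq \ell$), $\rankof{f^\alpha_k}$ is odd, so the parity mismatch together with $\rankof{f^\alpha_k} \leq \rankof \theta$ forces $\rankof{f^\alpha_k} \leq \rankof{\theta} - 1$. I then verify $f^\alpha_k(q) \leq \decof{\theta}(q)$ by a pointwise case split: when $\theta(q) < \rankof \theta$ we have $\decof{\theta}(q) = \theta(q) \geq f^\alpha_k(q)$; when $\theta(q) = \rankof \theta$ and $q \notin \accstates$ we have $\decof{\theta}(q) = \rankof{\theta} - 1 \geq \rankof{f^\alpha_k} \geq f^\alpha_k(q)$; and when $\theta(q) = \rankof \theta$ and $q \in \accstates$ we have $\decof{\theta}(q) = \evenceil{\rankof{\theta} - 1} = \rankof{\theta} - 2$, which still dominates $f^\alpha_k(q)$ because tightness forces $f^\alpha_k(q)$ to be even and at most the odd value $\rankof{\theta} - 1$. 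I expect this last subcase to be the main obstacle: making the additional $-2$ decrement rigorous requires combining two parity facts (even-rankedness of accepting states in any tight ranking, and odd-rankedness of the rank of a super-tight ranking), and the argument only closes once both align cleanly.
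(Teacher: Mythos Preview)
Your argument follows the same line as the paper's---propagate the TRUB bound from the predecessor level via $\maxrsa$, intersect with~$\mu(S)$, then justify the $\decf$ step using oddness of tight ranks---but you are considerably more explicit: the paper packages the meet step into an auxiliary claim (pointwise minima of TRUBs are TRUBs), leaves the $\bigsqcup M$ bound with an ``elaborate why?'' marker, and waves at the $\decf$ case, whereas you spell out the monotonicity of~$\transconsist$ in its left argument and give the full three-case analysis for~$\decf$.

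One small slip: your ``without loss of generality $\ell \geq 1$'' does not actually buy what you use later. At $k = \ell$ the step ``the TRUB assumption gives $f^\alpha_{k-1} \leq \mu(R_j)$'' needs $k-1 \geq \ell$, which fails. The observation you already made---that any witness for a TRUB can be shifted upward---is exactly the right ingredient, but the correct conclusion is to take $\ell+1$ as the witness for~$\mu'$ (equivalently, verify the new bound only for $k > \ell$, which is what the paper does), not merely to force $\ell \geq 1$.
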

%
%
\vspace{-4mm}
\begin{restatable}{corollary}{theInTerm}
	When started with a~\trub $\mu_0$, the inner macrostate analysis terminates
  and returns a~\trub~$\mu_{\mathit{in}}^*$.
\end{restatable}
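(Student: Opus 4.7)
The plan is to combine Lemma~\ref{lem:inner-trub}, which guarantees that a single local update preserves the \trub property, with a standard finite-descent argument for termination, mirroring the structure used for the outer macrostate analysis.

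For termination, I would first establish that $\upd_{\mathit{in}}$ is non-increasing in its second argument: the returned value $\theta$ is computed as $\mu(S) \sqcap \bigsqcup\{g_i^a\}$, which is already pointwise $\leq \mu(S)$ on $Q$, and the conditional application of $\decf$ on line~\ref{line:rankeven} only decreases $\theta$ further. Consequently, every visit to a macrostate $S$ can only lower $\mu(S)$ componentwise. Since every ranking in $\R$ is a function from the finite set $Q$ into the finite set $\{0, \ldots, 2|Q|\}$, and the analysis state is a map $2^Q \to \R$, the entire value space is finite. A non-increasing sequence in this finite poset under the pointwise order must stabilize after finitely many steps, so the worklist-style iteration reaches a fixpoint.

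For correctness, I would induct on the iteration step. The base case is immediate since $\mu_0$ is a \trub by hypothesis. For the inductive step, each iteration modifies a single entry $\mu(S)$ via $\upd_{\mathit{in}}$, and Lemma~\ref{lem:inner-trub} states precisely that this local update preserves \trub-ness. Therefore every intermediate $\mu_k$ is a \trub, and in particular the limit $\mu^*_{\mathit{in}}$ obtained at the fixpoint is a \trub.

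The main subtlety I expect is reconciling the chosen fixpoint-computation schedule with Lemma~\ref{lem:inner-trub}: the lemma describes a single local update, whereas a worklist or chaotic iteration visits macrostates repeatedly and in arbitrary orders. This is handled by standard results on monotone data flow analysis over finite lattices (cf.~\cite[Chapter~2]{NielsenNH99}): monotonicity together with a finite well-founded value domain ensures that any fair schedule reaches a fixpoint, while the per-step invariance from Lemma~\ref{lem:inner-trub} propagates through the iteration to its limit.
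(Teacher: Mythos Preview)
Your proposal is correct and follows essentially the same approach as the paper's proof: invoke Lemma~\ref{lem:inner-trub} to show each update preserves the \trub property, observe that $\upd_{\mathit{in}}$ is pointwise non-increasing and bounded below, and conclude termination from the finiteness of the value domain. Your version is more explicit about why the update is non-increasing (via the $\sqcap$ and $\decf$ steps) and about the finite-lattice reasoning, but the argument is the same.
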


\vspace{-7.0mm}
\section{Experimental Evaluation}\label{sec:experiments}
\vspace{-2.0mm}

\paragraph{Used tools and evaluation environment.}
We implemented the techniques described in the previous sections as an extension
of the tool \ranker~\cite{ranker} (written in C++).
Speaking in the terms of~\cite{HavlenaL2021},
the heuristics were implemented on top of the $\rankermaxrank$ configuration (we
refer to this previous version as $\rankerold$).
We tested the correctness of our implementation using \spot's
\texttt{autcross} on all BAs in our benchmark.
We compared modified \ranker with other state-of-the-art tools, namely,
\goal~\cite{goal} (implementing \piterman~\cite{piterman2006nondeterministic},
\algschewe~\cite{Schewe09}, \safra~\cite{safra1988complexity},
and \fribourg~\cite{fribourg}),
\spot~2.9.3~\cite{spot} (implementing Redziejowski's
algorithm~\cite{Redziejowski12}), \seminator~\cite{seminator},
\ltldstar~0.5.4~\cite{KleinB07}, and \roll~\cite{roll}.
All tools were set to the mode where they output an automaton with the standard
state-based B\"{u}chi acceptance condition.
The experimental evaluation was performed on a~64-bit \textsc{GNU/Linux Debian}
workstation with an Intel(R) Xeon(R) CPU E5-2620 running at 2.40\,GHz with
32\,GiB of RAM and using a~timeout of 5\,minutes.

\vspace*{-3mm}
\paragraph{Datasets.}
As the source of our benchmark, we use the two following datasets:
\begin{inparaenum}[(i)]
  \item  \dsrandom containing 11,000 BAs over a~two letter alphabet used in~\cite{tsai-compl}, which
    were randomly generated via the Tabakov-Vardi approach~\cite{TabakovV05},
    starting from 15 states and with various different parameters;
  \item  \dsltl with 1,721 BAs over larger alphabets (up to 128 symbols) used in~\cite{seminator}, which were
    obtained from LTL formulae from literature (221) or randomly generated (1,500).
\end{inparaenum}
We preprocessed the automata using
\rabit~\cite{MayrC13} and \spot's \autfilt (using the \verb=--high=
simplification level), transformed them to state-based acceptance BAs (if they
were not already), and converted to the HOA format~\cite{BabiakBDKKM0S15}.
From this set, we removed automata that were
\begin{inparaenum}[(i)]
  \item  semi-deterministic,
  \item  inherently weak,
  \item  unambiguous, or
  \item  have an empty language,
\end{inparaenum}
since for these automata types there exist more efficient complementation
procedures than for unrestricted
BAs~\cite{BlahoudekHSST16,seminator,BoigelotJW01,li-unambigous}.
In~the end, we were left with \textbf{2,592} (\dsrandom) and \textbf{414}
(\dsltl) \emph{hard} automata.
We use \dsall to denote their union (\textbf{3,006} BAs). Of these hard automata, 458 were
elevator automata.

\newcommand{\figrankbased}[0]{
\begin{figure}[t]
  \begin{subfigure}[b]{0.49\linewidth}
  \begin{center}
  \includegraphics[width=\linewidth,keepaspectratio]{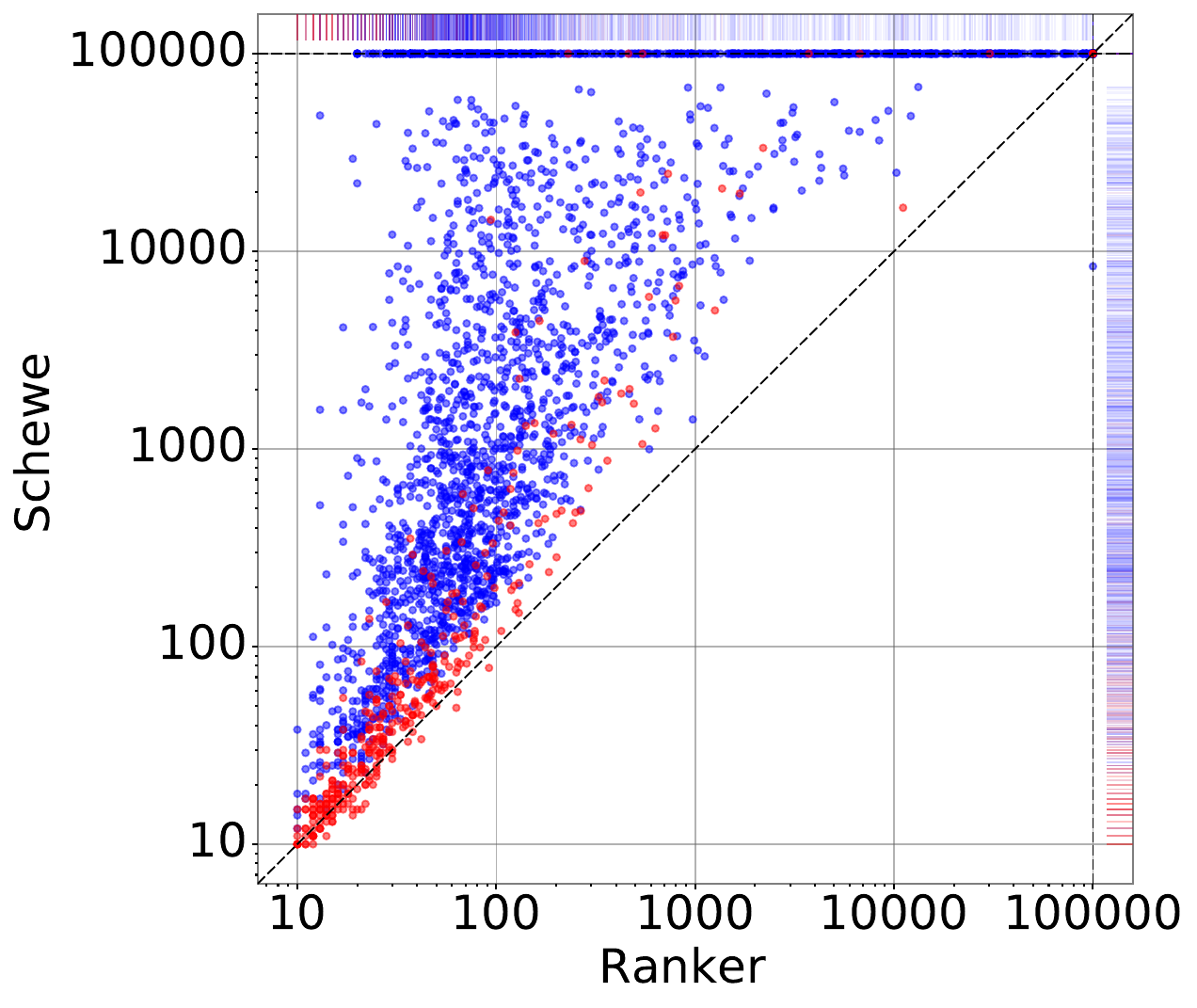}
  \end{center}
  \vspace{-5mm}
  \caption{$\ranker$ vs $\algschewe$}
  \label{fig:schewe}
  \end{subfigure}
  \begin{subfigure}[b]{0.49\linewidth}
  \begin{center}
  \includegraphics[width=\linewidth,keepaspectratio]{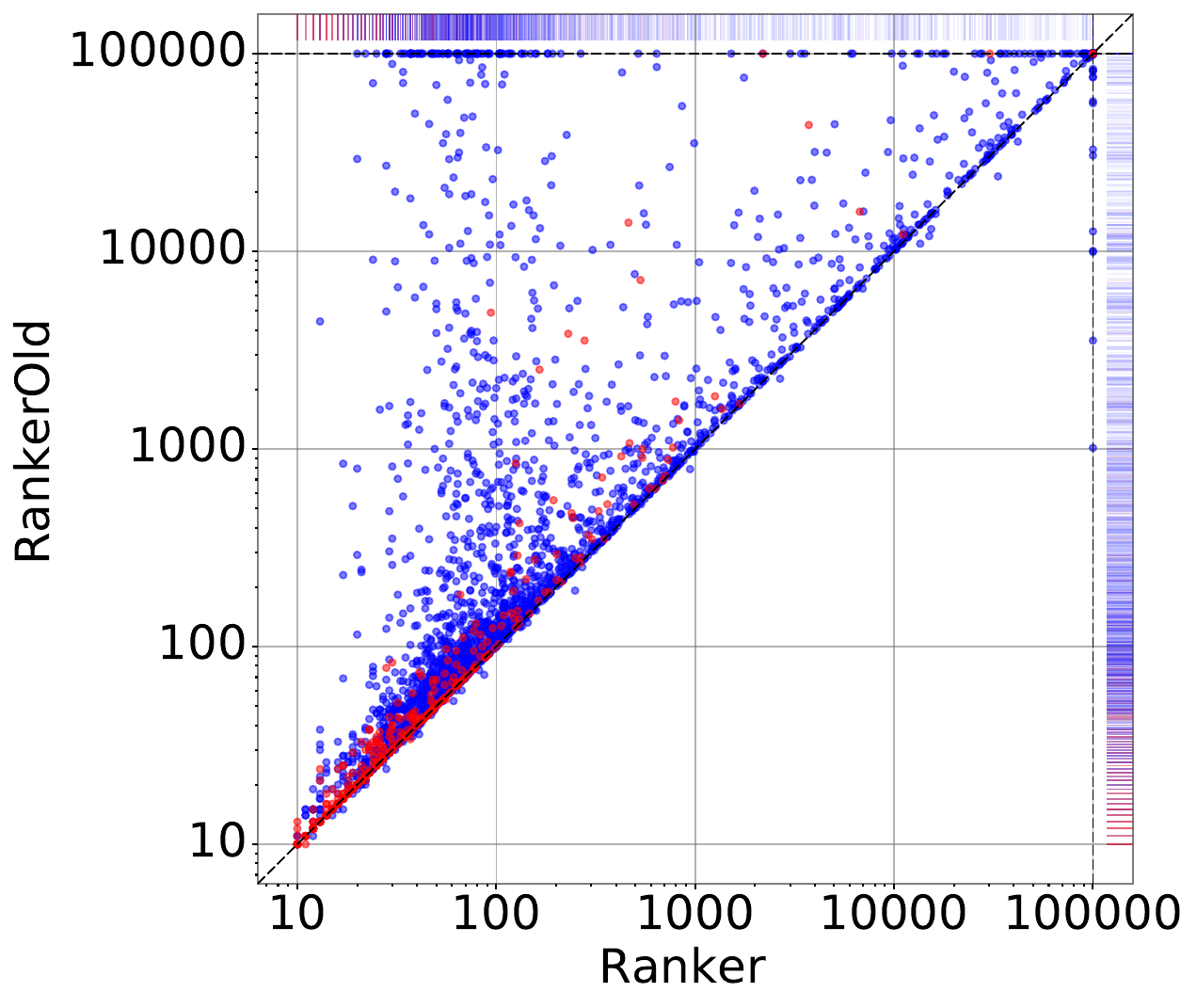}
  \end{center}
  \vspace{-5mm}
  \caption{$\ranker$ vs $\rankerold$}
  \label{fig:rankerold}
  \end{subfigure}
  \vspace{-2mm}
\caption{Comparison of the state space generated by our optimizations and other
  rank-based procedures (horizontal and vertical dashed lines represent timeouts).
  Blue data points are from \dsrandom and red data points are from \dsltl.
  Axes are logarithmic.}
  \vspace*{-6mm}
\label{fig:rank-based}
\end{figure}
}

\newcommand{\figothers}[0]{
\begin{figure}[t]
  \vspace*{-2mm}
  \begin{subfigure}[b]{0.49\linewidth}
  \begin{center}
  \includegraphics[width=\linewidth,keepaspectratio]{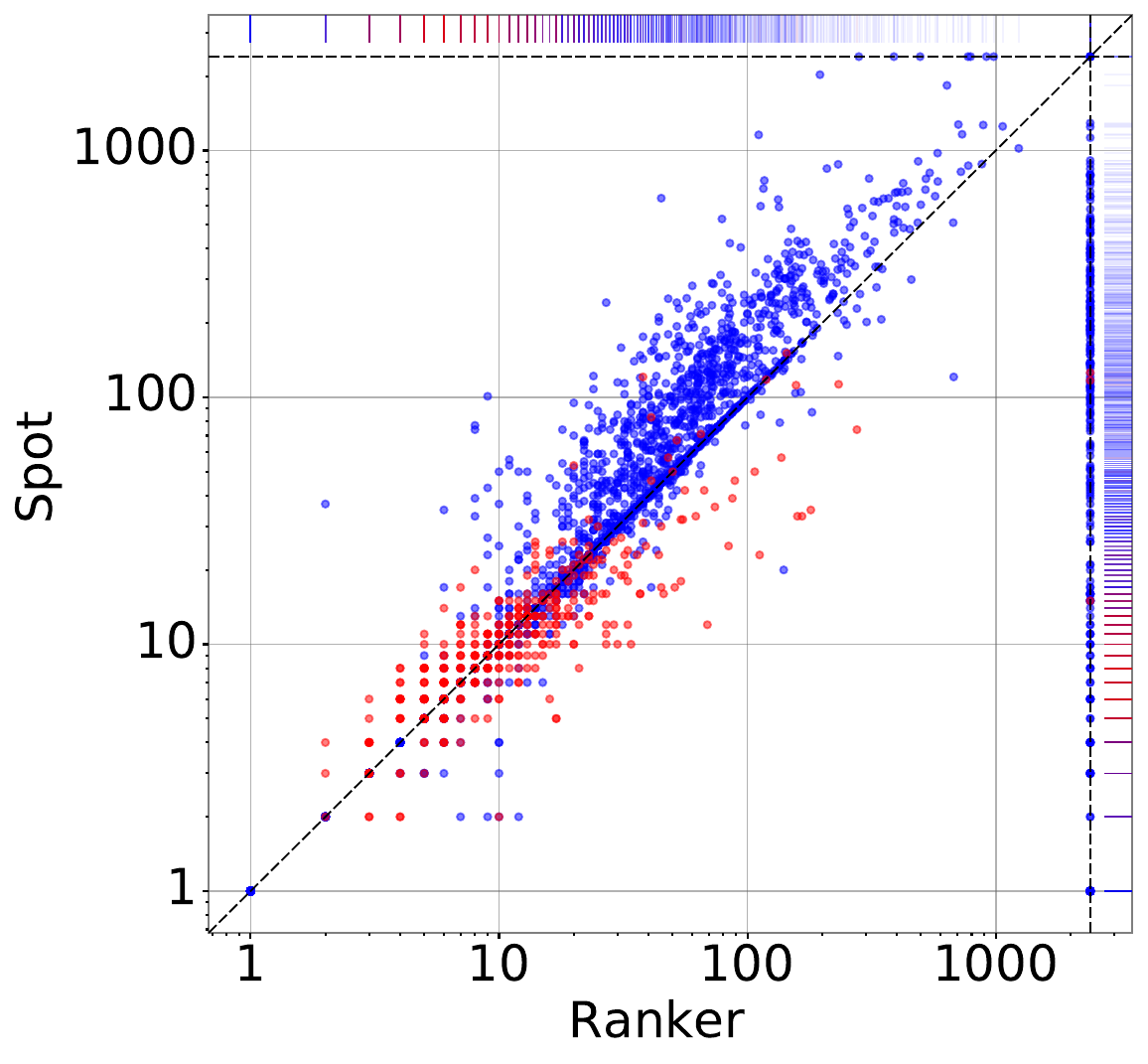}
  \end{center}
  \vspace{-5mm}
  \caption{$\ranker$ vs $\spot$}
  \label{fig:rankerspot}
  \end{subfigure}
  \begin{subfigure}[b]{0.49\linewidth}
  \begin{center}
  \includegraphics[width=\linewidth,keepaspectratio]{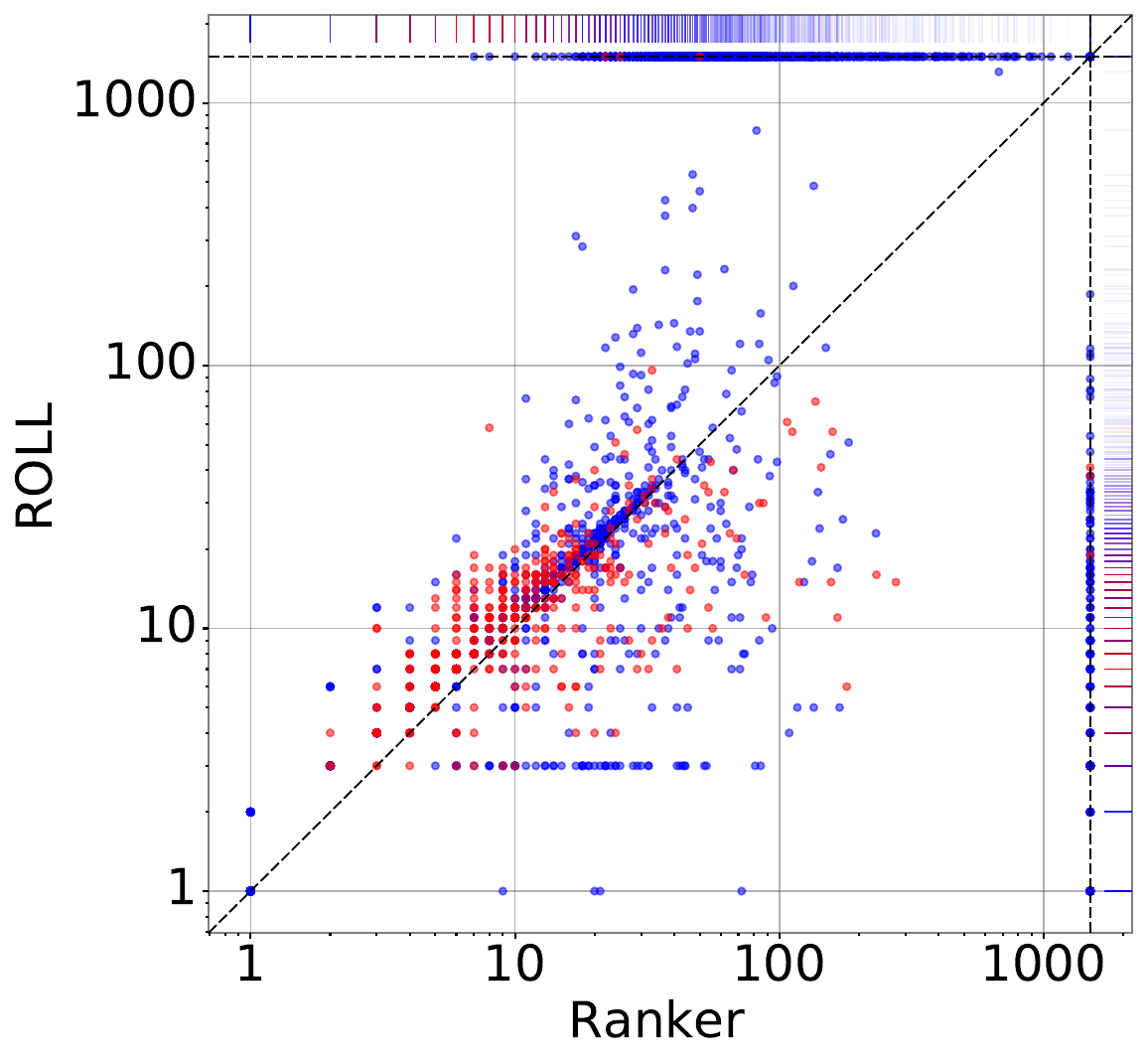}
  \end{center}
  \vspace{-5mm}
  \caption{$\ranker$ vs $\roll$}
  \label{fig:rankerroll}
  \end{subfigure}
  \vspace{-2mm}
\caption{Comparison of the complement size obtained by \ranker and other
  state-of-the-art tools (horizontal and vertical dashed lines represent
  timeouts).
  Axes are logarithmic.}
  \vspace*{-6mm}
\label{fig:comp-others}
\end{figure}
}

\figrankbased

\newcommand{
  \begin{table}[t]
  \vspace*{-1mm}
  \caption{Statistics for our experiments.
    The upper part compares various optimizations of the rank-based procedure
    (no postprocessing).
    The lower part compares \ranker to other approaches (with postprocessing).
    The left-hand side compares sizes of complement BAs and the right-hand side
    runtimes of the tools.
    The \textbf{wins} and \textbf{losses} columns give the number of times when \ranker
    was strictly better and worse.
    The values are given for the three datasets as ``\dsall (\dsrandom{} : \dsltl)''.
    Approaches in \goal are labelled with~\goalmark.
    }
  \label{tab:results}
  \hspace*{-2mm}
  \resizebox{1.02\linewidth}{!}{%
    \newcolumntype{g}{>{\columncolor{Gray!30}}r}
\newcolumntype{f}{>{\columncolor{Gray!30}}l}
\newcolumntype{h}{>{\columncolor{Gray!30}}c}
\begin{tabular}{lgghfrrclgghfrrcl|gghfrrclgghf}
\toprule
\multicolumn{1}{c}{\textbf{method}} & \multicolumn{4}{h}{\textbf{mean}}   & \multicolumn{4}{c}{\textbf{median}}   & \multicolumn{4}{h}{\textbf{wins}}   &   \multicolumn{4}{c}{\textbf{losses}} & \multicolumn{4}{h}{\textbf{mean runtime [s]}}   & \multicolumn{4}{c}{\textbf{median runtime [s]}}   & \multicolumn{4}{h}{\textbf{timeouts}}   \\
\midrule
 \rowcolor{GreenYellow}\ranker & 3812 & (4452 & \!:\! & 207) &  79 & (93  & \!:\! & 26) &      &       &       &      &     &      &       &      &  7.83 & (8.99  & \!:\! & 1.30) &                                  0.51 & (0.84                               & \!:\!                                 & 0.04)                               &                                   279 & (276                                        & \!:\!                                         & 3)                                      \\
 $\rankerold$                  & 7398 & (8688 & \!:\! & 358) & 141 & (197 & \!:\! & 29) & 2190 & (2011 & \!:\! & 179) & 111 & (107 & \!:\! & 4)   &  9.37 & (10.73 & \!:\! & 1.99) &                                  0.61 & (1.04                               & \!:\!                                 & 0.04)                               &                                   365 & (360                                        & \!:\!                                         & 5)                                      \\
 \algschewe~\goalmark          & 4550 & (5495 & \!:\! & 665) & 439 & (774 & \!:\! & 35) & 2640 & (2315 & \!:\! & 325) & 55  & (1   & \!:\! & 54)  & 21.05 & (24.28 & \!:\! & 7.80) &                                  6.57 & (7.39                               & \!:\!                                 & 5.21)                               &                                   937 & (928                                        & \!:\!                                         & 9)                                      \\
\midrule
 \rowcolor{GreenYellow}\ranker &   47 & (52   & \!:\! & 18)  &  22 & (27  & \!:\! & 10) &      &       &       &      &     &      &       &      &  7.83 & (8.99  & \!:\! & 1.30) &                                  0.51 & (0.84                               & \!:\!                                 & 0.04)                               &                                   279 & (276                                        & \!:\!                                         & 3)                                      \\
 \piterman~\goalmark           &   73 & (82   & \!:\! & 22)  &  28 & (34  & \!:\! & 14) & 1435 & (1124 & \!:\! & 311) & 416 & (360 & \!:\! & 56)  &  7.29 & (7.39  & \!:\! & 6.65) &                                  5.99 & (6.04                               & \!:\!                                 & 5.62)                               &                                    14 & (12                                         & \!:\!                                         & 2)                                      \\
 \safra~\goalmark              &   83 & (91   & \!:\! & 30)  &  29 & (35  & \!:\! & 17) & 1562 & (1211 & \!:\! & 351) & 387 & (350 & \!:\! & 37)  & 14.11 & (15.05 & \!:\! & 8.37) &                                  6.71 & (6.92                               & \!:\!                                 & 5.79)                               &                                   172 & (158                                        & \!:\!                                         & 14)                                     \\
 \spot                         &   75 & (85   & \!:\! & 15)  &  24 & (32  & \!:\! & 10) & 1087 & (936  & \!:\! & 151) & 683 & (501 & \!:\! & 182) &  0.86 & (0.99  & \!:\! & 0.06) &                                  0.02 & (0.02                               & \!:\!                                 & 0.02)                               &                                    13 & (13                                         & \!:\!                                         & 0)                                      \\
 \fribourg~\goalmark           &   91 & (104  & \!:\! & 13)  &  23 & (31  & \!:\! & 9)  & 1120 & (1055 & \!:\! & 65)  & 601 & (376 & \!:\! & 225) & 17.79 & (19.53 & \!:\! & 7.22) &                                  9.25 & (10.15                              & \!:\!                                 & 5.48)                               &                                    81 & (80                                         & \!:\!                                         & 1)                                      \\
 \ltldstar                     &   73 & (82   & \!:\! & 21)  &  28 & (34  & \!:\! & 13) & 1465 & (1195 & \!:\! & 270) & 465 & (383 & \!:\! & 82)  &  3.31 & (3.84  & \!:\! & 0.11) &                                  0.04 & (0.05                               & \!:\!                                 & 0.02)                               &                                   136 & (130                                        & \!:\!                                         & 6)                                      \\
 \seminator                    &   79 & (91   & \!:\! & 15)  &  21 & (29  & \!:\! & 10) & 1266 & (1131 & \!:\! & 135) & 571 & (367 & \!:\! & 204) &  9.51 & (11.25 & \!:\! & 0.08) &                                  0.22 & (0.39                               & \!:\!                                 & 0.02)                               &                                   363 & (362                                        & \!:\!                                         & 1)                                      \\
 \roll                         &   18 & (19   & \!:\! & 14)  &  10 & (9   & \!:\! & 11) & 2116 & (1858 & \!:\! & 258) & 569 & (443 & \!:\! & 126) & 31.23 & (37.85 & \!:\! & 7.28) &                                  8.19 & (12.23                              & \!:\!                                 & 2.74)                               &                                  1109 & (1106                                       & \!:\!                                         & 3)                                      \\
\bottomrule
\end{tabular}

  }
  \end{table}
}[0]{
  \begin{table}[t]
  \vspace*{-1mm}
  \caption{Statistics for our experiments.
    The upper part compares various optimizations of the rank-based procedure
    (no postprocessing).
    The lower part compares \ranker to other approaches (with postprocessing).
    The left-hand side compares sizes of complement BAs and the right-hand side
    runtimes of the tools.
    The \textbf{wins} and \textbf{losses} columns give the number of times when \ranker
    was strictly better and worse.
    The values are given for the three datasets as ``\dsall (\dsrandom{} : \dsltl)''.
    Approaches in \goal are labelled with~\goalmark.
    }
  \label{tab:results}
  \hspace*{-2mm}
  \resizebox{1.02\linewidth}{!}{%
    \newcolumntype{g}{>{\columncolor{Gray!30}}r}
\newcolumntype{f}{>{\columncolor{Gray!30}}l}
\newcolumntype{h}{>{\columncolor{Gray!30}}c}
\begin{tabular}{lgghfrrclgghfrrcl|gghfrrclgghf}
\toprule
\multicolumn{1}{c}{\textbf{method}} & \multicolumn{4}{h}{\textbf{mean}}   & \multicolumn{4}{c}{\textbf{median}}   & \multicolumn{4}{h}{\textbf{wins}}   &   \multicolumn{4}{c}{\textbf{losses}} & \multicolumn{4}{h}{\textbf{mean runtime [s]}}   & \multicolumn{4}{c}{\textbf{median runtime [s]}}   & \multicolumn{4}{h}{\textbf{timeouts}}   \\
\midrule
 \rowcolor{GreenYellow}\ranker & 3812 & (4452 & \!:\! & 207) &  79 & (93  & \!:\! & 26) &      &       &       &      &     &      &       &      &  7.83 & (8.99  & \!:\! & 1.30) &                                  0.51 & (0.84                               & \!:\!                                 & 0.04)                               &                                   279 & (276                                        & \!:\!                                         & 3)                                      \\
 $\rankerold$                  & 7398 & (8688 & \!:\! & 358) & 141 & (197 & \!:\! & 29) & 2190 & (2011 & \!:\! & 179) & 111 & (107 & \!:\! & 4)   &  9.37 & (10.73 & \!:\! & 1.99) &                                  0.61 & (1.04                               & \!:\!                                 & 0.04)                               &                                   365 & (360                                        & \!:\!                                         & 5)                                      \\
 \algschewe~\goalmark          & 4550 & (5495 & \!:\! & 665) & 439 & (774 & \!:\! & 35) & 2640 & (2315 & \!:\! & 325) & 55  & (1   & \!:\! & 54)  & 21.05 & (24.28 & \!:\! & 7.80) &                                  6.57 & (7.39                               & \!:\!                                 & 5.21)                               &                                   937 & (928                                        & \!:\!                                         & 9)                                      \\
\midrule
 \rowcolor{GreenYellow}\ranker &   47 & (52   & \!:\! & 18)  &  22 & (27  & \!:\! & 10) &      &       &       &      &     &      &       &      &  7.83 & (8.99  & \!:\! & 1.30) &                                  0.51 & (0.84                               & \!:\!                                 & 0.04)                               &                                   279 & (276                                        & \!:\!                                         & 3)                                      \\
 \piterman~\goalmark           &   73 & (82   & \!:\! & 22)  &  28 & (34  & \!:\! & 14) & 1435 & (1124 & \!:\! & 311) & 416 & (360 & \!:\! & 56)  &  7.29 & (7.39  & \!:\! & 6.65) &                                  5.99 & (6.04                               & \!:\!                                 & 5.62)                               &                                    14 & (12                                         & \!:\!                                         & 2)                                      \\
 \safra~\goalmark              &   83 & (91   & \!:\! & 30)  &  29 & (35  & \!:\! & 17) & 1562 & (1211 & \!:\! & 351) & 387 & (350 & \!:\! & 37)  & 14.11 & (15.05 & \!:\! & 8.37) &                                  6.71 & (6.92                               & \!:\!                                 & 5.79)                               &                                   172 & (158                                        & \!:\!                                         & 14)                                     \\
 \spot                         &   75 & (85   & \!:\! & 15)  &  24 & (32  & \!:\! & 10) & 1087 & (936  & \!:\! & 151) & 683 & (501 & \!:\! & 182) &  0.86 & (0.99  & \!:\! & 0.06) &                                  0.02 & (0.02                               & \!:\!                                 & 0.02)                               &                                    13 & (13                                         & \!:\!                                         & 0)                                      \\
 \fribourg~\goalmark           &   91 & (104  & \!:\! & 13)  &  23 & (31  & \!:\! & 9)  & 1120 & (1055 & \!:\! & 65)  & 601 & (376 & \!:\! & 225) & 17.79 & (19.53 & \!:\! & 7.22) &                                  9.25 & (10.15                              & \!:\!                                 & 5.48)                               &                                    81 & (80                                         & \!:\!                                         & 1)                                      \\
 \ltldstar                     &   73 & (82   & \!:\! & 21)  &  28 & (34  & \!:\! & 13) & 1465 & (1195 & \!:\! & 270) & 465 & (383 & \!:\! & 82)  &  3.31 & (3.84  & \!:\! & 0.11) &                                  0.04 & (0.05                               & \!:\!                                 & 0.02)                               &                                   136 & (130                                        & \!:\!                                         & 6)                                      \\
 \seminator                    &   79 & (91   & \!:\! & 15)  &  21 & (29  & \!:\! & 10) & 1266 & (1131 & \!:\! & 135) & 571 & (367 & \!:\! & 204) &  9.51 & (11.25 & \!:\! & 0.08) &                                  0.22 & (0.39                               & \!:\!                                 & 0.02)                               &                                   363 & (362                                        & \!:\!                                         & 1)                                      \\
 \roll                         &   18 & (19   & \!:\! & 14)  &  10 & (9   & \!:\! & 11) & 2116 & (1858 & \!:\! & 258) & 569 & (443 & \!:\! & 126) & 31.23 & (37.85 & \!:\! & 7.28) &                                  8.19 & (12.23                              & \!:\!                                 & 2.74)                               &                                  1109 & (1106                                       & \!:\!                                         & 3)                                      \\
\bottomrule
\end{tabular}

  }
  \end{table}
}

\vspace{-3.0mm}
\subsection{Generated State Space}\label{sec:label}
\vspace{-2.0mm}

In our first experiment, we evaluated the effectiveness of our heuristics for
pruning the generated state space by comparing the sizes of complemented BAs
without postprocessing.
This use case is directed towards applications where postprocessing is
irrelevant, such as inclusion or equivalence checking of BAs.

We focused on a~comparison with two less optimized versions of the rank-based
complementation procedure:
$\algschewe$ (the version ``Reduced Average Outdegree'' from~\cite{Schewe09}
implemented in \goal under \texttt{-m rank -tr -ro}) and its optimization
$\rankerold$.
The scatter plots in \cref{fig:rank-based} compare the numbers of states of
automata generated by \ranker and the other algorithms and the upper part of
\cref{tab:results} gives summary statistics.
Observe that our optimizations from this paper drastically reduced the
generated search space compared with both \algschewe and $\rankerold$ (the mean
for \algschewe is lower than for $\rankerold$ due to its much higher number of
timeouts); from \cref{fig:rankerold} we can see that the improvement was in
many cases \emph{exponential} even when compared with our previous optimizations
in $\rankerold$.
The median (which is a~more meaningful indicator with the presence of timeouts)
decreased by 44\,\% w.r.t.\ $\rankerold$, and we also reduced the number of
timeouts by 23\,\%.
Notice that the numbers for the \dsltl dataset do not differ as much as for
\dsrandom, witnessing the easier structure of the BAs in \dsltl.

  \begin{table}[t]
  \vspace*{-1mm}
  \caption{Statistics for our experiments.
    The upper part compares various optimizations of the rank-based procedure
    (no postprocessing).
    The lower part compares \ranker to other approaches (with postprocessing).
    The left-hand side compares sizes of complement BAs and the right-hand side
    runtimes of the tools.
    The \textbf{wins} and \textbf{losses} columns give the number of times when \ranker
    was strictly better and worse.
    The values are given for the three datasets as ``\dsall (\dsrandom{} : \dsltl)''.
    Approaches in \goal are labelled with~\goalmark.
    }
  \label{tab:results}
  \hspace*{-2mm}
  \resizebox{1.02\linewidth}{!}{%
    
  }
  \end{table}

\vspace{-3.0mm}
\subsection{Comparison with Other Complementation Techniques}\label{sec:label}
\vspace{-2.0mm}

In our second experiment, we compared the improved \ranker with other
state-of-the-art tools. We were comparing sizes of output BAs, therefore, we
postprocessed each output automaton with \autfilt (simplification level
\verb=--high=).
Scatter plots are given in \cref{fig:comp-others}, where we compare \ranker with
\spot (which had the best results on average from the other tools except \roll)
and \roll, and summary statistics are in the lower part of \cref{tab:results}.
Observe that \ranker has by far the lowest mean (except \roll)
and the third lowest median (after \seminator and \roll, but with less timeouts).
Moreover, comparing the numbers in columns \textbf{wins} and \textbf{losses}
we can see that \ranker gives strictly better results than other tools (\textbf{wins}) more
often than the other way round (\textbf{losses}).

\enlargethispage{2mm}

In \cref{fig:rankerspot} see that indeed in the majority of cases \ranker gives
a~smaller BA than \spot, especially for harder BAs (\spot, however,
behaves slightly better on the simpler BAs from \dsltl).
The results in \cref{fig:rankerroll} do not seem so clear.
\roll uses a~learning-based approach---more heavyweight and completely
orthogonal to any of the other tools---and can in some cases output a~tiny
automaton, but does not scale, as observed by the number of timeouts much higher
than any other tool.
It is, therefore, positively surprising that \ranker could in most of the cases still obtain
a~much smaller automaton than \roll.

Regarding runtimes, the prototype implementation in \ranker is comparable
to \seminator, but slower than \spot and \ltldstar (\spot is the fastest tool).
Implementations of other approaches clearly do not target speed.
We note that the number of timeouts of \ranker is still higher than of some other
tools (in particular \piterman, \spot, \fribourg); further state space
\mbox{reduction targeting this particular issue is our future work.}

\vspace{-4.0mm}
\section{Related Work}\label{sec:related}
\vspace{-3.0mm}

BA complementation remains in the interest of researchers since their first introduction by B\"{u}chi in~\cite{buchi1962decision}.
Together with a hunt for efficient complementation techniques, the effort has
been put into establishing the lower bound. First, Michel showed that the lower
bound is $n!$ (approx. $(0.36n)^n$)~\cite{michel1988complementation} and later
Yan refined the result to $(0.76n)^n$~\cite{yan}.

\vspace{-0mm}
The complementation approaches can be roughly divided into several branches.
%
\emph{Ramsey-based complementation}, the very first complementation construction,
where the language of an input automaton is decomposed into a~finite number of equivalence classes,
was proposed by B\"{u}chi and was further enhanced
in~\cite{breuers-improved-ramsey}.
\emph{Determinization-based complementation} was
presented by Safra in~\cite{safra1988complexity} and later improved by
Piterman in~\cite{piterman2006nondeterministic} and Redziejowski
in~\cite{Redziejowski12}. Various optimizations for determinization of BAs were further proposed in~\cite{pigorov-determinization}.
The main idea of this approach is to convert
an input BA into an equivalent deterministic automaton with different acceptance
condition that can be easily complemented (e.g. Rabin automaton).
The complemented automaton is then converted back into a~BA
(often for the price of some blow-up).
\emph{Slice-based complementation}
tracks the acceptance condition using a~reduced abstraction on a~run tree~\cite{vardi2008automata,kahler2008complementation}.
\emph{A~learning-based approach} was introduced in~\cite{li2018learning,roll}.
Allred and Ultes-Nitsche then presented a~novel optimal complementation algorithm in~\cite{fribourg}.
%
For some special types of BAs, e.g., deterministic~\cite{Kurshan87},
semi-deterministic~\cite{BlahoudekHSST16}, or unambiguous~\cite{li-unambigous}, there exist specific complementation algorithms.
\emph{Semi-determinization based complementation}
converts an input BA into a~semi-deterministic BA~\cite{CourcoubetisY88}, which is then complemented~\cite{seminator}.

\figothers

\emph{Rank-based complementation}, studied
in~\cite{KupfermanV01,GurumurthyKSV03,FriedgutKV06,Schewe09,KarmarkarC09},
extends the subset construction for determinization of finite automata
by storing additional information in each macrostate
to track the acceptance condition of all runs of the input automaton.
Optimizations of an alternative (sub-optimal) rank-based construction from~\cite{KupfermanV01} going through \emph{alternating B\"{u}chi automata} were presented in~\cite{GurumurthyKSV03}.
Furthermore, the work in~\cite{KarmarkarC09} introduces an optimization of
\algschewe, in some cases producing smaller automata
(this construction is not compatible with our optimizations).
As shown in~\cite{ChenHL19},
the rank-based construction can be optimized using simulation relations.
We identified several heuristics that help reducing the size of the complement
in~\cite{HavlenaL2021}, which are compatible with the heuristics in this paper.


\vspace{-3mm}
\paragraph{Acknowledgements.}
We thank anonymous reviewers for their useful remarks that helped us improve
the quality of the paper.
This work was supported by the Czech Science Foundation project 20-07487S
and the FIT BUT internal project FIT-S-20-6427.

\bibliographystyle{splncs}
\bibliography{literature}

\ifTR
\else
\vfill

{\small\medskip\noindent{\bf Open Access} This chapter is licensed under the terms of the Creative Commons\break Attribution 4.0 International License (\url{http://creativecommons.org/licenses/by/4.0/}), which permits use, sharing, adaptation, distribution and reproduction in any medium or format, as long as you give appropriate credit to the original author(s) and the source, provide a link to the Creative Commons license and indicate if changes were made.}

{\small \spaceskip .28em plus .1em minus .1em The images or other third party material in this chapter are included in the chapter's Creative Commons license, unless indicated otherwise in a credit line to the material.~If material is not included in the chapter's Creative Commons license and your intended\break use is not permitted by statutory regulation or exceeds the permitted use, you will need to obtain permission directly from the copyright holder.}

\medskip\noindent\includegraphics{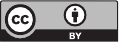}
\fi

\ifTR
\newpage
\appendix

\section{Semi-Determinization of Elevator Automata}\label{sec:elev-semidet}

The deelevation procedure from \cref{sec:eff-elev} also provides a~good starting point for transforming
an elevator automaton into an SDBA.
For an elevator automaton~$\aut$ with no nondeterministic IWA component,
the output of $\algdeelev(\aut)$ is already an SDBA.
If~$\aut$, on the other hand, contains some nondeterministic IWA component, so will
$\algdeelev(\aut)$.
In such a~case, we can determinize each of the (terminal) IWA components in
$\algdeelev(\aut)$, which will cause a blow-up of at most $2^{|Q|}$%
\footnote{Nondeterministic inherently weak BAs are stricly more
expressive than deterministic inherently weak BAs (observed, e.g., by the
$\omega$-regular language $(a+b)^*b^\omega$ for which no deterministic BA exists
but is recognized by a~nondeterministic weak BA).
It is, however, folklore knowledge that inherently weak components can
be determinized by the powerset construction.},
which would yield an SDBA with at most $2|Q| + 2^{|Q|}$ states (while the standard
semi-determinization of~\cite{CourcoubetisY88} has the upper bound of~$4^{|Q|}$).

\begin{lemma}\label{lem:semidet}
Let~$\aut$ be an elevator automaton with~$n$ states.
Then there exists an SDBA with $\bigO(2^n)$ states with the same language.
\end{lemma}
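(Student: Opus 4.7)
The plan is to combine the deelevation construction $\algdeelev$ from \cref{sec:eff-elev} with a~powerset determinization of the terminal inherently weak accepting components. First, I~would apply $\algdeelev$ to~$\aut$, obtaining an equivalent elevator automaton $\but = \algdeelev(\aut)$ with at most $2n$ states, whose structure (after trimming useless states) consists of a~non-accepting ``upper'' part feeding into copied terminal components, each of which is either deterministic or IWA. Crucially, every accepting state or transition of $\but$ lies inside one of these terminal components.

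Next I~would analyze two cases. If $\aut$ (and hence $\but$) contains no nondeterministic IWA component, then every terminal component of $\but$ is already deterministic, while all nondeterminism is confined to the non-accepting upper part; by definition this makes $\but$ semi-deterministic, and we are done with $2n = \bigO(n)$ states. Otherwise, let $C_1, \ldots, C_k$ be the nondeterministic IWA terminal components of $\but$. For each such~$C_j$, I~would replace~$C_j$ by its powerset determinization: states become subsets of~$C_j$, transitions are given by $\trans(\,\cdot\,,a)$ lifted to sets, and a~state $X \subseteq C_j$ (or transition into it) is made accepting whenever every element of~$X$ is reached through an accepting state/transition inside~$C_j$. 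Incoming transitions from the upper part of~$\but$ into a~state $q \in C_j$ are redirected to the singleton $\{q\}$ in the determinized copy.

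The justification that this preserves the language relies on the folklore fact (footnoted in the excerpt) that for an inherently weak B\"uchi component, the powerset construction preserves the language: since every cycle in~$C_j$ visits an accepting state/transition, a~run of the powerset machine is accepting iff \emph{some} element of each visited subset lies on an accepting cycle, which in an IWA component coincides with the existence of some accepting run of the original nondeterministic component on the same suffix. Replacing each $C_j$ by a~deterministic equivalent component therefore does not change $\langof{\but}$, and by \cref{lem:theLemDeElLang} (the language preservation of $\algdeelev$) we get $\langof{\but'} = \langof{\aut}$ for the resulting automaton $\but'$. Moreover, $\but'$ is semi-deterministic: all remaining nondeterminism sits in the non-accepting upper part inherited from $\algdeelev(\aut)$, while each terminal component is now deterministic, so once a~run reaches an accepting state or transition (which must be inside a~determinized terminal component), it continues deterministically.

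For the state bound, the upper, non-accepting part contributes at most $2n$ states, while each determinized terminal component contributes at most $2^{|C_j|}$ states. Since the $C_j$ are pairwise disjoint subsets of $Q$, we have $\sum_j 2^{|C_j|} \leq 2^n$, giving a~total of at most $2n + 2^n = \bigO(2^n)$ states. The main obstacle I~anticipate is the careful bookkeeping around the boundary between the nondeterministic upper part and the determinized IWA components, namely arguing that redirecting incoming transitions to singleton subsets and accepting on ``all-accepting'' subsets yields exactly the language of~$\but$; everything else is a~routine combination of \cref{lem:theLemDeElLang}, the powerset preservation for weak components, and arithmetic.
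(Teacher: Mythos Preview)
Your proposal is correct and follows essentially the same approach as the paper: apply $\algdeelev$, observe that without nondeterministic IWA components the result is already an SDBA, and otherwise determinize each terminal IWA component via the powerset construction, appealing to the folklore fact that IWA components admit powerset determinization. The paper's own argument is likewise only a sketch (the text preceding the lemma) and gives the same $2n + 2^{n}$ bound; your observation that $\sum_j 2^{|C_j|} \le 2^n$ because the $C_j$ are disjoint is a slightly more explicit justification of the same bound.
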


\section{Proofs of Section~\ref{sec:rank-based-compl}}

\theMaxRank*

\begin{proof}
  This proof is a modification of \cite{KupfermanV01} to BAs with mixed transition/state-based acceptance condition.
  We prove by induction that for every $i \geq 0$ there exists $l_i$ such that for all $l \geq l_i$, $\dagof{\alpha}^{2i}$
  contains at most $|Q| - i$ vertices of the form $(q, l)$.
  The proof for the base case $i=0$ follows from the definition of $\dagof{\alpha}^0$, because all levels $l \geq 0$ have at most $|Q|$ vertices of the form $(q, l)$.
  Assume that the hypothesis holds for $i$. We prove it for $i+1$.
  First, if $\dagof{\alpha}^{2i}$ is finite, then it is clear that $\dagof{\alpha}^{2i+1}$ and $\dagof{\alpha}^{2i+2}$ are empty as well.
  If $\dagof{\alpha}^{2i}$ is not finite, then there must be some endangered vertex in $\dagof{\alpha}^{2i+1}$.
  Assume, by way of contradiction, that $\dagof{\alpha}^{2i}$ is infinite and no vertex in $\dagof{\alpha}^{2i+1}$ is endangered.
  Since $\dagof{\alpha}^{2i}$ is infinite, $\dagof{\alpha}^{2i+1}$ is also infinite and therefore each vertex in $\dagof{\alpha}^{2i+1}$ has at least one successor.
  Consider some vertex $(q_0, l_0)$ in $\dagof{\alpha}^{2i+1}$. Since, by the assumption, it is not endangered, there exists an accepting vertex $(q'_0, l'_0)$ reachable from $(q_0, l_0)$ or an accepting edge $(q''_0, l''_0) \xrightarrow{a} (q'_0, l'_0)$ with both vertices $(q''_0, l''_0)$ and $(q'_0, l'_0)$ reachable from $(q_0, l_0)$.
  Let $(q_1, l_1)$ be a~successor of $(q'_0, l'_0)$. By the assumption, $(q_1, l_1)$ is also not endangered and hence, there exists a~vertex $(q'_1, l'_1)$ that is accepting and reachable from $(q_1, l_1)$ or an accepting edge $(q''_1, l''_1) \xrightarrow{a} (q'_1, l'_1)$ with both $(q''_1, l''_1)$ and $(q'_1, l'_1)$ reachable from $(q_1, l_1)$.
  Let $(q_2, l_2)$ be a~successor of $(q'_1, l'_1)$. We can continue similarly and construct an infinite sequence of vertices $(q_1, l_1), (q'_1, l'_1), (q_2, l_2), \ldots, (q_j, l_j), (q'_j, l'_j), (q_{j+1}, l_{j+1}), \ldots$ such that $(q_{j+1}, l_{j+1})$ is a~successor of $(q'_j, l'_j)$ and $(q'_j, l'_j)$ is an accepting vertex or there is a path from $(q_j, l_j)$ to $(q'_j, l'_j)$ containing an accepting edge.
  Such a~sequence corresponds to a~path which contains an accepting vertex or an accepting edge infinitely often, and which is therefore not accepting, which contradicts the assumption that $\alpha \not\in \langof{\aut}$.
  Let $(q, l)$ be an endangered vertex in $\dagof{\alpha}^{2i+1}$.
  Since $(q, l)$ is in $\dagof{\alpha}^{2i+1}$, it is not finite in $\dagof{\alpha}^{2i}$. There are therefore infinitely many vertices reachable from $(q, l)$ in $\dagof{\alpha}^{2i}$. Hence, by König's Lemma, $\dagof{\alpha}^{2i}$ contains an infinite path $(q, l), (q_1, l+1), (q_2, l+2), \ldots$.
  For all $k \geq 1$, the vertex $(q_k, l+k)$ has infinitely many vertices reachable from it in $\dagof{\alpha}^{2i}$ and thus, it is not finite in $\dagof{\alpha}^{2i}$. Therefore, the path $(q, l), (q_1, l+1), \ldots$ exists also in $\dagof{\alpha}^{2i+1}$.
  Since $(q, l)$ is endangered, all the vertices $(q_k, l+k)$ - reachable from $(q, l)$ - are endangered as well. They are therefore not in $\dagof{\alpha}^{2i+2}$.
  For all $j \geq l$, the number of vertices of the form $(q, j)$ in $\dagof{\alpha}^{2i+2}$ is strictly smaller than in $\dagof{\alpha}^{2i}$, which, by the induction hypothesis, is $|Q|-i$. So there are at most $|Q|-(i+1)$ nodes of the form $(q, j)$ in $\dagof{\alpha}^{2i+2}$.
  It is now clear that $\dagof{\alpha}^{2|Q|}$ is empty and therefore it holds that $rank(\alpha) \leq 2|Q|$.

  \qed
\end{proof}

\medskip
\begin{lemma} \label{lemma:2inFKV06}
  There is a~level $l \geq 0$ such that for each level $l' > l$, and for all odd ranks $j$ up to the maximal odd rank, there is a~vertex $(q, l')$
  such that $rank(q, l') = j$.
\end{lemma}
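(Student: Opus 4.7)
The plan is to argue that every odd rank $j$ from $1$ up to the maximum odd rank $r$ \emph{stabilises}: once it has appeared on some level $l_j$, it must reappear on every level above $l_j$. Then taking $l = \max_j l_j$ (over the finitely many odd $j \le r$) yields the required level.

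First I would fix $\alpha \notin \langof{\aut}$ (the interesting case, using \cref{lemma:maxRank} to ensure ranks are finite) and let $r$ be the maximum odd rank occurring in $\dagw$; if no odd rank occurs the statement is vacuous. For each odd $j$ with $1 \le j \le r$, I would choose any vertex $v = (q_0,l_0) \in \dagw$ with $\rankwof{v} = j$ and construct an infinite chain of descendants of $v$, each of rank exactly $j$. This chain would witness that rank $j$ appears at every level $\ge l_0$, so I can set $l_j \defeq l_0$. Taking $l = \max\{l_j \mid j \text{ odd}, 1 \le j \le r\}$ completes the proof.

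The heart of the construction is the following chain argument. Because $j$ is odd, $v$ survives the even step $j-1$ of the ranking procedure; hence $v$ is not finite in $\dagwiof{j-1}$, and by K\"onig's lemma there is an infinite descendant path $v = v_0, v_1, v_2, \ldots$ within $\dagwiof{j-1}$ with $v_k$ at level $l_0 + k$. Every $v_k$ inherits an infinite tail of this path as its own descendants, so $v_k$ is also infinite in $\dagwiof{j-1}$, not removed as finite at step $j-1$, and therefore lies in $\dagwiof{j}$. Next, $v$ is endangered in $\dagwiof{j}$ (since its rank is $j$); any accepting vertex or accepting edge reachable from $v_k$ in $\dagwiof{j}$ would be reachable from $v_0$ through the prefix $v_0 \to \cdots \to v_k$, contradicting that $v_0$ is endangered. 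Hence each $v_k$ is endangered in $\dagwiof{j}$. Combining $v_k \in \dagwiof{j}$ (forcing $\rankwof{v_k} \ge j$) with $v_k$ being endangered in $\dagwiof{j}$ (forcing $\rankwof{v_k} \le j$), we conclude $\rankwof{v_k} = j$ for every $k \ge 0$.

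The main obstacle I anticipate is the bookkeeping around the two-step procedure that alternates between peeling off finite and endangered vertices: it is tempting but wrong to claim that an odd-rank vertex is itself finite or that the constructed path need not stay inside $\dagwiof{j}$. The key subtle point that makes the argument go through is the separation of these two properties: $v$ lies in $\dagwiof{j}$ because it escaped the even stage $j-1$, and the infinite descendant path of $v$ also escapes that stage level-by-level. Once this is established, endangerment propagates downward along the path for free, pinning the rank of every descendant to exactly $j$.
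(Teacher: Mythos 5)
Your core propagation argument is correct and is essentially the paper's own: a vertex $v$ of odd rank $j$ is a non-finite vertex of $\dagwiof{j-1}$, K\"onig's lemma yields an infinite descendant path in $\dagwiof{j-1}$, every vertex on that path is itself non-finite there and hence survives into $\dagwiof{j}$, and endangerment of $v$ in $\dagwiof{j}$ propagates along the path, pinning the rank of every path vertex to exactly $j$; the paper runs the very same chain argument, only starting from an endangered vertex of $\dagwiof{2i+1}$ at a minimal level (which is the same object, since the rank-$(2i+1)$ vertices are precisely the endangered vertices of $\dagwiof{2i+1}$).

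There is, however, one genuine gap: you start by ``choosing any vertex $v$ with $\rankwof{v}=j$'' for \emph{every} odd $j$ up to the maximal odd rank $r$, i.e., you silently assume that no odd rank below $r$ is skipped. That fact does not follow from the definition of ``maximal odd rank'', it is part of what the lemma asserts, and it is exactly the part that the later tightness argument (\cref{lemma:tight}) needs, since a tight level ranking must contain \emph{all} odd ranks $1,3,\dots$ up to its rank. The paper discharges it by recalling, from the proof of \cref{lemma:maxRank} (which uses $\word\notin\langof{\aut}$), that whenever $\dagwiof{2i}$ is infinite the DAG $\dagwiof{2i+1}$ contains an endangered vertex, so each odd stage up to the terminal one assigns its rank to at least one vertex. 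You could also close the gap directly: if the odd stage $j<r$ assigned no rank, then $\dagwiof{j+1}=\dagwiof{j}$, and since an odd-indexed DAG contains no finite vertices (by the same K\"onig/tail argument you already use: every surviving vertex keeps an infinite path after the finite vertices of $\dagwiof{j-1}$ are removed), the following even stage would remove nothing either; the procedure would thus have reached its fixpoint and no vertex could ever receive a rank larger than $j$, contradicting the existence of a vertex of rank $r$. With that supplement your proof is complete and matches the paper's.
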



\begin{proof}
  This proof is a modification of \cite{FriedgutKV06} (Lemma 2) to BAs with mixed transition/state-based acceptance condition.
Let $k$ be the minimal index for which $\mathcal{G}_{2k}$ is finite. For every $0 \leq i \leq k-1$, the DAG $\mathcal{G}_{2i+1}$ contains
an endangered vertex.
Let $l_i$ be the minimal level such that $\mathcal{G}_{2i+1}$ contains an endangered vertex $(q, l_i)$.
Since $(q, l_i)$ is in $\mathcal{G}_{2i+1}$, it is not finite in $\mathcal{G}_{2i}$. Thus, there are infinitely many vertices in $\mathcal{G}_{2i}$
that are reachable from $(q, l_i)$. Hence, by König's Lemma, $\mathcal{G}_{2i}$ contains an infinite path $(q, l_i), (q_1, l_i + 1), (q_2, l_i + 2), \ldots$.
For all $j \geq 1$, the vertex $(q_j, l_i + j)$ has infinitely many vertices reachable from it in $\mathcal{G}_{2i}$ and thus, it is not finite in $\mathcal{G}_{2i}$.
Therefore, the path $(q, l_i), (q_1, l_i + 1), (q_2, l_i + 2), \ldots$ exists also in $\mathcal{G}_{2i+1}$.
Since $(q, l_i)$ is endangered, all vertices $(q_j, l_i + j)$ are endangered as well.
It follows that for every $0 \leq i \leq k-1$ there exists a~level $l_i$ such that for all $l' \geq l_i$, there is an endangered vertex $(q, l')$
from which it cannot be reached to any accepting edge and for which $rank(q, l')$ would therefore be $2i + 1$.
\qed
\end{proof}

\medskip
\begin{lemma} \label{lemma:tight}
  There is a~level $l \geq 0$ such that for each level $l' > l$, the level ranking that corresponds to $l'$ is tight.
\end{lemma}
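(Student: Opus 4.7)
The plan is to identify the eventual maximum rank per level and show that it is odd and equals the maximum odd rank~$r_o$ occurring in~$\dagw$. Assume $\alpha \notin \langof{\aut}$, so by Lemma~\ref{lemma:maxRank} all ranks are finite. Let $m_l = \max\{\rankwof{(q,l)} : q \in \levelwof{l}\}$. Once I show $m_l = r_o$ for all sufficiently large~$l$, tightness of $f^\alpha_l$ follows directly: Lemma~\ref{lemma:2inFKV06} yields that every odd value up to~$r_o$ is attained on level~$l$, and a side remark shows $f^\alpha_l$ assigns only even ranks to accepting states (an accepting vertex trivially reaches itself, hence is never endangered and never receives an odd rank).

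First I would prove the standard monotonicity fact that ranks do not increase along edges of $\dagw$, by a case split on the parity of the rank of the source vertex. This immediately gives $m_l \geq m_{l+1}$, since every vertex at level $l+1$ has a predecessor at level $l$ of at least its own rank. So the sequence $(m_l)$ is non-increasing, bounded, and stabilizes at some $m^*$; moreover $m^* \geq r_o$ by Lemma~\ref{lemma:2inFKV06}.

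Next I would establish the key sub-claim: along any infinite path in $\dagw$ the eventual (stabilized) rank is odd. Along such a path the ranks are non-increasing and so stabilize at some value $r^P$. If $r^P$ were even (including~$0$), say $2k$, then from some point on the path would consist of vertices in $\dagwiof{2k}$ all of rank~$2k$, hence finite in $\dagwiof{2k}$, yet the path itself provides an infinite descendant chain in $\dagwiof{2k}$---a contradiction. Therefore $r^P$ is odd, and in particular $r^P \leq r_o$.

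Finally I would combine these. For each sufficiently large~$l$, pick a vertex $w_l$ of rank $m^*$ at level~$l$ and trace a predecessor path in $\dagw$ back to some fixed level~$L$; by the monotonicity in the first step, every vertex on that path has rank $\geq m^*$. The union~$U$ of these paths is a finite-width sub-DAG of $\dagw$ with vertices at every level $\geq L$, all of rank $\geq m^*$. Pigeonholing on the at most $|Q|$ vertices of $U$ at level~$L$, some fixed~$u^*$ serves as the origin of forward paths in~$U$ reaching arbitrarily high levels; the tree of finite forward paths from~$u^*$ in~$U$ is then infinite and finitely branching, so K\"onig's lemma yields an infinite forward path in~$U$. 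All its vertices have rank $\geq m^*$, and by the second step its eventual rank~$r^P$ is odd, giving $m^* \leq r^P \leq r_o$; with $m^* \geq r_o$ this forces $m^* = r_o$, which is odd. The main technical obstacle I expect is making this K\"onig-style extraction rigorous: $U$ is a DAG rather than a rooted tree, and the pigeonhole on a common level-$L$ origin is what lets one apply K\"onig to a genuinely rooted, finitely-branching structure instead of only obtaining arbitrarily long \emph{backward} chains.
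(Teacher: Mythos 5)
Your proof is correct, but for the decisive step it takes a different route than the paper. The paper's own argument is two sentences: it invokes Lemma~\ref{lemma:2inFKV06} to get, on every sufficiently late level, a vertex of each odd rank up to the maximal odd rank, and then disposes of larger even ranks by remarking that even ranks label finite vertices, so only finitely many levels can carry an even rank above the maximal odd one; tightness follows. You instead prove rank monotonicity along edges of $\dagw$, conclude that the per-level maximum $m_l$ is non-increasing and stabilizes at some $m^*\geq r_o$, and force $m^*=r_o$ (odd) by a pigeonhole on a fixed level plus K\"onig's lemma applied to the tree of finite forward paths from a fixed origin $u^*$, extracting an infinite path whose ranks are all at least $m^*$; the parity sub-claim (an eventually constant even rank $2k$ would make a vertex that is finite in $\dagwiof{2k}$ the start of an infinite path inside $\dagwiof{2k}$) then yields oddness. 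What the paper's version buys is brevity, at the cost of leaving the finiteness claim largely implicit (it really rests on the sub-DAG of vertices ranked above the maximal odd rank being finite, which uses the endangered-vertex argument behind Lemma~\ref{lemma:maxRank}); your version is longer but self-contained, makes the monotonicity and parity facts explicit, and correctly handles the DAG-to-rooted-tree issue before applying K\"onig's lemma. Two minor remarks: the observation that accepting vertices are never endangered (hence never odd-ranked) is true but not needed for tightness of $f^\alpha_{l'}$, since the condition on states outside the level holds by definition of $f^\alpha_{l'}$; and, like the paper, you implicitly assume $\aut$ has at least one (infinite) run on $\alpha$, so that levels are nonempty and a maximal odd rank exists---this degenerate case is dealt with separately in the proof of Theorem~\ref{the:schewe-corr}, not here.
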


\begin{proof}
This proof is a modification of \cite{FriedgutKV06} (Lemma 3) to BAs with mixed transition/state-based acceptance condition.
According to Lemma \ref{lemma:2inFKV06}, there is a~level $l_1 \geq 0$ such that for all levels $l_2 \geq l_1$ there is a~vertex $(q, l_2)$ such that $rank(q, l_2) = j$ for all odd ranks $j$ up to the maximal odd rank.
In order to be tight, a~level ranking must have an odd rank.
Since even ranks label finite vertices, only a~finite number of levels $l_2$ have even ranks greater than maximal odd rank. Therefore there exists a~level $l \geq l_2$ such that all $l' \geq l$ have odd rank.
\qed
\end{proof}

\medskip
\theComplCorr*

\begin{proof}


This proof is a modification of Schewe's proof from \cite{Schewe09} to BAs with mixed transition/state-based acceptance condition.

First, we show that $\mathcal{L}(\algschewe(\aut)) \subseteq \Sigma^\omega \setminus \langof{\aut}$.
Consider some $\alpha \in \langof{\algschewe(\aut)}$. Let $\rho = S_0 \ldots S_k (S_{k+1}, O_{k+1}, f_{k+1}, i_{k+1}) \ldots$
be a~run of $\algschewe(\aut)$ on $\alpha$ and $\rho' = q_0 q_1 \ldots$ be a~run of $\aut$ on $\alpha$.
From the construction of $\algschewe(\aut)$ it holds that $f_l \transconsist_S^a f_{l+1}$ for all $l > k$.
The sequence $f_{k+1}(q_{k+1}) \geq f_{k+2}(q_{k+2}) \geq \ldots$ is decreasing, and stabilizes eventually.
There is therefore some $m > k$ and $r < 2|Q|$ s.t. $f_p(q_p) = r$ for all $p \geq m$.
If $r$ is even, then there exists some $l > k$ such that for all $l' \geq l$ it holds that $i_{l'} = r$ and $O_{l'} \neq \emptyset$, which contradicts the assumption
that $\rho$ is an accepting run and must therefore contain some state with $O = \emptyset$ infinitely often.
If $r$ is odd, then from some position in the run there is no accepting state of $\aut$ (because accepting states have even rank), and also no accepting transition, because two states $q$ and $q'$ such that $q' \in \delta_F(q,a)$ cannot have the same odd rank. $\rho$ is therefore non-accepting.

Now we show that $\Sigma^\omega \setminus \langof{\aut} \subseteq
\mathcal{L}(\algschewe(\aut))$. Consider some $\alpha \in \Sigma^\omega
\setminus \langof{\aut}$. If there is no run of $\aut$ on $\alpha$, then there
is an accepting run $\rho = S_0 S_1 \ldots $ of $\algschewe(\aut)$ on $\alpha$,
which accepts $\alpha$ in the waiting part of the automaton. Otherwise, there is
a super-tight~run $\rho' = S_0 \ldots S_k (S_{k+1}, O_{k+1}, f_{k+1}, i_{k+1})
\ldots$ of $\algschewe(\aut)$ on $\alpha$ matching the levels of $\dagw$
(see~\cite{HavlenaL2021} for a definition of super-tight run). The existence of
such a run is established by Lemma~\ref{lemma:tight} and
Lemma~\ref{lemma:maxRank}. Since $\alpha \in \Sigma^\omega \setminus
\langof{\aut}$, there is no run with infinitely many accepting states or
transitions on $\alpha$. The ranking of each run therefore stabilizes on some
odd rank, which means that $\rho'$ contains some accepting state infinitely
often and is therefore accepting. If this were not true, then there would be
some $l > k$ such that for all $l' \geq l$ it holds $O_{l'} \neq \emptyset$ and
$i_{l'} = i_l$. This contradicts the fact that the ranking of each run
stabilizes on some odd rank. $\rho'$ is therefore accepting.
\qed


\end{proof}

\medskip
\theTrubRed*
\begin{proof}
	We prove that $\langof{\algschewe(\aut)} = \langof{\but}$.
  The inclusion $\langof{\algschewe(\aut)} \supseteq \langof{\but}$ is clear (we
  just omit certain states from the automaton), so let us focus on the inclusion
  $\langof{\algschewe(\aut)} \subseteq \langof{\but}$.
  Consider some $\word \in \langof{\algschewe(\aut)}$.
  From \cref{lemma:maxRank} and
  \cref{the:schewe-corr} we have that $\word \notin \langof{\aut}$ and
	there is a~ranking of~$\dagw$ having the rank at most~$2|Q|$.
  Since~$\mu$ is a~\trub wrt~$\word$, there is some~$\ell \in \omega$ such
  that~$\ell$ is tight and $\forall k \geq \ell\colon \mu(\levelw(k)) \geq
  f^\word_k$.
  We can use this fact to construct an accepting run~$\rho$ of~$\but$ on~$\word$
  as follows:
  \begin{equation*}
    \rho = \underset{\text{waiting part}}{\underbrace{\levelw(0)\ltr{\word_0}\levelw(1) \ltr{\word_1} \cdots
    \ltr{\word_{\ell-1}} \levelw(\ell)}} \ltr{\word_{\ell}}
    \underset{\text{tight part}}{\underbrace{(\levelw(\ell+1),
    \emptyset, f^\word_{\ell + 1}, 0) \ltr{\word_{\ell + 1}} \cdots}}
  \end{equation*}
  %
	Hence, $\word \in \langof{\but}$. \qed
\end{proof}

\section{Proofs of Section~\ref{sec:elevator}}

\medskip
\theLemElevCorr*

\begin{proof}
	Consider some elevator automaton $\aut$.
	Let $\dagw$ be a run DAG over some word $\word \notin\langof{\aut}$ and $C$ be a component of
	$\aut$. We say that $C$ is terminal in $\dagw$ if for each $\forall i \in
	\omega \forall q \in C: (q,i) \in \dagw \Rightarrow \reach_{\dagw}(q,i)
	\subseteq \{ (c,j) \mid c \in C, j \in \omega \}$.

	\begin{claim}\label{claim:iwa}
		Let $C$ be a terminal $\mathsf{IWA}$ component in $\dagw$. Then, all
		vertices labelled by $C$ will have the rank 0.
	\end{claim}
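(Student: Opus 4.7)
\begin{claimproof}{}
The plan is to show that every vertex $(q,i) \in \dagw$ with $q \in C$ is \emph{finite} in $\dagw$, i.e., the set of vertices reachable from $(q,i)$ in $\dagw$ is finite. By the definition of the ranking procedure (applied to $\dagwiof 0 = \dagw$), every finite vertex receives rank~$0$, which gives the claim.

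I will argue by contradiction: suppose some vertex $v_0 = (q_0, i_0)$ with $q_0 \in C$ is infinite in $\dagw$. Since $C$ is terminal in $\dagw$, every vertex reachable from $v_0$ is of the form $(q',j)$ with $q' \in C$. The subgraph of $\dagw$ reachable from $v_0$ is therefore a~finitely branching infinite DAG contained in $C \times \omega$, so by K\"{o}nig's lemma it contains an~infinite path $v_0, v_1, v_2, \ldots$ with $v_k = (q_k, i_0 + k)$, $q_k \in C$, and $q_{k+1} \in \trans(q_k, \wordof{i_0 + k})$ (with the edge being accepting whenever $q_k \ltr{\wordof{i_0 + k}} q_{k+1} \in \acctrans$).

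The key step is to extract an accepting behaviour from this infinite path. Since $C$ is finite and the path is infinite, by the pigeonhole principle some state $q^* \in C$ occurs at infinitely many positions $k_1 < k_2 < \cdots$ on the path. Each segment $q_{k_j}, q_{k_j + 1}, \ldots, q_{k_{j+1}}$ is a~cycle in the automaton restricted to~$C$, and since $C$ is IWA, each such cycle must contain an~accepting state or accepting transition. Hence the infinite path visits accepting states or accepting transitions infinitely often, i.e., it is accepting in~$\aut$.

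Finally, since $v_0 \in \dagw$, there is a~run prefix $q'_0 q'_1 \cdots q'_{i_0 - 1} q_0$ of $\aut$ from some initial state on $\wordof 0 \cdots \wordof{i_0 - 1}$. Concatenating this prefix with the accepting infinite suffix $q_0 q_1 q_2 \cdots$ yields an~accepting run of $\aut$ on~$\word$ starting from an initial state, contradicting $\word \notin \langof{\aut}$. The main subtlety will be ensuring that the pigeonhole argument produces genuine cycles in $\restrof \trans C$ to which the IWA property applies, which is exactly where terminality of~$C$ in $\dagw$ is used (it guarantees the path never escapes~$C$).
\end{claimproof}
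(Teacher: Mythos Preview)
Your proposal is correct and follows essentially the same approach as the paper: show that every vertex labelled by~$C$ is finite in~$\dagw$ (and hence receives rank~$0$), by arguing that an infinite such vertex would yield an accepting run on~$\word$, contradicting $\word \notin \langof{\aut}$. The paper's proof compresses this into a single sentence (``otherwise the word is accepted''), whereas you spell out the K\"{o}nig/pigeonhole argument needed to extract the accepting run from the IWA property; this is a faithful and correct elaboration of the same idea.
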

	\begin{itemize}
		\item Since $\word \notin\langof{\aut}$, all vertices labelled by a state
		from $C$ are finite in $\dagw$ (otherwise the word is accepted).
	\end{itemize}

	\begin{claim}\label{claim:d}
		Let $C$ be a terminal $\mathsf{D}$ component in $\dagw$. Then, all vertices
		labelled by $C$ will have the rank at most 2.
	\end{claim}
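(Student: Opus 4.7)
The plan is to exploit the combination of $C$ being terminal in $\dagw$ and deterministic to show that the sub-DAG of descendants of any vertex in $C$ is a chain, and then to use $\alpha \notin \langof{\aut}$ to argue that the chain eventually has no accepting vertices or edges, forcing it to be removed by ranks $0$ and $1$.

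Fix an arbitrary vertex $v = (q, i) \in \dagw$ with $q \in C$. The first step is to observe that $\reach_{\dagw}(v) \subseteq C \times \omega$ by terminality of $C$, and since $(C, \restrof{\trans}{C}, \emptyset, \emptyset)$ is deterministic, every vertex of $\reach_{\dagw}(v)$ has at most one successor in $\dagw$. Consequently, $\reach_{\dagw}(v)$ is a (finite or infinite) chain. If this chain is finite, then $v$ is finite in $\dagw$ and thus $\rankwof{v} = 0$.

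Otherwise, let $\pi = v_0 v_1 v_2 \dots$ with $v_0 = v$ denote the unique infinite path from $v$ in $\dagw$. Writing $v_k = (q_k, i+k)$, the sequence $q_0 q_1 q_2 \dots$ is a run of $\aut$ on $\alpha_i \alpha_{i+1} \dots$ starting from $q$. Because $(q, i) \in \dagw$ is reachable from $I$ over $\alpha_0 \dots \alpha_{i-1}$, any accepting run extending $q_0 q_1 \dots$ would yield an accepting run of $\aut$ over $\alpha$, contradicting $\alpha \notin \langof{\aut}$. Hence only finitely many $v_k$ are accepting and only finitely many edges $v_k \to v_{k+1}$ are accepting, so there is some $N$ such that for every $k \geq N$, neither $v_k$ nor the edge $v_k \to v_{k+1}$ is accepting.

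From this point the rank computation does the rest: each $v_k$ with $k \geq N$ is infinite (so it survives to $\dagw^1$), and from $v_k$ the only reachable vertices are $v_{k}, v_{k+1}, \dots$, none of which is accepting and none of whose outgoing edges is accepting; hence $v_k$ is endangered in $\dagw^1$ and receives rank $1$. After these vertices are removed, the descendants of $v$ remaining in $\dagw^2$ are contained in $\{v_0, v_1, \dots, v_{N-1}\}$, which is finite, so $v$ is finite in $\dagw^2$ and receives rank at most $2$. The only mildly delicate step is justifying that $q_0 q_1 \dots$ is indeed a run of $\aut$ that must be non-accepting; everything else is a direct unwinding of the rank definition on the chain structure forced by terminality and determinism.
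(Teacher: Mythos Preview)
Your proof is correct and follows essentially the same approach as the paper's (very terse) argument: both exploit that terminality together with determinism of $C$ forces the sub-DAG below any $C$-labelled vertex to be a single chain, and that $\alpha \notin \langof{\aut}$ forces this chain to be eventually free of accepting vertices and edges, so that its tail is removed at rank~$1$ and the remaining finite prefix at rank~$2$. Your write-up simply makes explicit the steps that the paper compresses into the one-line contradiction ``either $\alpha \in \langof{\aut}$ or $C$ is not terminal deterministic''.
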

	\begin{itemize}
		\item We prove that $\dagw^2$ contains only finite vertices labelled by $C$.
		\item If it is not true, either $\word \in\langof{\aut}$ or $C$ is not
		terminal deterministic (both are contradictions).
	\end{itemize}


	Now we prove the main lemma. First, observe that after application of any rule
	we have that $\mathsf{D, IWA}$ components have an even rank and $\mathsf{N}$
	components have an odd rank. The lemma we prove by induction on given rules.
	In particular, we prove that if a state $q$ was assigned by rank $k$,
	$\dagw^{k+1}$ does not contain node lebelled by $q$.
	\begin{itemize}
		\item Base case: If a terminal component $C$ is $\mathsf{IWA}$, from
		Claim~\ref{claim:iwa} we obtain all states from $C$ will have the rank 0. If
		a terminal component is $\mathsf{D}$, then from Claim~\ref{claim:d} we have
		that all states from $C$ will have the rank bounded by 2.

		\item Inductive case: Assume that for all states $q$ from
      already processed components, if $q$ was assigned by rank $m$,
  	$\dagw^{m+1}$ does not contain node lebelled by $q$.
      Note that inside each rule we can investigate cases
		$\mathsf{D, N, IWA}$ separately since the adjacent components do not affect
		each other.
		\begin{enumerate}[(a)]
			\item[\cref{fig:elev-iwa}] Observe that after $\ell = \max\{ \ell_D, \ell_N+1, \ell_W \}$
			steps of the ranking procedure, in the worst case, all vertices labelled
			by $C$ in $\dagw^\ell$ are finite (otherwise it is a contradiction with
			induction hypothesis). Therefore, in $\dagw^{\ell+1}$ there are no vertices
			labelled by $C$. The ranks of vertices labelled by $C$ is hence $\max\{
			\ell_D, \ell_N+1, \ell_W \}$.

			\item[\cref{fig:elev-det}] We prove that in $\dagw^{\ell}$ all vertices labelled by $C$ are
			finite ($\ell$ is from the rule). From the induction hypothesis, after $\ell - 1$ steps (in the
			worst case) all vertices labelled by adjacent $\mathsf{D, IWA}$ components
			are finite in $\dagw^{\ell}$ (provided that the transitions are
			deterministic). Vertices labelled by adjacent $\mathsf{N}$ components are
			not present in $\dagw^{\ell}$. Therefore, if there is some vertex $v$
			labelled by $C$ in $\dagw^{\ell}$ which is not finite, the only
			possibility is that for each $v' \in \reach_{\dagw^{\ell}}(v)$ we reach in
			$\dagw^{\ell}$ from $v'$ some vertex labelled by a state from the
			$\mathsf{D, IWA}$ components. However, it is a contradiction with the
			transition determinism.

			\item[\cref{fig:elev-nondet}] We prove that in $\dagw^{\ell}$ all vertices labelled by $C$ are
			endangered. This follows from the fact that after $\ell - 1$ steps no
			vertex labelled by the adjacent $\mathsf{D, IWA}$ components is present in
			$\dagw^{\ell}$ (induction hypothesis). Therefore, all vertices labelled by
			$C$ in $\dagw^{\ell}$ are endangered.
		\end{enumerate}
	\end{itemize}
	\qed
\end{proof}

\medskip
\theLemElevRank*

\begin{proof}
  In the worst case an elevator $\aut$ consists of a chain of deterministic
  components connected with nondeterministic transitions. Therefore, using the
  rule from \cref{fig:elev-det}, the maximum rank is increased by 2 for every
  component, which gives the upper bound.
\qed
\end{proof}

\theLemElevBound*

\begin{proof}
	Let $\ell$ be the number of possible odd ranks, i.e., $\ell = \lceil
	\frac{r}{2} \rceil$ where $r$ is the rank upper bound.
	Consider a $\sofi$ for a fixed $i$. We use the same reasoning as
	in~\cite{Schewe09} to compute the number of macrostates $\sofi$ for a fixed
	$i$. A macrostate $\sofi$ we can encode by a function $g: Q \to \{-2, -1, 0,
	\dots, r\}$ s.t. $g(q) = -2$ if $q \notin S$, $g(q) = -1$ if $q \in O$, $g(q)
	= f(q)$ otherwise. Since we consider only tight rankings, $g$ is either onto
	$\{ -2, -1, 1, \dots, j \}$, $\{ -1, 1, \dots, j \}$, $\{ -2, 1, \dots, j \}$,
	$\{ 1, \dots, j \}$ where $j$ is a maximum rank of $\sofi$. For sufficiently big $n$ the number of macrostates for a fixed $i$ is hence
	bounded by
	$$
		4\cdot\stirl{n}{\lceil \frac{j}{2} \rceil + 2}.
	$$
	Since, the maximum ranking can range from $1$ to $r$ we have the bound on the complement size as
	$$
		|\aut| \leq 2^n + 4\ell\cdot\sum_{j=1}^\ell\stirl{n}{j+2}.
	$$
	Moreover, for sufficiently large $n$ (and fixed $\ell$) we have
	$$
		\sum_{j=1}^\ell\stirl{n}{j+2} \leq \ell\cdot\stirl{n}{\ell+2} \leq \ell\cdot\frac{(\ell + 2)^n}{(\ell + 2)!}.
	$$
	The previous follows from the fact that
	$$
		\stirl{n}{\ell} \leq \stirl{n}{\ell + 1}\quad\wedge\quad \stirl{n}{\ell} \sim \frac{\ell^n}{\ell!}
	$$
	for sufficiently large $n$ and fixed $\ell$.
	Therefore,
	\begin{equation}
		\nonumber
		|\aut| \leq 2^n + 4\ell\cdot\sum_{i=1}^\ell\stirl{n}{i+2} \leq 2^n + 4\ell^2\cdot\frac{(\ell + 2)^n}{(\ell + 2)!} \leq 2^n + \frac{(r+m)^n}{(r+m)!},
	\end{equation}
	where $m = \max\{0, 3 - \lceil \frac{r}{2} \rceil\}$, for sufficiently big
	$n$, since $\ell + 2 < r + m$ for all $r \geq 2$.\qed
\end{proof}

\medskip
\theLemDeElLang*
\begin{proof}
	First assume that $\M$ is a set of MSCCs of $\aut$. We prove the first
	inclusion, the second one is done analogically. Consider some $\alpha \in
	\langof{\aut}$. Then, there is a run $\rho = q_0q_1\cdots$ on $\alpha$.
	Moreover, there is some $\ell\in\omega$ and $C \in\M$ s.t. $\rho_k \in C$ for
	all $k \geq \ell$. We can hence construct a run $\rho' = q_0\cdots q_{\ell -
	1}\overline{q_{\ell}}\overline{q_{\ell+1}}\cdots$ over $\alpha$ in
	$\algdeelev(\aut)$. Since $\rho$ is accepting, so $\rho'$ is.
	\qed
\end{proof}

\medskip
\theLemElevCompl*
\begin{proof}
	First, assume that $\aut$ is transformed into $\algdeelev(\aut)$ having $2n$
	states with the rank bounded by 3. From Lemma~\ref{lem:rank-bound} we have
	$$
	 	|\aut| \leq 2^{2n} + \frac{(r+1)^{2n}}{(r+1)!}
	$$
	for $r \geq 3$ and sufficiently big $n$. Therefore,
	$$
		|\aut| \leq 2^{2n} + \frac{4^{2n}}{5!} \in \bigO(16^n).
	$$
	\qed
\end{proof}

\medskip
\theLemGenElevCorr*
\begin{proof}
	Consider some BA $\aut$. Let $\dagw$ be a run DAG over some word $\word
	\notin\langof{\aut}$. We use, in this proof, claims and notation introduced in
	the proof of Lemma~\ref{lem:elevator-corr}.

	\begin{claim}\label{claim:g}
		Let $C$ be a terminal $\mathsf{G}$ component in $\dagw^{2k + 1}$. Then, all
		vertices labelled by $C$ will have the rank at most $2k + 2|C\setminus \accstates|$.
	\end{claim}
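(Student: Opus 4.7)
The plan is to extend the proof of Lemma~\ref{lem:elevator-corr} to cover the new type $\mathsf{G}$. The structure mirrors the original proof: first establish a local rank bound for terminal general components, then induct on $\condensof{\aut}$, adding a new base case (T3) and a new inductive rule (I6), and checking that the modified rules I2--I5 still go through when a $\mathsf{G}$-child is present.

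First I would prove Claim~\ref{claim:g}: if $C$ is a terminal $\mathsf{G}$ component and all its vertices are still present in $\dagw^{2k+1}$, then each such vertex has rank at most $2k + 2|C \setminus \accstates|$. Because $C$ is terminal, every vertex reachable from a $C$-labelled vertex is itself $C$-labelled, so the ranking procedure on $\dagw^{2k+1}$ restricted to $C$ behaves like an independent run DAG of the sub-BA induced by $C$. The argument of Lemma~\ref{lemma:maxRank} then applies almost verbatim, but with a refined count: each pair of steps (removing finite vertices with even rank, then endangered vertices with odd rank) strictly shrinks the live projection onto $C$, and only \emph{non-accepting} states can witness odd ranks because accepting states are always assigned even ranks. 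Hence at most $|C \setminus \accstates|$ pairs suffice, yielding the bound $2k + 2|C \setminus \accstates|$.

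Next I would carry out the induction on $\condensof{\aut}$ following the case analysis in the proof of Lemma~\ref{lem:elevator-corr}. The base case T3 follows from Claim~\ref{claim:g} with $k=0$. For the inductive cases I2--I5 I would reuse the rules from Figures~\ref{fig:gen-elev-iwa}--\ref{fig:gen-elev-nondet}; the only change is that an additional $\mathsf{G}$-child contributes $\ell_G$ to the max expression. By induction hypothesis applied to that child, after $\ell_G + 1$ ranking steps no $\mathsf{G}$-child vertex remains in the DAG, so from the viewpoint of $C$ the $\mathsf{G}$-child plays the same role as the previously handled $\mathsf{D}$/$\mathsf{IWA}$ children, and the finite/endangered reasoning of the original proof carries over unchanged. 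Case I6 is obtained by combining the two techniques: once all children have been processed (by step $\ell' = \max\{\ell_D, \ell_N + 1, \ell_W, \ell_G\}$ thanks to the induction hypothesis), every $C$-labelled vertex in $\dagw^{\ell'}$ can reach only $C$-labelled vertices, so $C$ is effectively terminal in $\dagw^{\ell'}$ and Claim~\ref{claim:g} applies to give the bound $\ell' + 2|C \setminus \accstates| = \ell$.

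The main obstacle I anticipate is the parity bookkeeping: Claim~\ref{claim:g} is phrased in terms of $\dagw^{2k+1}$, whereas the values $\ell'$ produced by the inductive step can a~priori be of either parity. I would resolve this by observing that if $\ell'$ is even, one extra ranking iteration does not introduce any new $C$-labelled vertices (the children are already gone), so we may replace $\ell'$ by $\ell' + 1$ without affecting the conclusion, and Claim~\ref{claim:g} then applies. A~small amount of care will also be needed to verify that the types assigned by rules I2--I5 remain consistent with the parity conventions of the original elevator rules (even ranks for $\mathsf{D}$/$\mathsf{IWA}$, odd for $\mathsf{N}$), so that the children of any $C$ are in a form for which the induction hypothesis literally gives what is needed.
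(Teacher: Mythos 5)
Your argument for the claim is essentially the paper's: since $C$ is terminal its vertices' classification in the residual DAGs depends only on the $C$-part, and the width-decrement argument of \cref{lemma:maxRank} is refined by noting that endangered vertices (those receiving odd ranks) are never accepting, so at most $|C\setminus\accstates|$ further odd/even stage pairs after $2k+1$ are needed, giving the bound $2k+2|C\setminus\accstates|$. This matches the paper's proof at a comparable level of detail, so I have nothing substantive to add.
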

	\begin{itemize}
		\item Since $2k + 1 > 0$, $\dagw^{2k + 1}$ does not contain any finite vertices.

		\item Since $C$ is a terminal component, there is some $i \in \omega$ s.t.
		$\forall j > i: |\level_{\dagw^{2k + 1}}(j)\cap C| < |\level_{\dagw^{2k + 2}}(j)\cap C|$ (if we remove an
		endangered vertex, we decrease the width of the run DAG from some level at
		least by 1).

		\item Moreover, since endangered vertices do not contain accepting states,
		the previous observation can be refined to $|(\level_{\dagw^{2k + 1}}(j)\cap
		C)\setminus \accstates| < |(\level_{\dagw^{2k + 2}}(j)\cap C)\setminus \accstates|$.

		\item If we apply the reasoning multiple times, we get that in
		$\dagw^{2k + 2|C\setminus \accstates|}$ remains only finite vertices labelled by a state
		from $C$, therefore the rank is at most $2k + 2|C\setminus \accstates|$.
	\end{itemize}
	%


	Now we prove the main lemma. First, observe that after application of any rule
	we have that $\mathsf{D, IWA,G}$ components have an even rank and $\mathsf{N}$
	components have an odd rank. The lemma we prove by induction on certain
	rules:.In particular, we prove that if a state $q$ was assigned by rank $k$,
	$\dagw^{k+1}$ does not contain node lebelled by $q$.
	\begin{itemize}
		\item Base case: If a terminal component $C$ is $\mathsf{IWA}$, from
		Claim~\ref{claim:iwa} we obtain all states from $C$ will have the rank 0. If
		a terminal component is $\mathsf{D}$, then from Claim~\ref{claim:d} we have
		that all states from $C$ will have the rank bounded by 2. If a terminal
		component is $\mathsf{G}$, from Claim~\ref{claim:g} we have that all states
		from $C$ will have the rank bounded by $2|C\setminus \accstates|$.

		\item Inductive case:
    Assume that for all states $q$ from already processed components, if $q$ was
    assigned by rank $m$, $\dagw^{m+1}$ does not contain node lebelled by $q$.
		\begin{enumerate}[(a)]
		\item[\cref{fig:gen-elev-iwa}] Observe that after $\ell = \max\{ \ell_D, \ell_N+1, \ell_W, \ell_G \} - 1$
			steps of the ranking procedure, in the worst case, all vertices labelled
			by $C$ in $\dagw^\ell$ are finite (otherwise it is a contradiction with the
			induction hypothesis). Therefore, in $\dagw^{\ell+1}$ there are no vertices
			labelled by $C$. The ranks of vertices labelled by $C$ is hence $\max\{
			\ell_D, \ell_N+1, \ell_W, \ell_G \}$.

			\item[\cref{fig:gen-elev-det}] We prove that in $\dagw^{\ell}$ all vertices labelled by $C$ are
			finite ($\ell$ is from the rule). From the induction hypothesis, after $\ell - 1$ steps (in the
			worst case) all vertices labelled by adjacent $\mathsf{D, IWA}$ components
			are finite in $\dagw^{\ell}$ (provided that the transitions are
			deterministic). Vertices labelled by adjacent $\mathsf{N}$ components are
			not present in $\dagw^{\ell}$. Vertices labelled by adjacent $\mathsf{G}$
			components are finite in $\dagw^{\ell}$. Therefore, if there is some
			vertex $v$ labelled by $C$ in $\dagw^{\ell}$ which is not finite, the only
			possibility is that for each $v' \in \reach_{\dagw^{\ell}}(v)$ we reach in
			$\dagw^{\ell}$ from $v'$ some vertex labelled by a state from the
			$\mathsf{D, IWA}$ components. However, it is a contradiction with the
			transition determinism.

			\item[\cref{fig:gen-elev-nondet}] We prove that in $\dagw^{\ell}$ all vertices labelled by $C$ are
			endangered. This follows from the fact that after $\ell - 1$ steps no
			vertex labelled by the adjacent $\mathsf{D, IWA, G}$ components is present in
			$\dagw^{\ell}$ (induction hypothesis). Therefore, all vertices labelled by
			$C$ in $\dagw^{\ell}$ are endangered.

			\item[\cref{fig:gen-elev-gen}] From the induction hypothesis we have that in $\dagw^{\ell}$ where $\ell =
			\max\{ \ell_D, \ell_N+1, \ell_W, \ell_G \}$ all vertices labelled by
			adjacent $\mathsf{D, IW, G}$ components are finite. Vertices labelled by
			adjacent $\mathsf{N}$ components are not present in $\dagw^{\ell}$.
			Therefore, $C$ is terminal in $\dagw^{\ell+1}$. From Claim~\ref{claim:g} we
			have that the rank of $C$ is bounded by $\ell + 2|C\setminus \accstates|$.
		\end{enumerate}
	\end{itemize}
	\qed
\end{proof}

\section{Proofs of Section~\ref{sec:rank-propagation}}

\theOutTrub*

\begin{proof}
	Let $\word \notin \langof \aut$ and $\dagw$ be the run DAG of~$\aut$
  over~$\word$.
  Further, let us use $\mu' = \variant{\mu}{\{S \mapsto \upd_{\mathit{out}}(\mu,
  S, R_1, \ldots, R_m)\}}$.
	\begin{enumerate}
    \item  There are finitely many~$i \in \omega$ such that $\levelwof i = S$.
      Let~$k$ be the last level of~$\dagw$ where~$S$ occurs (or~$0$ if~$S$ does
      not occur on any level of~$\dagw$). Then we can set the~$\ell$ in the
      definition of a~\trub in~\eqref{eq:trub} to be the least $\ell > k$ such
      that~$\ell$ is a~tight level (by Lemma~\ref{lemma:tight} there exists such
      an~$\ell$). Then the condition holds trivially.
    \item  There are infinitely many~$i \in \omega$ such that $\levelwof i = S$.
      Then, since~$\mu$ is a~\trub, let~$\ell$ be the~$\ell$ in~\eqref{eq:trub}
      for which~$\mu$ satisfies~\eqref{eq:trub}.
      We need to show that for every~$k > \ell$ such that $\levelwof k = S$, it
      holds that $\mu'(S) \geq f^\word_k$.
      Let~$\cP \subseteq \{R_1, \ldots, R_m\}$ be the set of predecessors
      of all occurrences of~$S$ on~$\dagw$ below~$\ell$, i.e., for all $k >
      \ell$ such that $\levelwof k = S$, we have $\levelwof{k-1} \in \cP$.
      Since the ranks of levels in a~run DAG are lower for levels that are
      higher, it is sufficient to consider only the first such a~$k$.
      Let~$R$ be the predecessor of~$S$ at~$k$, i.e., $R = \levelwof {k-1}$.
      Since we do not know which particular $R_j \in \cP$ it is, we need to
      consider all~$R_j \in \cP$.
			Since $k-1$ is already a tight position, we have that $\mu'' =
			\variant{\mu}{\{S \mapsto \max\{ \mu(R_1), \ldots, \mu(R_m)  \} \}}$ is
			a~\trub for the same $\ell'' = \ell$ in~\eqref{eq:trub} (recall that all
			tight positions have the same odd rank). Further, $\mu$ is also a~\trub,
			therefore, $\mu' = \variant{\mu}{S\mapsto \min\{ \mu''(S), \mu(S) \}}$ for
			$\ell$.\qed
  \end{enumerate}
\end{proof}

\medskip
\theOutTerm*

\begin{proof}
	Let $\mu$ be a~\trub and $\mu' = \variant \mu \{S \mapsto
	\upd_{\mathit{out}}(\mu, S, R_1, \ldots, R_m)\}$. From
	Lemma~\ref{lem:out-trub} we have that $\mu'$ is a \trub as well, which means
	that starting from $\mu_0$ using $\upd_{\mathit{out}}$ we get \trub{s} only.
	Moreover, $\mu(P) \geq \mu'(P)$ and $\mu'(P) \geq 0$ for each $P \in 2^Q$.
	The fixpoint evaluation hence eventually stabilizes.
\qed
\end{proof}

\medskip
\theInnerTrub*

\begin{proof}
  Let $\word \notin \langof \aut$ and $\dagw$ be the run DAG of~$\aut$
  over~$\word$.
  Further, let us use $\mu' = \variant{\mu}{\{S \mapsto \upd_{\mathit{in}}(\mu,
  S, R_1, \ldots, R_m)\}}$.
	First, we prove that the following claim:

	\smallskip
	\begin{claim}\label{claim:trub-meet}
		Let $\mu_1, \mu_2$ be two \trub{}s wrt $\alpha$. Then $\mu'$,
    defined as $\mu'(S) := \mu_1(S) \sqcap \mu_2(S)$ is a \trub wrt $\alpha$.
	\end{claim}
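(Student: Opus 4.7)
The plan is to take $\ell := \max\{\ell_1, \ell_2\}$, where $\ell_1$ and $\ell_2$ are the witnesses from~(\ref{eq:trub}) for $\mu_1$ and $\mu_2$, respectively, and to show that $\ell$ witnesses that $\mu'$ is a~\trub wrt $\alpha$. Two things need to be verified at this $\ell$: that $\levelw(\ell)$ is tight, and that $\mu'(\levelw(k)) \geq f^\alpha_k$ for every $k \geq \ell$.

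For tightness, I~would observe that being a~tight level is monotone in the level index: by definition, if $\levelw(\ell_1)$ is tight then for every $k \geq \ell_1$ the ranking $f^\alpha_k$ is tight and $\rankof{f^\alpha_k} = \rankof{f^\alpha_{\ell_1}}$. Specialising at $k = \ell$ and noting $\ell \geq \ell_1$, any $k' \geq \ell$ satisfies $f^\alpha_{k'}$ tight and $\rankof{f^\alpha_{k'}} = \rankof{f^\alpha_{\ell_1}} = \rankof{f^\alpha_\ell}$, so $\levelw(\ell)$ is tight. For the rank bound, for every $k \geq \ell$ we have $k \geq \ell_1$ and $k \geq \ell_2$, so both $\mu_1(\levelw(k)) \geq f^\alpha_k$ and $\mu_2(\levelw(k)) \geq f^\alpha_k$ hold in the pointwise order on rankings. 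Since $\sqcap$ is the pointwise minimum, $\mu'(\levelw(k)) = \mu_1(\levelw(k)) \sqcap \mu_2(\levelw(k)) \geq f^\alpha_k$ follows from the trivial fact that $\min\{a,b\} \geq c$ whenever $a \geq c$ and $b \geq c$, applied state-by-state.

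There is no real obstacle here: the claim is a~lattice-theoretic observation asserting that the set of TRUBs wrt $\alpha$ is closed under pointwise meet, once the two witnesses are unified by taking their maximum. The only subtlety worth naming is the upward monotonicity of tight levels, which is implicit in the definition but needs to be invoked explicitly. The role of this claim in the surrounding proof of \cref{lem:inner-trub} is precisely to let us combine the current estimate $\mu(S)$ with the predecessor-derived estimate $\bigsqcup_i g_i^a$ via~$\sqcap$ without losing the TRUB property, from which the update $\upd_{\mathit{in}}$ (including the final $\decf$ adjustment, which will need a~separate but analogous argument) can be shown to preserve TRUBs.
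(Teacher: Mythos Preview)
Your proposal is correct and takes essentially the same approach as the paper: the paper's proof is a one-line remark that the claim ``follows from the definition (with choosing $\ell' = \max\{\ell_1,\ell_2\}$)'', and you have simply spelled out the two verifications (upward monotonicity of tightness and the pointwise-minimum bound) that make this choice work.
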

	\begin{claimproof}
		Follows from the definition (with choosing $\ell' = \max\{ \ell_1,
			\ell_2 \}$) where $\ell_1$ is from the definition of a \trub for $\mu_1$, $\ell_2$ is for $\mu_2$.
	\end{claimproof}

	\smallskip
	\noindent
  We need to consider the following two cases:
  \begin{enumerate}
    \item  There are finitely many~$i \in \omega$ such that $\levelwof i = S$.
      Let~$k$ be the last level of~$\dagw$ where~$S$ occurs (or~$0$ if~$S$ does
      not occur on any level of~$\dagw$). Then we can set the~$\ell$ in the
      definition of a~\trub in~\eqref{eq:trub} to be the least $\ell > k$ such
      that~$\ell$ is a~tight level (by Lemma~\ref{lemma:tight} there exists such
      an~$\ell$). Then the condition holds trivially.
    \item  There are infinitely many~$i \in \omega$ such that $\levelwof i = S$.
      Then, since~$\mu$ is a~\trub, let~$\ell$ be the~$\ell$ in~\eqref{eq:trub}
      for which~$\mu$ satisfies~\eqref{eq:trub}.
      We need to show that for every~$k > \ell$ such that $\levelwof k = S$, it
      holds that $\mu'(S) \geq f^\word_k$.
      Let~$\cP \subseteq \{R_1, \ldots, R_m\}$ be the set of predecessors
      of all occurrences of~$S$ on~$\dagw$ below~$\ell$, i.e., for all $k >
      \ell$ such that $\levelwof k = S$, we have $\levelwof{k-1} \in \cP$.
      Since the ranks of levels in a~run DAG are lower for levels that are
      higher, it is sufficient to consider only the first such a~$k$.
      Let~$R$ be the predecessor of~$S$ at~$k$, i.e., $R = \levelwof {k-1}$.
      Since we do not know which particular $R_j \in \cP$ it is, we need to
      consider all~$R_j \in \cP$.
      Let $M = \{\maxrsaof{R_j}{\mu(R_j)} \mid R_j \in \cP, a \in \Sigma\}$.
      Then, since~$\mu$ is a~\trub, $\bigsqcup M$ will also be a~\trub
      \ol{elaborate why?}.
      Moreover, from Claim~\ref{claim:trub-meet},
      $\theta = \mu(S) \sqcap \bigsqcup M$ will also be a~\trub, and so $\theta
      \geq f^\word_k$.
      Then, if the rank of~$\theta$ is even, we can decrease it to the nearest
      odd rank, since tight rankings are, by definition, of an odd rank.
      \qed
  \end{enumerate}
\end{proof}

\medskip
\theInTerm*

\begin{proof}
	Let $\mu$ be a~\trub and $\mu' = \variant \mu \{S \mapsto
	\upd_{\mathit{in}}(\mu, S, R_1, \ldots, R_m)\}$. From
	Lemma~\ref{lem:inner-trub} we have that $\mu'$ is a \trub as well, which means
	that starting from $\mu_0$ using $\upd_{\mathit{in}}$ we get \trub{s} only.
	Moreover, $\mu(P) \geq \mu'(P)$ and $\mu'(P) \geq \{ q \mapsto 0 \mid q \in Q \}$ for each $P \in 2^Q$.
	The fixpoint evaluation hence eventually stabilizes.
\qed
\end{proof}

\fi

\end{document}